\theoremstyle{plain}
\newtheorem{theorem}{Theorem}
\newtheorem{corollary}{Corollary}
\newtheorem{lemma}{Lemma}
\newtheorem{proposition}{Proposition}
\newtheorem{example}{Example}
\theoremstyle{definition}
\newtheorem{definition}{Definition}
\newtheorem{asmp}{Assumption}
\theoremstyle{remark}
\newtheorem{remark}{Remark}
\DeclareSymbolFontAlphabet{\mathbb}{AMSb}
\DeclareSymbolFontAlphabet{\mathbbl}{bbold}
\title{\textit{Risk contributions of lambda quantiles}
\footnote{\textit{History:} Earlier versions have been presented at Birkbeck, University of London PhD Jamboree, the \textit{Mathematical and Statistical Methods for Actuarial Sciences and Finance  2020 Conference (eMAF)} (virtual), and the \textit{10th General Advanced Mathematical Methods for Finance (AMaMeF) 2021 Conference} (virtual).}
}
\author{A. INCE$^{\ast, \nshortmid}$\thanks{$^\nshortmid$Corresponding author.
Email: aince02@student.bbk.ac.uk}, I. PERI$^{\ast, \dag}$\thanks{$\dag$Email: i.peri@bbk.ac.uk} and S. PESENTI$^{\ddag}$\thanks{$\ddag$Email: silvana.pesenti@utoronto.ca}\\
\affil{$^\ast$Department of Economics, Mathematics and Statistics, Birkbeck, University of London, Malet Street, Bloomsbury, London WC1E 7HX, UK\\
$^\ddag$Department of Statistical Sciences, University of Toronto, 700 University Avenue, Toronto, ON M5G 1Z5, Canada} \received{March 30, 2022} }
\begin{document}

\maketitle

\begin{abstract}
Risk contributions of portfolios form an indispensable part of risk adjusted performance measurement. The risk contribution of a portfolio, e.g., in the Euler or Aumann-Shapley framework, is given by the partial derivatives of a risk measure applied to the portfolio profit and loss in direction of the asset units. For risk measures that are not positively homogeneous of degree 1, however, known capital allocation principles do not apply. We study the class of lambda quantile risk measures that includes the well-known Value-at-Risk as a special case but for which no known allocation rule is applicable. We prove differentiability and derive explicit formulae of the derivatives of lambda quantiles with respect to their portfolio composition, that is their risk contribution. For this purpose, we define lambda quantiles on the space of portfolio compositions and consider generic (also non-linear) portfolio operators. 

We further derive the Euler decomposition of lambda quantiles for generic portfolios and show that lambda quantiles are homogeneous in the space of portfolio compositions, with a homogeneity degree that depends on the portfolio composition and the lambda function. This result is in stark contrast to the positive homogeneity properties of risk measures defined on the space of random variables which admit a constant homogeneity degree. We introduce a generalised version of Euler contributions and Euler allocation rule, which are compatible with risk measures of any homogeneity degree and non-linear but homogeneous portfolios.
These concepts are illustrated by a non-linear portfolio  using financial market data.
\end{abstract}

\begin{keywords}
Lambda Quantiles; Capital Allocation; Risk Contribution; Lambda Value-at-Risk; Euler Allocation
\end{keywords}

\section{Introduction}

Calculating firm-wide or portfolio-level risk is at the heart of modern financial risk management. Financial institutions  use risk measures to determine economic capital, that is a capital buffer to absorb unexpected losses during adverse market scenarios and to preserve solvency. However, understanding how firm-wide or portfolio-level risk are formed and affected by their respective constituents is of equal importance to risk management processes. Determining contributions of assets or sub-portfolios to the overall portfolio risk, or contributions of business lines to the firm-wide risk enables market practitioners to make informed decisions on capital allocations to protect each business line's profitability and secure its solvency.

In this paper we focus on lambda quantile risk measures, a class of law-invariant risk measures that generalises the well-known risk measure Value-at-Risk (VaR). Lambda quantiles were first proposed by \cite{fmp14} to overcome two of the most criticised aspects of VaR. First, VaR's inability to distinguish between different tail behaviours and second its failure to capture extreme losses. Indeed, lambda quantiles have the ability to (a) penalise heavy-tailed (portfolio) distributions and (b) identify extreme losses dynamically, e.g., by recalibrating the lambda function of lambda quantiles, see \cite{hmp18}. The key difference between VaR and a lambda quantile is, that while $VaR_{\lambda}$ is the negative of a quantile function at fixed level $\lambda$, a lambda quantile is the negative of a generalised quantile at a level determined by a function -- the so-called lambda function. 

Throughout, we consider generic, not necessarily linear, portfolio operators, that are collections of linear and/or non-linear assets where both long and short positions are permitted. Here, we consider  portfolios consisting of a random vector of asset profits and losses and a portfolio composition, a vector containing the number of units of each asset. To calculate risk contributions, we define lambda quantiles on the space of portfolio compositions, a subset of $\mathbb{R}^n$, instead of the portfolio profit and loss, the space of random variables. Using this novel domain for lambda quantiles, we study how the portfolio's risk -- the lambda quantile of the portfolio -- is affected by changes in its composition. Specifically, we address the question of what each asset's contribution is to the overall portfolio risk. These risk contributions quantify the extent of change in portfolio risk due to changes in an asset's exposure; an important metric in portfolio rebalancing.

Lambda quantiles are the subject of extensive study in \cite{bpr17}, \cite{hmp18}, and \cite{cp18}, where lambda quantiles are referred to as Lambda Value-at-Risk. When defined on the set of probability measures, lambda quantiles possess the properties of monotonicity and quasi-convexity \citep{fmp14}. \cite{bpr17} study robustness, elicitability, and consistency properties of lambda quantiles. A theoretical framework for backtesting lambda quantiles is provided in \cite{cp18}, who propose three backtesting methodologies. Moreover, \cite{hmp18} argue to estimate the lambda function of lambda quantiles using major stock market indices, such as S\&P500, FTSE100, and EURO STOXX 50, which provides a dynamic macro approach to measuring market risk. The axiomatisation and further properties of lambda quantiles are studied in \cite{bp19}. These previous studies on lambda quantiles have either defined lambda quantiles on the space of probability measures or on the space of almost surely finite random variables. For the purpose of risk contributions, however, we define lambda quantiles on subsets of $\mathbb{R}^n$; the domain of asset compositions of a portfolio. Defining lambda quantiles on the space of portfolio compositions provides a natural way of comparing rates of change in portfolio risk with respect to asset units. Understanding changes in portfolio risk that may arise from portfolio rebalancing is highly important from a performance measurement perspective and relevant for risk capital allocation.

There exist a plethora of risk capital allocation methods that firms use for risk management and performance measurement, see \cite{b17} for a review and comparison of risk capital allocation methods and their properties. It should be noted that not all capital allocation methods are compatible with a specific risk measure, and applicability is determined by the properties of the risk measure in question. For example, the axiomatic approach taken in \cite{d01} to define a coherent risk capital allocation principle derived from the Aumann-Shapley value \citep{as74} only applies to coherent risk measures \citep{a99}; a property lambda quantiles do not possess. Furthermore, the Aumann-Shapley capital allocation rule introduced in \cite{ts09}, which was also inspired from the Aumann-Shapley value, is defined for Gateaux differentiable risk measures on linear portfolios. Explicit formulae of the Aumann-Shapley allocation rule is provided, for the class of convex risk measures, in \cite{ts09}. For positive homogeneous (but not necessarily coherent) risk measures, the Euler capital allocation \citep{p99,t99,d01,t07} can be used, which, on the space of coherent risk measures, coincides with the Aumann-Shapley allocation. It is worth noting that both \cite{d01} and \cite{t99} arrive (for coherent risk measures) at the same capital allocation rule using different theoretical approaches: the former uses a game-theoretic approach whilst the latter the notion of risk-adjusted performance measurement, a common practice for company internal economic capital calculations. \cite{kalkbrener2005MF} provides an axiomatic approach for sub-additive and homogeneous risk measures. The general class of non-Gateaux differentiable but convex or quasi-convex risk measures is treated in \cite{c18}. These capital allocation rules are not applicable to this study since they apply to linear portfolio operators, whereas we treat generic (not necessarily linear) portfolio operators. While \cite{ptm18} define Euler allocation rules for non-linear portfolios, they only apply to positive homogeneous risk measures with homogeneity degree equal to 1. The homogeneity degree of lambda quantiles, as we show in this paper, is however a function of the portfolio composition and the lambda function.

In this paper, we define risk contributions of lambda quantiles defined on the space of portfolio compositions as the partial derivatives of lambda quantiles with respect to asset units. We derive risk contributions of individual assets to the overall portfolio risk, measured via the lambda quantile of the portfolio composition. In doing so, we prove that lambda quantiles are continuously partially differentiable in the space of portfolio compositions using two independent methods which assume different properties. Furthermore, we prove that lambda quantiles are continuously differentiable in smaller subsets of $\mathbb{R}^n$, for a lambda function that may contain discontinuities, as long as it is continuously differentiable within a specific interval of $\mathbb{R}$.

Risk contributions calculated as directional derivatives of positive homogeneous risk measures of degree 1 of portfolios with one unit per asset are known as Euler contributions, where the assignment of capital using Euler contributions is known as Euler allocation. We show in this paper that lambda quantiles, scaled by a factor, can be written as a sum of their partial derivatives scaled by number of assets. This property is then used to show that lambda quantiles are homogeneous in the space of portfolio compositions, with a homogeneity degree that depends on both the portfolio composition and the lambda function. Only for the special case of a constant lambda function, the lambda quantile reduces to the VaR and has a homogeneity degree of 1. Therefore, the Euler allocation rule may not always be applicable to lambda quantiles, since their homogeneity degree is not universally equal to 1. Due to the variable nature of lambda quantiles' homogeneity degrees, we introduce a generalised Euler capital allocation rule, that is compatible with risk measures of any homogeneity degree and non-linear but homogeneous portfolios. We prove that the generalised Euler allocations of lambda quantiles have the full allocation property. We further provide a financial application using real market data that illustrates, for the case of non-linear portfolios, the lambda quantile homogeneity degree as a function of both the lambda function and the portfolio composition and the generalised Euler contributions of the portfolio assets. Notice that this notion of variable homogeneity degree is in favour of some criticisms that positive homogeneous risk measures of degree 1 defined on random variables have received. \cite{fs02}, for example, argue that large position multiples may induce additional liquidity risk, causing the portfolio risk to increase non-linearly compared to position size.

This paper builds upon methods and results relating to risk contributions and differentiability of VaR, whose literature is extensive and well established; indicatively see \cite{t99}, \cite{h03}, \cite{h09}, \cite{tm16}, \cite{s20}, and \cite{Pesenti2021RA}. Specifically, the papers of \cite{t99}, \cite{h09}, and \cite{tm16} provide a stepping stone for proving differentiability and calculating risk contributions of lambda quantiles from a portfolio performance measurement perspective, which are all novel pursuits in risk measure theory. Indeed, for lambda quantiles to be partially differentiable, we require additional smoothness assumptions. A first set of assumptions relates to the invertibility property of the portfolio profit and loss and the existence of an asset with a continuous density, similar to \cite{t99}. As this may not always be satisfied for generic portfolio profit and loss, we further prove our results using the condition that the portfolio profit and loss possesses a (locally) Lipschitz continuous, akin to the assumptions in \cite{h09}. 

The paper is organised as follows: Section \ref{sec:prelims} introduces the necessary notation and definitions. In Section \ref{sec:pmbrc}, we prove continuous partial differentiability of lambda quantiles in subsets of $\mathbb{R}^n$ and derive explicit formulae of risk contributions of lambda quantiles with respect to portfolio compositions. In Section \ref{sec:ed} we introduce the generalised Euler contributions and generalised Euler allocation rule and prove that the generalised Euler contributions of lambda quantiles fulfil the full allocation property. Section \ref{hpo} is devoted to the study of the homogeneity properties of generic portfolio operators. Section 6 illustrates the concept of homogeneity degree and risk contributions of lambda quantiles on a non-linear portfolio using financial market data.
\vspace{-1em}
\section{Preliminaries}\label{sec:prelims}

Let $(\Omega,\mathcal{F},\mathbb{P})$ be a probability space. We denote by $\mathcal{X}$ the set of random variables and by $\mathcal{X}^n$ for $n\geq2$ the set of random vectors on that space, taking values in $\mathbb{R}$ and $\mathbb{R}^n$ respectively. The joint probability distribution function of $\bm{X}=(X_1,\dots,X_n)\in\mathcal{X}^n$ is represented by $F_{\bm{X}}(\bm{x}):=\mathbb{P}(\bm{X}\leq\bm{x})$ for all $\bm{x}\in\mathbb{R}^n$, where each $X_1,\dots,X_n\in\mathcal{X}$. We will use $\bm{X}_{-1}:=(X_2,\dots,X_n)\in\mathcal{X}^{n-1}$ and $\bm{x}_{-1}:=(x_2,\dots,x_n)\in\mathbb{R}^{n-1}$ to indicate, respectively, random and ordinary vectors with first components removed. Define $\phi$ to be the density of the conditional probability distribution of $X_1$ given $X_2=x_2\,\dots,X_n=x_n$. Also, $U\subset\mathbb{R}\setminus\{0\}\times\mathbb{R}^{n-1}$ is a bounded set of $n$-dimensional real vectors with at least one non-zero component, which we set w.l.o.g. to the first component. Note that the choice of the first component is arbitrary and $\phi$ could represent the density of the conditional distribution of $X_i$ given $X_1=x_1,\dots,X_{i-1}=x_{i-1},X_{i+1}=x_{i+1},\dots,X_n=x_n$ for any $i=1,\dots,n$, provided that the $i^\text{th}$ component of $U$ does not contain zero.

In this paper, we treat a portfolio of $n$ assets. Random profits and losses of assets are represented by $\bm{X}$ and the portfolio composition is given by $\bm{u}\in U$.

\begin{definition}
A mapping $g:U\times\mathcal{X}^n\rightarrow\mathcal{X}$ is called a \emph{portfolio operator}.

For fixed $\bm{X}\in\mathcal{X}^n$, we call the mapping $g_{\bm{X}}:U\rightarrow\mathcal{X}$ such that $g_{\bm{X}}(\bm{u})=g[\bm{u},\bm{X}]$ the \emph{portfolio as a function of the composition $\bm{u}$} or \emph{portfolio} for short.

Finally, if the random vector $\bm{X}$ is realised, i.e. $\bm{X}(\omega)=\bm{x}\in\mathbb{R}^n$ for some outcome $\omega\in\Omega$, then we denote the portfolio using the mapping $g_{\bm{x}}:U\rightarrow\mathbb{R}$ such that $g_{\bm{x}}(\bm{u})=g[\bm{u},\bm{X}(\omega)]$, and call it the \emph{realised portfolio}.
\end{definition}
A portfolio operator  $g:U\times\mathcal{X}^n\rightarrow\mathcal{X}$ may represent the mapping from a composition $\bm{u}$ and a profit and loss vector $\bm{X}$ to the overall random portfolio's profit and loss. For fixed $\bm{X}\in\mathcal{X}^n$, $g_{\bm{X}}(\bm{u})$ can then be viewed as the \emph{portfolio profit and loss}. Note, that we do not require a portfolio to be linear in $\bm{X}$, indeed, the main focus of this paper is on non-linear portfolios.  Finally, if the random vector $\bm{u}$ is realised, i.e. $\bm{X}(\omega)=\bm{x}\in\mathbb{R}^n$ for some outcome $\omega\in\Omega$, then  $g_{\bm{x}}(\bm{u})=g[\bm{u},\bm{X}(\omega)]$, is the \emph{realised portfolio profit and loss}.

The portfolio operator $g$ is subject to stochastic variability because the value taken by $\bm{X}$ at each outcome $\omega\in\Omega$ is random and we assume that $g$ is independent of the probability distribution $F_{\bm{X}}$. The portfolio operator is also subject to distributional variability because we consider all random vectors in $\mathcal{X}^n$ -- we are not restricted to a class of random vectors of a specific distribution. If the random vector $\bm{X}$, and hence its joint probability distribution $F_{\bm{X}}$, is fixed, then the portfolio operator is only subject to stochastic variability and we consider the portfolio $g_{\bm{X}}$. Note that $g_{\bm{X}}(\bm{u})$, for any $\bm{u} \in U$, is a random variable, because $\bm{X}$ has not been realised. Moreover, $g_{\bm{X}}(\bm{u})$ varies (deterministically) with the dynamics of portfolio composition $\bm{u}$. As we must distinguish between the joint probability distribution $F_{\bm{X}}$ and the probability distribution function of the portfolio  $Y:=g_{\bm{X}}(\bm{u})$, we denote the probability distribution and density functions of the portfolio $Y$ by $F_{Y}(y)=\mathbb{P}(g_{\bm{X}}(\bm{u})\leq y)$ and $f_{Y}(y)=d F_{Y}/dy$, respectively, for all $y\in\mathbb{R}$. 

The following example shows the conceptual difference between the portfolio operator $g$ and the portfolio  $g_{\bm{X}}$.

\begin{example}
Let $\bm{X}=(X_1,X_2)$ and $\bm{u}=(u_1,u_2)$. Consider the portfolio operator:
\begin{equation*}
g[\bm{u},\bm{X}]=u_1X_1+u_2X_2-\mathbb{E}[u_1X_1+u_2X_2].
\end{equation*}
This operator represents the difference between actual and expected profits and losses of a portfolio, or in other words, the unexpected profit and loss. Even though the portfolio  $g_{\bm{X}}$ has the same form as the operator $g$, they are fundamentally different objects and we may choose to write $g_{\bm{X}}$ as:
\begin{equation*}
g_{\bm{X}}(\bm{u})=u_1X_1+u_2X_2-\mu_Y,
\end{equation*}
where $\mu_Y$ is the mean of the random variable $Y:=u_1X_1+u_2X_2$ with fixed $\bm{X}$. This is because for a fixed composition $\hat{\bm{u}}:=(\hat{u}_1,\hat{u}_2)$ and fixed $\bm{X}$, the distribution of $\hat{Y}:=\hat{u}_1X_1+\hat{u}_2X_2$ is also fixed and therefore the mean $\mu_{\hat{Y}}$ is a constant. On the other hand, if we do not fix $\bm{X}$, then the expectation $\mathbb{E}[\hat{u}_1X_1+\hat{u}_2X_2]$, that appears in the operator $g[\hat{\bm{u}},\bm{X}]$, is a function of $\bm{X}$.
\end{example}

In practice, if the distribution of asset is known, practitioners are interested in changing the portfolio composition $\bm{u}$ to achieve a higher risk-adjusted profit and loss for their portfolio. The process of selecting assets by comparing their expected profits and losses and contribution to overall portfolio risk is known as risk-adjusted performance measurement. In order to do this, one must know the per-unit contribution of each asset to the overall portfolio risk, and, in particular, risk contributions that are suitable for performance measurement.

In order to measure the risk, we use lambda quantiles that are traditionally defined on distributions. However, the purpose of this paper is to calculate the per-unit risk contribution of each asset to the overall portfolio risk. Hence, as one of the novelties of this paper, we define lambda quantiles on the set of the portfolio compositions $U$ and calculate partial derivatives of lambda quantiles with respect to asset units. The partial derivative with respect to asset units is the only definition of risk contribution that is suitable for performance measurement \citep{t99}.

\begin{definition}\label{def:lq}
The \emph{lambda quantile} $\rho_{\Lambda}:U\rightarrow\mathbb{R}$ with respect to $g_{\bm{X}}(\bm{u})\in\mathcal{X}$ is defined as follows:
\begin{equation*}
\rho_{\Lambda}(\bm{u};g_{\bm{X}}):=-\inf\{y\in\mathbb{R}\,|\,\mathbb{P}(g_{\bm{X}}(\bm{u})\leq y)>\Lambda(y)\}\,,
\end{equation*}
where $\Lambda:\mathbb{R}\rightarrow[\lambda_m,\lambda_M]$ is bounded such that $0<\lambda_m\leq\lambda_M<1$ and referred to as the \emph{lambda function}.
\end{definition}
The requirement that $\lambda_m$ and $\lambda_M$ are bounded away from 0 and 1, respectively, guarantees that the lambda quantile $\rho_{\Lambda}$ takes only finite values \citep{fmp14,bpr17}. 
In \cite{fmp14} the authors allow for $\lambda_m=0$ and $\lambda_M=1$ in their definition of lambda quantiles on the set of probability measures. However, the authors assume $\lambda_M<1$ within a financial context and provide a discussion on the choices of $\lambda_m$ being strictly positive; we refer the reader to Chapter 4 and Remark 19 of \cite{fmp14} for a detailed discussion on choices of $\lambda_m$ and $\lambda_M$.

The lambda quantile at $\bm{u}$ is the negative of the smallest intersection point of the distribution $F_Y$ and the lambda function $\Lambda$, provided they are both continuous. Otherwise, it is the negative of the smallest point from which the distribution $F_Y$ dominates the lambda function. As we work with profits and losses, we use the right quantiles of a distribution function and the negative of the right quantile. Left quantiles are typically used when asset losses are considered, e.g., in an insurance context. The lambda quantile $\rho_{\Lambda}(\bm{u};g_{\bm{X}})$ represents a positive amount to be allocated to absorb losses. A positive amount is allocated only if the profit and loss is negative (or loss is positive), i.e. $-\rho_{\Lambda}(\bm{u};g_{\bm{X}})<0$, otherwise the risk measure suggests there is a surplus of money which can be removed from the portfolio and still ensures its solvency. For the time being, we assume that the lambda function is bounded and we denote the derivative of the lambda function by $\Lambda'(x):=d\Lambda/dx$. We will, however, assume additional properties in subsequent sections.

From Definition \ref{def:lq}, we observe that the lambda quantile is a generalisation of the quantile function, in that the lambda quantile is the negative of the quantile function at a level determined by the lambda function $\Lambda$.
If the lambda function is a constant, i.e. $\Lambda(y)=\lambda\in(0,1)$ for all $y\in\mathbb{R}$, then the lambda quantile simplifies to the well-known  \emph{Value-at-Risk} (VaR). In particular, the $VaR_{\lambda}:U\rightarrow\mathbb{R}$ is given by:
\begin{equation*}
VaR_{\lambda}(\bm{u};g_{\bm{X}}):=-\inf\{y\in\mathbb{R}|\mathbb{P}(g_{\bm{X}}(\bm{u})\leq y)>\lambda\}.
\end{equation*}
Here, we view the VaR at (fixed) level $\lambda\in(0,1)$ as a function of $\bm{u}$, which is in contrast to the typically definition of VaR as a function of the random variable $g_{\bm{X}}(\bm{u})$. For simplicity, we write $\rho_{\Lambda}(\bm{u})$ and $VaR_{\lambda}(\bm{u})$ when there is no ambiguity on the portfolio  $g_{\bm{X}}$. 

We now provide examples of portfolio operators which we will use as running examples. Operators (\ref{mj:i})-(\ref{mj:iii}) are studied in \cite{m18}, which we have adapted to our notation.

\begin{example}\label{exmp:ops}
Let $n=2$, $\bm{X}=(X_1,X_2)$, $\bm{u}=(u_1,u_2)$ and $Y:=u_1X_1+u_2X_2$. The following are examples of portfolio operators defined on the Cartesian product $U\times\mathcal{X}^2$:
\begin{enumerate}[(i)]
\item $g[\bm{u},\bm{X}]=u_1X_1+u_2X_2$\label{op:lin}
\item $g[\bm{u},\bm{X}]=u_1X_1+u_2X_2-\mathbb{E}[u_1X_1+u_2X_2]$\label{mj:i}
\item $g[\bm{u},\bm{X}]=\max\{0,u_1X_1+u_2X_2-\mathbb{E}[u_1X_1+u_2X_2]\}$\label{mj:ii}
\item $g[\bm{u},\bm{X}]=u_1X_1+u_2X_2-VaR_{\lambda}(\bm{u};Y)$\label{mj:iii}

\item $g[\bm{u},\bm{X}]
= u_1X_1+u_2X_2- 
\min\{\max\{u_1X_1+u_2X_2 - D, 0\}, L\}$, with $D,L >0$\label{mj:iv}

\item $g[\bm{u},\bm{X}]=u_1^{\tau}X_1+u_2^{\tau}X_2-\rho_{\Lambda}(\bm{u};u_1^{\tau}X_1+u_2^{\tau}X_2)$, where ${\tau}\in\mathbb{R}$. \label{op:lq}
\end{enumerate}
\end{example}

Portfolio operators \eqref{mj:ii} and \eqref{mj:iv} are typical insurance portfolios. In particular, Example \ref{exmp:ops} \eqref{mj:iv} is the loss for an insurance company after reinsurance on $u_1 X_1 + u_2X_2$ with deductible $D$ and limit $L$.

Both the nature of assets in a portfolio and the context in which these assets are used determine the operator $g$. For example, $g$ may be determined by the pricing functions of assets. A simple linear portfolio of stock positions will only involve profits and losses of these linear assets. However, a more complex portfolio with both stock positions and stock options will include option profit and loss that are calculated using an option pricing formula (for example, the Black–Scholes formula), which is a non-linear function of the underlying stock's profit and loss. stock options or credit default swaps are examples of non-linear assets.

A portfolio operator is referred to as a \emph{linear operator} if it is a linear function of asset profits and losses, i.e. for all $\bm{X},\hat{\bm{X}}\in\mathcal{X}^n$ and all $c_1,c_2\in\mathbb{R}$ we have $g[\bm{u},c_1\bm{X}+c_2\hat{\bm{X}}]=c_1g[\bm{u},\bm{X}]+c_2g[\bm{u},\hat{\bm{X}}]$. Clearly, only portfolio operators (\ref{op:lin}) and (\ref{mj:i}) in Example \ref{exmp:ops} are linear. If there is at least one position which is a non-linear function of an underlying asset price, i.e. there is at least one non-linear asset, then the portfolio is \emph{non-linear} when viewed as a function of the underlying asset price.
Therefore, whether or not the operators given in Example \ref{exmp:ops} represent linear portfolios depend on what positions the $X_i$'s represent. For example, if both $X_1$ and $X_2$ in Example \ref{exmp:ops} (\ref{op:lin}) represent profits and losses of stocks, then this is an example of both a linear portfolio and linear operator. If, on the other hand, $X_1$ is a stock's profit and loss and $X_2$ is the profit and loss of an option, then whilst the operator and the portfolio viewed as a function of $\bm{X}$ are linear, the operator and the portfolio are non-linear when viewed as a function of $X_1$ and the profit and loss of the option's underlying asset. If not otherwise stated, we assume that a portfolio is linear if its corresponding portfolio operator is linear.

For the exposition, we require additional notation. Define $(x_1,\bm{x}_{-1}):=\bm{x}$, so that the realised portfolio, for all $\omega\in\Omega$ with $\bm{X}(\omega)=\bm{x}$, becomes:
\begin{equation*}
g_{\bm{X}(\omega)}(\bm{u})=g_{\bm{x}}(\bm{u})=g_{(x_1,\bm{x}_{-1})}(\bm{u})\,.
\end{equation*}
Further, the partial derivatives of $g_{\bm{x}}$ with respect to $u_i$ and $x_1$ are denoted by $\partial_{u_i} g_{\bm{x}}(\bm{u}):=\partial g_{\bm{x}}/\partial u_i$ and $\partial_{x_1} g_{\bm{x}}(\bm{u}):=\partial g_{\bm{x}}/\partial x_1$, respectively. Similarly, we denote the $\mathbb{P}$-a.s. partial derivatives of $g_{\bm{X}}$ with respect to $u_i$ by $\partial_{u_i} g_{\bm{X}}(\bm{u}):=\partial g_{\bm{X}}/\partial u_i$.

We write $A_1\subset\mathbb{R}$ for the support of the random variable $X_1$, i.e. $A_1\subset\mathbb{R}$ is the smallest closed set such that $\mathbb{P}(X_1\in A_1)=\mathbb{P}(\{\omega\in\Omega~|~X_1(\omega)\in A_1\})=1$. We say that $g_{\bm{X}}$ is \emph{invertible with respect to $X_1=x_1$, for all $X_2=x_2,\dots,X_n=x_n$}, if for all $x_1\in A_1$, the function $g_{(x_1,\bm{x}_{-1})}$ is invertible with respect to $x_1$, for all $x_2,\dots,x_n$. We denote the inverse of $g_{(x_1,\bm{x}_{-1})}$ by $l_{(y,\bm{x}_{-1})}:U\rightarrow\mathbb{R}$ such that
\begin{equation*}
l_{(y,\bm{x}_{-1})}(\bm{u})=x_1 \iff g_{(x_1,\bm{x}_{-1})}(\bm{u})=y,
\end{equation*}
for all $x_1\in A_1$, and its partial derivatives with respect to $u_i$ and $y$ are respectively $\partial_{u_i} l_{(y,\bm{x}_{-1})}(\bm{u}):=\partial l_{(y,\bm{x}_{-1})}/\partial u_i$ and $\partial_{y} l_{(y,\bm{x}_{-1})}(\bm{u}):=\partial l_{(y,\bm{x}_{-1})}/\partial y$. By Assumption \ref{asmp:t} \eqref{asmp:mi} below, the realised portfolio $g_{\bm{x}}$ is strictly increasing in $x_1$ and thus in this case the inverse $l$ is well-defined.

\section{Differentiability and risk contributions of lambda quantiles}\label{sec:pmbrc}

In this section, we study differentiability of lambda quantiles in the set $U$. Derivatives of the lambda quantile risk measure have not been studied in previous literature. Derivatives of the special case, namely the VaR measure, however, have an extensive literature, see for example, \cite{t99, gls00, h03, h09, tm16}. Although these studies calculate derivatives of VaR, they differ in both methods used and assumptions made in their respective settings.

Partial derivatives of risk measures with respect to asset units are crucial in portfolio risk management as they represent the risk contribution of each asset to the overall portfolio risk. This definition of risk contributions is consistent with risk-adjusted performance measurement of portfolios (see Definition \ref{defn:pm} and Lemma \ref{lem:tspm} in the Appendix \ref{app:results}). This section extends the literature to include partial derivatives of lambda quantiles and recover previous results on VaR as special cases. In light of this motivation, this section has two objectives. The first objective is to provide conditions under which lambda quantiles are continuously partially differentiable in the set $U$. The second objective is to calculate these partial derivatives explicitly. There are several approaches one may follow to achieve the latter, which ultimately depend on assumptions made regarding the portfolio $g_{\bm{X}}$, the random vector $\bm{X}$, and the lambda function $\Lambda$. Partial derivatives of lambda quantiles will be calculated using two different approaches, each having its own set of assumptions.

In the first approach we generalise the treatment in \cite{t99}, who calculated partial derivatives of $VaR_{\lambda}$ for linear portfolios, to lambda quantiles for generic portfolios. We extend this method to take into account the lambda function (instead of a fixed level $\lambda$) and to cover non-linear portfolios (by defining lambda quantiles on generic portfolios). These generalisations require additional assumptions for lambda quantiles to be continuously differentiable in the set $U$.

In the second approach we utilise the closed-form representation of probability sensitivities, proposed by \cite{h09}. The probability sensitivity corresponds, in our context, to the partial derivative of the portfolio's probability distribution function with respect to asset units. Note that the two approaches mentioned above allow us to prove the same property (continuously partially differentiable in $U$) of lambda quantiles. Furthermore, partial derivatives of lambda quantiles are the same under both approaches.

\begin{asmp}\label{asmp:t}
We say that Assumption \ref{asmp:t} is satisfied if:
\begin{enumerate}[(i)]
\item $g_{(x_1,\bm{x}_{-1})}(\bm{u})$ is strictly increasing in $x_1$, for all $\bm{u} \in U$. \label{asmp:mi}

\item $g_{\bm{X}}(\bm{u})$ is $\mathbb{P}$-a.s. differentiable in $\bm{u}$, for all $\bm{u}\in U$. \label{asmp:diff}

\item For fixed $\bm{x}_{-1}$, the density $y\mapsto\phi(y|\bm{x}_{-1})$ is continuous in $y$.\label{asmp:dens}

\item The function $l_{(y,\bm{x}_{-1})}(\bm{u})$ is continuously differentiable with respect to $y$ and $\bm{u}$. \label{asmp:cinv}

\item For fixed $\bm{u}$ and all $i=1,\dots,n$, the following maps are uniformly bounded with respect to $y$: \label{asmp:fint}
\begin{align*}
& y \mapsto \mathbb{E}[\partial_y l_{(y,\bm{X}_{-1})}(\bm{u})\phi(l_{(y,\bm{X}_{-1})}(\bm{u})|\bm{X}_{-1})],\\
&y\mapsto\mathbb{E}[\partial_{u_i} l_{(y,\bm{X}_{-1})}(\bm{u})\phi(l_{(y,\bm{X}_{-1})}(\bm{u})|\bm{X}_{-1})].
\end{align*}

\item For $g_{(x_1,\bm{x}_{-1})}(\bm{u})$ strictly increasing in $x_1$, for each $\bm{u}\in U$, assume:\label{asmp:elq}
\begin{equation*}
\mathbb{E}[\partial_y l_{(-\rho_{\Lambda}(\bm{u}),\bm{X}_{-1})}(\bm{u})\phi(l_{(-\rho_{\Lambda}(\bm{u}),\bm{X}_{-1})}(\bm{u})|\bm{X}_{-1})]>\Lambda'(-\rho_{\Lambda}(\bm{u})).
\end{equation*}
\end{enumerate}
\end{asmp}

Note that Assumption \ref{asmp:t} (\ref{asmp:mi}) implies that $g_{\bm{X}}(\bm{u})$ is $\mathbb{P}$-a.s. invertible with respect to $X_1=x_1$, for all $X_2=x_2,\dots,X_n=x_n$. Although we assume that $g_{(x_1,\bm{x}_{-1})}(\bm{u})$ is strictly increasing in $x_1$ in this paper, the results also hold for the strictly decreasing case, with some sign changes. The proofs of the decreasing case are similar to those of the increasing case and thus omitted. If Assumption \ref{asmp:t} \eqref{asmp:mi} to \eqref{asmp:elq} are fulfilled, then by Lemma \ref{lem:F} below, Assumption \ref{asmp:t} \eqref{asmp:elq} is equivalent to 
\begin{equation}\label{eq:asm}
    f_Y(- \rho_\Lambda(\bm{u})) > \Lambda^\prime(-\rho_\Lambda(\bm{u}))\,.
\end{equation}
We require Equation \eqref{eq:asm} to ensure that the portfolio density adjustment (see  Definition \ref{defn:rda}) is well-defined and positive at the point $y=-\rho_\Lambda(\bm{u})$. The portfolio density adjustment is fundamental for the homogeneity degree and the  risk contributions of lambda quantiles.

\begin{example}
Here we discuss which of the portfolios in Example \ref{exmp:ops} fulfil Assumption \ref{asmp:t}. Note that the corresponding portfolios of the  portfolio operator \eqref{op:lin}, \eqref{mj:i}, \eqref{mj:iii}, and \eqref{op:lq} are of the form $g_{\bm{X}}(\bm{u}) = u_1^\tau X_1 + u_2^\tau X_2 + c$, for a constant $c \in \mathbb R$ and $\tau \ge 0$. Thus, these portfolio operators satisfy, under suitable condition on the joint distribution of $\bm{X}$, Assumptions \ref{asmp:t}. The portfolio operators \eqref{mj:ii} and \eqref{mj:iv} do not fulfil Assumption \ref{asmp:t}, however, they satify Assumption \ref{asmp:h} below.
\end{example}

\begin{remark}\label{rem:asmpt}
Assumption \ref{asmp:t} (\ref{asmp:dens}) and (\ref{asmp:cinv}) imply the following mappings are continuous for fixed $\bm{x}_{-1}$ and $i=1,\dots,n$
\begin{align*}
&(y,\bm{u})\mapsto\partial_y l_{(y,\bm{x}_{-1})}(\bm{u})\phi(l_{(y,\bm{x}_{-1})}(\bm{u})|\bm{x}_{-1}),\\
&(y,\bm{u})\mapsto\partial_{u_i} l_{(y,\bm{x}_{-1})}(\bm{u})\phi(l_{(y,\bm{x}_{-1})}(\bm{u})|\bm{x}_{-1}),
\end{align*}
and the following mappings are continuous, for all $i=1,\dots,n$,
\begin{align*}
&(y,\bm{u})\mapsto \mathbb{E}[\partial_y l_{(y,\bm{X}_{-1})}(\bm{u})\phi(l_{(y,\bm{X}_{-1})}(\bm{u})|\bm{X}_{-1})],\\
&(y,\bm{u})\mapsto\mathbb{E}[\partial_{u_i} l_{(y,\bm{X}_{-1})}(\bm{u})\phi(l_{(y,\bm{X}_{-1})}(\bm{u})|\bm{X}_{-1})].
\end{align*}
\end{remark}

The second set of assumptions relates to the probability sensitivity of \cite{h09}, which have been adapted to our setting. In contrast to the approach taken in \cite{h09}, we require Assumption \ref{asmp:h} (\ref{asmp:grad}) to account for the lambda function in our treatment.
\begin{asmp}\label{asmp:h}
We say that Assumption \ref{asmp:h} is satisfied if:
\begin{enumerate}[(i)]

\item $g_{\bm{X}}(\bm{u})$ is $\mathbb{P}$-a.s. differentiable in $\bm{u}$, for all $\bm{u}\in U$. \label{asmp:h1a}

\item There exists a random variable $m(\bm{X})$ with $\mathbb{E}[m(\bm{X})]<\infty$ such that for all $\bm{u},\bm{v}\in U$: \label{asmp:h1b}
\begin{equation*}
|g_{\bm{X}}(\bm{u})-g_{\bm{X}}(\bm{v})|~\leq~ m(\bm{X})\|\bm{u}-\bm{v}\| \quad\mathbb{P}\text{-a.s.}\,,
\end{equation*}
where $\|\cdot\|$ denotes the Euclidean norm in $U$.

\item For all $\bm{u}\in U$, the random variable $g_{\bm{X}}(\bm{u})$ has a continuous density denoted by $f_Y(y)$ in a neighbourhood of $y=-\rho_{\Lambda}(\bm{u})$. \label{asmp:h2a}

\item For the function $F:\mathbb{R}\times U\rightarrow[0,1]$ defined as $F(y,\bm{u}):=\mathbb{P}(g_{\bm{X}}(\bm{u})\leq y)$, the partial derivatives $\partial_{u_i}F(y,\bm{u})$ exist and are continuous in $\bm{u}$ and $y$ in a neighbourhood of $y=-\rho_{\Lambda}(\bm{u})$, for all $i=1,\dots,n$. \label{asmp:h2b}
\item For all $\bm{u}\in U$ and for $i=1,\dots,n$, the following mappings are continuous at $y=-\rho_{\Lambda}(\bm{u})$: \label{asmp:h3}
\begin{equation*}
y\mapsto\mathbb{E}[\partial_{u_i} g_{\bm{X}}(\bm{u})~|~g_{\bm{X}}(\bm{u})=y].
\end{equation*}
\item $f_Y(-\rho_{\Lambda}(\bm{u}))>\Lambda'(-\rho_{\Lambda}(\bm{u}))$ for all $\bm{u}\in U$. \label{asmp:grad}
\end{enumerate}
\end{asmp}

Assumption \ref{asmp:h} (\ref{asmp:h2a}) implies that each distribution of the random field $(g_{\bm{X}}(\bm{u}))_{\bm{u}\in U}$ is continuous in a neighbourhood of $y=-\rho_{\Lambda}(\bm{u})$ for its respective $\bm{u}\in U$.

\begin{example}
The portfolios corresponding to the portfolio operators in Example \ref{exmp:ops} \eqref{op:lin}, \eqref{mj:i}, \eqref{mj:ii}, and \eqref{asmp:fint} clearly fulfil Assumptions \ref{asmp:h} \eqref{asmp:h1a} and \eqref{asmp:h1b}. The remaining portfolios in Example \ref{exmp:ops} do not in general fulfil Assumption \ref{asmp:h} \eqref{asmp:h1b}, as lambda quantiles and $VaR$ are not Lipschitz continuous.
\end{example}

Using Assumption \ref{asmp:t}, we demonstrate Lemmas \ref{lem:F} and \ref{lem:cexp}, which we require to prove Theorem \ref{thm:tlqrc} with condition \ref{thmasmp:t} and Proposition \ref{propo:tlqrci}. Lemmas \ref{lem:F} and \ref{lem:cexp} are generalisations of Lemmas 3.2 and 2.2 in \cite{t01}, respectively, as they apply to both linear and non-linear portfolios.

\begin{lemma}\label{lem:F}
Suppose Assumption \ref{asmp:t} (\ref{asmp:mi})-(\ref{asmp:fint}) are satisfied. Then the function $F:\mathbb{R}\times U\rightarrow[0,1]$ defined as $F(y,\bm{u}):=\mathbb{P}(g_{\bm{X}}(\bm{u})\leq y)$ is partially differentiable in $y$ and $u_i$, for $i=1,\dots,n$. The continuous derivatives are given by:
\begin{align}
\frac{\partial F}{\partial y}(y,\bm{u})&=\mathbb{E}\left[\partial_y l_{(y,\bm{X}_{-1})}(\bm{u})\phi(l_{(y,\bm{X}_{-1})}(\bm{u})|\bm{X}_{-1})\right],\label{eqn:pFy}
\\
\frac{\partial F}{\partial u_i}(y,\bm{u})&=\mathbb{E}\left[\partial_{u_i} l_{(y,\bm{X}_{-1})}(\bm{u})\phi(l_{(y,\bm{X}_{-1})}(\bm{u})|\bm{X}_{-1})\right]\label{eqn:pFu}.
\end{align}
\end{lemma}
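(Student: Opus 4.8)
The plan is to condition on the sub-vector $\bm{X}_{-1}$ and exploit the strict monotonicity in Assumption \ref{asmp:t} \eqref{asmp:mi} to rewrite $F$ as an expectation of a conditional distribution function, to which the chain rule and the fundamental theorem of calculus can be applied pointwise; the genuine difficulty is then moving the derivative past the expectation, which is precisely what Assumption \ref{asmp:t} \eqref{asmp:fint} and Remark \ref{rem:asmpt} are for.

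First I would fix $\bm{u}$ and use the tower property to write $F(y,\bm{u}) = \mathbb{E}\big[\mathbb{P}(g_{(X_1,\bm{X}_{-1})}(\bm{u}) \le y \mid \bm{X}_{-1})\big]$. Since $g_{(x_1,\bm{x}_{-1})}(\bm{u})$ is strictly increasing in $x_1$ with inverse $l_{(y,\bm{x}_{-1})}(\bm{u})$, the event $\{g_{(x_1,\bm{x}_{-1})}(\bm{u}) \le y\}$ equals $\{x_1 \le l_{(y,\bm{x}_{-1})}(\bm{u})\}$. Writing $\Phi(a|\bm{x}_{-1}) := \int_{-\infty}^{a}\phi(t|\bm{x}_{-1})\,dt$ for the conditional distribution function of $X_1$ given $\bm{X}_{-1}=\bm{x}_{-1}$, this yields the key identity
\[
F(y,\bm{u}) = \mathbb{E}\big[\Phi\big(l_{(y,\bm{X}_{-1})}(\bm{u})\,\big|\,\bm{X}_{-1}\big)\big].
\]
Next, for fixed $\bm{x}_{-1}$, I would differentiate the integrand. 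By Assumption \ref{asmp:t} \eqref{asmp:dens} the density $\phi(\cdot|\bm{x}_{-1})$ is continuous, so $a\mapsto\Phi(a|\bm{x}_{-1})$ is continuously differentiable with derivative $\phi(a|\bm{x}_{-1})$; together with the continuous differentiability of $l$ in $(y,\bm{u})$ from Assumption \ref{asmp:t} \eqref{asmp:cinv}, the chain rule gives
\[
\partial_y\,\Phi\big(l_{(y,\bm{x}_{-1})}(\bm{u})\,\big|\,\bm{x}_{-1}\big) = \partial_y l_{(y,\bm{x}_{-1})}(\bm{u})\,\phi\big(l_{(y,\bm{x}_{-1})}(\bm{u})\,\big|\,\bm{x}_{-1}\big),
\]
and likewise for $\partial_{u_i}$. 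These are exactly the integrands in \eqref{eqn:pFy} and \eqref{eqn:pFu}, so only the interchange with $\mathbb{E}$ remains.

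The main obstacle is justifying this interchange. For the $y$-derivative I would argue via Tonelli's theorem: because $l$ inverts an increasing function, $\partial_y l \ge 0$, so the integrand is nonnegative, and for any interval $[y_0,y_1]$ the boundedness in Assumption \ref{asmp:t} \eqref{asmp:fint} keeps the double integral finite. Swapping the $y$-integral with $\mathbb{E}$ and evaluating the inner integral by the chain-rule identity and the pointwise fundamental theorem of calculus gives
\[
\int_{y_0}^{y_1}\mathbb{E}\big[\partial_y l_{(y,\bm{X}_{-1})}(\bm{u})\,\phi(l_{(y,\bm{X}_{-1})}(\bm{u})|\bm{X}_{-1})\big]\,dy = F(y_1,\bm{u})-F(y_0,\bm{u}).
\]
Since Remark \ref{rem:asmpt} guarantees that the right-hand side of \eqref{eqn:pFy} is continuous in $(y,\bm{u})$, a second application of the fundamental theorem of calculus identifies it as $\partial_y F$. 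For the $u_i$-derivative the analogous Fubini argument applies along a segment in the $i$-th coordinate contained in $U$, but here the integrand $\partial_{u_i}l\cdot\phi$ no longer has a fixed sign, so one must invoke the boundedness in Assumption \ref{asmp:t} \eqref{asmp:fint} to control $|\partial_{u_i}l\cdot\phi|$ and legitimise the swap; this sign issue is the delicate point of the proof. Continuity of both derivatives then follows at once from Remark \ref{rem:asmpt}, completing the argument.
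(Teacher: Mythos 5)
Your proof is correct and follows essentially the same route as the paper: both condition on $\bm{X}_{-1}$, use the strict monotonicity in $x_1$ to rewrite $F(y,\bm{u})=\mathbb{E}\bigl[G(y,\bm{u},\bm{X}_{-1})\bigr]$ with $G(y,\bm{u},\bm{x}_{-1})=\int_{-\infty}^{l_{(y,\bm{x}_{-1})}(\bm{u})}\phi(t|\bm{x}_{-1})\,dt$, differentiate the integrand by the chain rule using Assumption \ref{asmp:t} (\ref{asmp:dens}) and (\ref{asmp:cinv}), and then justify the interchange using the sign of $\partial_y l$ for the $y$-derivative and Assumption \ref{asmp:t} (\ref{asmp:fint}) for the $u_i$-derivatives. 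The only difference is packaging: the paper verifies the four hypotheses of a differentiation-under-the-integral result (Lemma \ref{lem:d19}), whereas you inline the content of that lemma via Tonelli/Fubini over an interval followed by the fundamental theorem of calculus together with the continuity from Remark \ref{rem:asmpt} --- the two justifications are equivalent, and your identification of the non-fixed sign of $\partial_{u_i}l\cdot\phi$ as the delicate point is exactly where the paper also leans on Assumption \ref{asmp:t} (\ref{asmp:fint}).
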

\begin{proof}
We generalise the approach taken in the proof of Lemma 5.3 in \cite{t99} to prove that $F$ is continuously differentiable. Our method applies to a generic random variable $g_{\bm{X}}(\bm{u})$ whilst the proof provided in \cite{t99} applies only to linear portfolios, that is to $g_{\bm{X}}(\bm{u}) = \sum_{i=1}^n u_iX_i$.

We first introduce the following integral using the density $\phi$ of the conditional distribution of $X_1$ given $\bm{X}_{-1}=\bm{x}_{-1}$
\begin{equation}\label{eqn:G}
G(y,\bm{u},\bm{x}_{-1}):=\int_{-\infty}^{l_{(y,\bm{x}_{-1})}(\bm{u})}\phi(t|\bm{x}_{-1})dt.
\end{equation}
Note that $G$ can be written in the following form
\begin{align*}
G(y,\bm{u},\bm{x}_{-1})&=\mathbb{P}(X_1\leq l_{(y,\bm{x}_{-1})}(\bm{u})|\bm{X}_{-1}=\bm{x}_{-1})\\
&=\mathbb{P}(\{\omega\in\Omega|X_1(\omega)\leq l_{(y,\bm{x}_{-1})}(\bm{u})\})\\
&=\mathbb{P}(\{\omega\in\Omega|g_{(X_1(\omega),\bm{x}_{-1})}(\bm{u})\leq g_{(l_{(y,\bm{x}_{-1})},\bm{x}_{-1})}(\bm{u})\})\\
&=\mathbb{P}(\{\omega\in\Omega|g_{(X_1(\omega),\bm{x}_{-1})}(\bm{u})\leq y\})\\
&=\mathbb{P}(g_{\bm{X}}(\bm{u})\leq y|\bm{X}_{-1}=\bm{x}_{-1}),
\end{align*}
and $F$ can be written in terms of $G$:
\begin{equation}\label{lem:FG}
F(y,\bm{u})=\mathbb{E}[\mathbb{P}(g_{\bm{X}}(\bm{u})\leq y|\bm{X}_{-1})]=\mathbb{E}[G(y,\bm{u},\bm{X}_{-1})].
\end{equation}
We show that $F$ is continuously differentiable in $y$ and $u_i$, for $i=1,\dots,n$, and that its derivatives can be computed by changing the order of integration and differentiation on the right-hand side of (\ref{lem:FG}). In order to do this, we apply Lemma \ref{lem:d19} (see Appendix \ref{app:results}) to the function $G:\mathbb{R}\times U\times\mathbb{R}^{n-1}\rightarrow\mathbb{R}$ to the components $y$ and $u_1, \ldots, u_n$. For this, we define $S_y:=U\times\mathbb{R}^{n-1}$, $S_{u_1}:=U\setminus\{\mathbb{R}\setminus\{0\}\}\times\mathbb{R}\times\mathbb{R}^{n-1}$, and $S_{u_j}:=U\setminus\mathbb{R}\times\mathbb{R}\times\mathbb{R}^{n-1}$ for $j = 2,\dots,n$. Note that we distinguish $u_1$ from $u_2, \ldots, u_n$ as $u_1$ cannot be zero.

For differentiability in the first component $y$, condition (i) of Lemma \ref{lem:d19} is satisfied as:
\begin{equation*}
\int_{S_y}|G(y,\bm{u},\bm{x}_{-1})|\,dF_{\bm{X}_{-1}}(\bm{x}_{-1})\,d\bm{u}=\int_{U}\mathbb{E}[|G(y,\bm{u},\bm{X}_{-1})|]\,d\bm{u}=\int_UF(y,\bm{u})\,d\bm{u}<\infty.
\end{equation*}
The finiteness follows from the observation that for any fixed $\bm{u}\in U$, $F(y,\bm{u})=F_Y(y)$ is the distribution function of $Y=g_{\bm{X}}(\bm{u})$ and that $U$ is bounded.
The same condition is satisfied for differentiability of $u_1$ as:
\begin{align*}
\int_{S_{u_1}}|G(y,\bm{u},\bm{x}_{-1})|\,dF_{\bm{X}_{-1}, Y}(\bm{x}_{-1}, y)\,d\bm{u}_{-1}&=\int_{U\setminus\{\mathbb{R}\setminus\{0\}\}}\mathbb{E}[G(Y,\bm{u},\bm{X}_{-1})]\,d\bm{u}_{-1}\\
&=\int_{U\setminus\{\mathbb{R}\setminus\{0\}\}]}\mathbb{E}[F(Y,\bm{u})]\,d\bm{u}_{-1}<\infty,
\end{align*}
where $\bm{u}_{-1}:=(u_2,\dots,u_n)\in U\setminus\{\mathbb{R}\setminus\{0\}\}$ and $F_{\bm{X}_{-1}, Y}:\mathbb{R}^{n-1}\times\mathbb{R}\rightarrow\mathbb{R}$ is the distribution function of the joint probability distribution of $\bm{X}_{-1}$ and $Y$. Similarly, for $u_2,\dots,u_n$ we have:
\begin{align*}
\int_{S_{u_j}}|G(y,\bm{u},\bm{x}_{-1})|\,dF_{\bm{X}_{-1}, Y}(\bm{x}_{-1}, y)\,d\bm{u}_{-j}&=\int_{U\setminus\mathbb{R}}\mathbb{E}[G(Y,\bm{u},\bm{X}_{-1})]\,d\bm{u}_{-j}\\
&=\int_{U\setminus\mathbb{R}}\mathbb{E}[F(Y,\bm{u})]\,d\bm{u}_{-j}<\infty,
\end{align*}
where for $j>1$, $\bm{u}_{-j}:=(u_1,\dots,u_{j-1},u_{j+1},\dots,u_n)\in U\setminus\mathbb{R}$. For condition (ii) of Lemma \ref{lem:d19}, we differentiate $G$ partially with respect to $y$ and $u_i$, $i=1,\dots,n$ using \eqref{eqn:G}:
\begin{align*}
\frac{\partial G}{\partial y}(y,\bm{u},\bm{x}_{-1})&=\partial_y l_{(y,\bm{x}_{-1})}(\bm{u})\phi(l_{(y,\bm{x}_{-1})}(\bm{u})|\bm{x}_{-1})\,,\\
\frac{\partial G}{\partial u_i}(y,\bm{u},\bm{x}_{-1})&=\partial_{u_i}l_{(y,\bm{x}_{-1})}(\bm{u})\phi(l_{(y,\bm{x}_{-1})}(\bm{u})|\bm{x}_{-1}),
\end{align*}
which are all continuous for fixed $\bm{x}_{-1}$ by Remark \ref{rem:asmpt}. For condition (iii) of Lemma \ref{lem:d19}, observe that the integral of $\partial G/\partial y$ in the domain $S_y$ is continuous in $y$ by Remark \ref{rem:asmpt}:
\begin{align*}
\int_{S_y}\frac{\partial G}{\partial y}(y,\bm{u},\bm{x}_{-1})\,dF_{\bm{X}_{-1}}(\bm{x}_{-1})\,d\bm{u}&=\int_U\mathbb{E}\biggl[\frac{\partial G}{\partial y}(y,\bm{u},\bm{X}_{-1})\biggl]\,d\bm{u}\\
&=\int_U\mathbb{E}[\partial_y l_{(y,\bm{X}_{-1})}(\bm{u})\phi(l_{(y,\bm{X}_{-1})}(\bm{u})|\bm{X}_{-1})]\,d\bm{u}.
\end{align*}
Similarly, by Remark \ref{rem:asmpt} integral of $\partial G/\partial u_1$ in $S_{u_1}$ is continuous in $u_1$:
\begin{align*}
\int_{S_{u_1}}\frac{\partial G}{\partial u_1}(y,\bm{u},\bm{x}_{-1})\,dF_{\bm{X}_{-1}, Y}(\bm{x}_{-1}, y)\,d\bm{u}_{-1}
&=
\int_{U\setminus\{\mathbb{R}\setminus\{0\}\}}\mathbb{E}\biggl[\frac{\partial G}{\partial u_1}(Y,\bm{u},\bm{X}_{-1})\biggl]\,d\bm{u}_{-1}
\\
&=
\int_{U\setminus\{\mathbb{R}\setminus\{0\}\}}\mathbb{E}[\partial_{u_1}l_{(Y,\bm{X}_{-1})}(\bm{u})\phi(l_{(Y,\bm{X}_{-1})}(\bm{u})|\bm{X}_{-1})]\,d\bm{u}_{-1},
\end{align*}
and the integral of $\partial G/\partial u_j$ in $S_{u_j}$ is continuous in $u_j$ for $j=2,\dots,n$:
\begin{align*}
\int_{S_{u_j}}\frac{\partial G}{\partial u_j}(y,\bm{u},\bm{x}_{-1})\,dF_{\bm{X}_{-1}, Y}(\bm{x}_{-1}, y)\,d\bm{u}_{-j}&=\int_{U\setminus\mathbb{R}}\mathbb{E}\biggl[\frac{\partial G}{\partial u_j}(Y,\bm{u},\bm{X}_{-1})\biggl]\,d\bm{u}_{-j}\\
&=\int_{U\setminus\mathbb{R}}\mathbb{E}[\partial_{u_j}l_{(Y,\bm{X}_{-1})}(\bm{u})\phi(l_{(Y,\bm{X}_{-1})}(\bm{u})|\bm{X}_{-1})]\,d\bm{u}_{-j}.
\end{align*}
For condition (iv) of Lemma \ref{lem:d19}, we show that for $\delta>0$, the following integrals are finite:
\begin{align*}
I_y:&=\int_{S_y}\int_{-\delta}^{\delta}\biggl|\frac{\partial G}{\partial y}(y+\theta,\bm{u},\bm{x}_{-1})\biggl|\,d\theta\,dF_{\bm{X}_{-1}}(\bm{x}_{-1})\,d\bm{u},\\
I_{u_1}:&=\int_{S_{u_1}}\int_{-\delta}^{\delta}\biggl|\frac{\partial G}{\partial u_1}(y,\bm{u}+\theta\bm{e}_1,\bm{x}_{-1})\biggl|\,d\theta\,dF_{\bm{X}_{-1}, Y}(\bm{x}_{-1}, y)\,d\bm{u}_{-1},\\
I_{u_j}:&=\int_{S_{u_j}}\int_{-\delta}^{\delta}\biggl|\frac{\partial G}{\partial u_j}(y,\bm{u}+\theta\bm{e}_j,\bm{x}_{-1})\biggl|\,d\theta\,dF_{\bm{X}_{-1}, Y}(\bm{x}_{-1}, y)\,d\bm{u}_{-j},
\end{align*}
for $j=2,\dots,n$ where $(\bm{e}_k)_l=1$ for $k=l$ and $0$ otherwise. Recall that $\phi$ is a continuous density by Assumption \ref{asmp:t} (\ref{asmp:dens}) and $\partial_y l_{(y,\bm{x}_{-1})}(\bm{u})\geq 0$ because $l_{(y,\bm{x}_{-1})}$ is strictly increasing in $y$. Therefore,
\begin{equation*}
\frac{\partial G}{\partial y}(y+\theta,\bm{u},\bm{x}_{-1})=\partial_y l_{(y+\theta,\bm{x}_{-1})}(\bm{u})\phi(l_{(y+\theta,\bm{x}_{-1})}(\bm{u})|\bm{x}_{-1})\ge0,
\end{equation*}
for all $y\in\mathbb{R}$, $\theta\in(-\delta,\delta)$, $\bm{u}\in U$ and $\bm{x}_{-1}\in\mathbb{R}^{n-1}$. We can also see the above inequality by noting that $G$ is the conditional probability distribution of $Y$ given $\bm{X}_{-1}$, and therefore its partial derivative w.r.t $y$ is a conditional probability density. Now, observe that:
\begin{align*}
I_y&=\int_{S_y}\int_{-\delta}^{\delta}\frac{\partial G}{\partial y}(y+\theta,\bm{u},\bm{x}_{-1})\,d\theta\,dF_{\bm{X}_{-1}}(\bm{x}_{-1})\,d\bm{u}
\\
&=\int_{S_y}(G(y+\delta,\bm{u},\bm{x}_{-1})-G(y-\delta,\bm{u},\bm{x}_{-1}))\,dF_{\bm{X}_{-1}}(\bm{x}_{-1})\,d\bm{u}
\\
&=\int_{U}\mathbb{E}[G(y+\delta,\bm{u},\bm{X}_{-1})-G(y-\delta,\bm{u},\bm{X}_{-1})]\,d\bm{u}
\\
&=\int_{U}\biggl(F(y+\delta,\bm{u})-F(y-\delta,\bm{u})\biggl)\,d\bm{u}<\infty.
\end{align*}
To prove $I_{u_1}$ is finite, observe that since the integrand is positive, we can change the order of integration by Tonelli's theorem:
\begin{align*}
I_{u_1}&=\int_{-\delta}^{\delta}\int_{S_{u_1}}\biggl|\frac{\partial G}{\partial u_1}(y,\bm{u}+\theta\bm{e}_1,\bm{x}_{-1})\biggl|\,dF_{\bm{X}_{-1},Y}(\bm{x}_{-1},y)\,d\bm{u}_{-1}\,d\theta
\\
&=
\int_{-\delta}^{\delta}\int_{S_{u_1}}\max\biggl\{\frac{\partial G}{\partial u_1}(y,\bm{u}+\theta\bm{e}_1,\bm{x}_{-1}),-\frac{\partial G}{\partial u_1}(y,\bm{u}+\theta\bm{e}_1,\bm{x}_{-1})\biggl\}\,dF_{\bm{X}_{-1},Y}(\bm{x}_{-1},y)\,d\bm{u}_{-1}\,d\theta
\\
&=
\int_{-\delta}^{\delta}\int_{U\setminus\{\mathbb{R}\setminus\{0\}\}}\mathbb{E}\biggl[\mathbb{E}\biggl[\max\biggl\{\frac{\partial G}{\partial u_1}(y,\bm{u}+\theta\bm{e}_1,\bm{X}_{-1}),-\frac{\partial G}{\partial u_1}(y,\bm{u}+\theta\bm{e}_1,\bm{X}_{-1})\biggl\}~\bigg| ~Y = y\biggl]\biggl]\,d\bm{u}_{-1}\,d\theta<\infty,
\end{align*}
where we note that by Assumption \ref{asmp:t} (\ref{asmp:fint}), the conditional expectation is bounded. The proof of $I_{u_j}<\infty$ for $j=2,\dots,n$ follows the same approach. Therefore, by  Lemma \ref{lem:d19} we conclude that $F$ is continuously partially differentiable in $y$ and $u_i$ for $i=1,\dots,n$ with derivatives:
\begin{equation*}
\frac{\partial F}{\partial y}(y,\bm{u})=\mathbb{E}\biggl[\frac{\partial G}{\partial y}(y,\bm{u},\bm{X}_{-1})\biggl]\quad\text{ and }\quad\frac{\partial F}{\partial u_i}(y,\bm{u})=\mathbb{E}\biggl[\frac{\partial G}{\partial u_i}(y,\bm{u},\bm{X}_{-1})\biggl].
\end{equation*}
\end{proof}

\begin{remark}
Lemma \ref{lem:F} also holds if $g_{(x_1,\bm{x}_{-1})}$ is strictly decreasing in $x_1$ with a change of sign of the partial derivatives in Equations \eqref{eqn:pFy} and \eqref{eqn:pFu}. The proof of the decreasing case follows using similar steps as the increasing case, and noting that for the decreasing case: 
\begin{equation*}
G(y,\bm{u},\bm{x}_{-1})=\mathbb{P}(X_1\leq l_{(y,\bm{x}_{-1})}(\bm{u})|\bm{X}_{-1}=\bm{x}_{-1})=1-\mathbb{P}(g_{\bm{X}}(\bm{u})\leq y|\bm{X}_{-1}=\bm{x}_{-1}),
\end{equation*}
and again $F$ can be written in terms of $G$:
\begin{equation*}
F(y,\bm{u})=1-\mathbb{E}[G(y,\bm{u},\bm{X}_{-1})].
\end{equation*}
Finally, to prove that $I_y<\infty$, we note that:
\begin{equation*}
\biggl|\frac{\partial G}{\partial y}(y+\theta,\bm{u},\bm{x}_{-1})\biggl|=-\frac{\partial G}{\partial y}(y+\theta,\bm{u},\bm{x}_{-1}),
\end{equation*}
since $\partial_y l_{(y+\theta,\bm{x}_{-1})}(\bm{u})\leq0$ for all $y\in\mathbb{R}$, $\theta\in(-\delta,\delta)$, $\bm{u}\in U$, and $\bm{x}\in\mathbb{R}^{n-1}$.
\end{remark}

\begin{remark}\label{rem:dens}
If Assumption \ref{asmp:t} (\ref{asmp:mi})-(\ref{asmp:fint}) are satisfied then, for any $\bm{u}\in U$, the random variable $Y=g_{\bm{X}}(\bm{u})$ has a continuous probability density function given by:
\begin{equation*}
f_Y(y)=\mathbb{E}[\partial_y l_{(y,\bm{X}_{-1})}(\bm{u})\phi( l_{(y,\bm{X}_{-1})}(\bm{u})|\bm{X}_{-1})].
\end{equation*}
To see this, note that for fixed $\bm{u}\in U$, $F$ and $F_Y$ are identical, i.e. $F(y,\bm{u})=F_Y(y)$ for all $y\in\mathbb{R}$. The continuous partial derivative of $F$ with respect to $y$ is then given in Lemma \ref{lem:F} as:
\begin{equation*}
\frac{\partial F}{\partial y}(y,\bm{u})=\frac{\partial F_Y}{\partial y}(y)=f_Y(y)=\mathbb{E}[\partial_y l_{(y,\bm{X}_{-1})}(\bm{u})\phi(l_{(y,\bm{X}_{-1})}(\bm{u})|\bm{X}_{-1})].
\end{equation*}
Therefore, we see that Assumption \ref{asmp:t} (\ref{asmp:elq}) corresponds to the gradient of the distribution function $F_Y$ being greater than that of the lambda function at the point $y=-\rho_{\Lambda}(\bm{u})$.

The strictly decreasing case follows from the same argument with a change of sign. Note that the density $f_Y(y)$ is indeed positive for the decreasing case, because $l_{(y,\bm{x}_{-1})}$ is decreasing in $y$ and therefore $\partial_y l_{(y,\bm{x}_{-1})}(\bm{u})\leq0$ for all $y\in\mathbb{R}$, $\bm{u}\in U$, and $\bm{x}_{-1}\in\mathbb{R}^{n-1}$.
\end{remark}

\begin{lemma}\label{lem:cexp}
If Assumption \ref{asmp:t} (\ref{asmp:mi})-(\ref{asmp:fint}) are satisfied then, for any $\bm{u}\in U$ and $i=1,\dots,n$, we have:
\begin{equation}
\mathbb{E}[\partial_{u_i} g_{\bm{X}}(\bm{u})~|~g_{\bm{X}}(\bm{u})=y]=-\frac{\mathbb{E}[\partial_{u_i} l_{(y,\bm{X}_{-1})}(\bm{u})\phi(l_{(y,\bm{X}_{-1})}(\bm{u})|\bm{X}_{-1})]}{\mathbb{E}[\partial_y l_{(y,\bm{X}_{-1})}(\bm{u})\phi(l_{(y,\bm{X}_{-1})}(\bm{u})|\bm{X}_{-1})]}.
\end{equation}
\end{lemma}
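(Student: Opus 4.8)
The plan is to verify that the claimed right-hand side, viewed as a function of $y$, satisfies the defining property of the conditional expectation $\mathbb{E}[\partial_{u_i} g_{\bm{X}}(\bm{u})\mid g_{\bm{X}}(\bm{u})=y]$, which avoids any delicate pointwise division by random quantities. Write $Y:=g_{\bm{X}}(\bm{u})$ and let $R(y)$ denote the claimed ratio. By Remark \ref{rem:dens} the law of $Y$ has continuous density $f_Y$, so it suffices to show that for every bounded measurable test function $h$,
\begin{equation*}
\mathbb{E}[\partial_{u_i} g_{\bm{X}}(\bm{u})\, h(Y)] = \int_{\mathbb{R}} R(y)\, h(y)\, f_Y(y)\, dy ,
\end{equation*}
since the right-hand side is exactly $\mathbb{E}[R(Y)\,h(Y)]$; this characterises $R$ as the desired conditional expectation at every $y$ with $f_Y(y)>0$.

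First I would record the two implicit-function identities obtained by differentiating the defining relation $g_{(l_{(y,\bm{x}_{-1})}(\bm{u}),\,\bm{x}_{-1})}(\bm{u})=y$, valid for fixed $\bm{x}_{-1}$ by Assumption \ref{asmp:t}\eqref{asmp:mi} and \eqref{asmp:cinv}. Differentiating in $y$ and in $u_i$ respectively gives
\begin{equation*}
\partial_{x_1} g \cdot \partial_y l = 1 , \qquad \partial_{u_i} g + \partial_{x_1} g \cdot \partial_{u_i} l = 0 ,
\end{equation*}
with $g$ evaluated at $x_1 = l_{(y,\bm{x}_{-1})}(\bm{u})$. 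Eliminating $\partial_{x_1} g$ yields the pointwise relation
\begin{equation*}
\partial_{u_i} g_{(l_{(y,\bm{x}_{-1})}(\bm{u}),\,\bm{x}_{-1})}(\bm{u})\, \partial_y l_{(y,\bm{x}_{-1})}(\bm{u}) = -\,\partial_{u_i} l_{(y,\bm{x}_{-1})}(\bm{u}) ,
\end{equation*}
which is the engine of the computation.

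Next I would evaluate the left-hand side by conditioning on $\bm{X}_{-1}$. For fixed $\bm{x}_{-1}$ the variable $X_1$ has density $\phi(\cdot\mid\bm{x}_{-1})$ (Assumption \ref{asmp:t}\eqref{asmp:dens}), so the inner expectation is $\int \partial_{u_i} g_{(x_1,\bm{x}_{-1})}(\bm{u})\, h\big(g_{(x_1,\bm{x}_{-1})}(\bm{u})\big)\, \phi(x_1\mid\bm{x}_{-1})\, dx_1$, where $\partial_{u_i} g_{\bm{X}}(\bm{u})$ is understood as the explicit $u_i$-partial evaluated at $\bm{X}$. Applying the strictly monotone change of variables $y=g_{(x_1,\bm{x}_{-1})}(\bm{u})$, so that $x_1=l_{(y,\bm{x}_{-1})}(\bm{u})$ and $dx_1=\partial_y l\, dy$, and then substituting the identity above, the factor $\partial_y l$ cancels and the inner expectation collapses to $-\int h(y)\, \partial_{u_i} l_{(y,\bm{x}_{-1})}(\bm{u})\, \phi(l_{(y,\bm{x}_{-1})}(\bm{u})\mid\bm{x}_{-1})\, dy$. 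Taking the outer expectation over $\bm{X}_{-1}$ and interchanging it with the $dy$-integral produces $-\int h(y)\, \mathbb{E}[\partial_{u_i} l_{(y,\bm{X}_{-1})}(\bm{u})\, \phi(l_{(y,\bm{X}_{-1})}(\bm{u})\mid\bm{X}_{-1})]\, dy$; recognising the bracketed term as the numerator $\partial_{u_i}F(y,\bm{u})$ and $f_Y(y)$ as the denominator via Lemma \ref{lem:F} identifies the integrand with $R(y)\,f_Y(y)$ and closes the verification.

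The main obstacle will be justifying the interchange of the expectation over $\bm{X}_{-1}$ with the integration over $y$ (a Fubini--Tonelli step), in tandem with the legitimacy of the change of variables for a.e.\ $\bm{x}_{-1}$. This is precisely where Assumption \ref{asmp:t}\eqref{asmp:fint} enters: the uniform boundedness in $y$ of the maps $y\mapsto\mathbb{E}[\partial_{u_i} l_{(y,\bm{X}_{-1})}(\bm{u})\,\phi(\cdot)]$ and $y\mapsto\mathbb{E}[\partial_y l_{(y,\bm{X}_{-1})}(\bm{u})\,\phi(\cdot)]$, together with boundedness of $h$ on the bounded range of $Y$, supplies the dominating integrability needed to swap the order. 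Strict monotonicity (Assumption \ref{asmp:t}\eqref{asmp:mi}) guarantees $\partial_y l\ge 0$ and that the substitution is a bijection, while Assumption \ref{asmp:t}\eqref{asmp:cinv} provides the differentiability of $l$ used throughout; the final division by $f_Y(y)$ is harmless, as the formula is only needed where $f_Y(y)>0$, for instance at $y=-\rho_\Lambda(\bm{u})$ by \eqref{eq:asm}.
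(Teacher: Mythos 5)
Your proposal is correct and follows essentially the same route as the paper's proof: differentiate the identity $g_{(l_{(y,\bm{x}_{-1})}(\bm{u}),\bm{x}_{-1})}(\bm{u})=y$ implicitly in $y$ and $u_i$, condition on $\bm{X}_{-1}$ and apply the monotone change of variables, interchange the $\bm{X}_{-1}$-expectation with the $dy$-integral via Fubini--Tonelli under Assumption \ref{asmp:t} (\ref{asmp:fint}), and identify the result with $\mathbb{E}[R(Y)h(Y)]$ using Lemma \ref{lem:F} and Remark \ref{rem:dens}. The one slip is in your Fubini justification: the range of $Y$ need not be bounded, so a merely bounded test function $h$ together with the uniform bound of Assumption \ref{asmp:t} (\ref{asmp:fint}) does not give $\int_{\mathbb{R}}|h(y)|\,\mathbb{E}\bigl[|\partial_{u_i} l_{(y,\bm{X}_{-1})}(\bm{u})|\,\phi(l_{(y,\bm{X}_{-1})}(\bm{u})|\bm{X}_{-1})\bigr]\,dy<\infty$; you should instead take $h$ Lebesgue-integrable (as the paper does with its test class $k$) or bounded with compact support, either of which still characterises the conditional expectation almost surely with respect to the law of $Y$.
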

\begin{proof}
The proof method is inspired by the proof of Lemma 1 in \cite{tm16}. Our proof, however, considers a portfolio $g_{\bm{X}}(\bm{u})$ on the set $U$, whereas \cite{tm16} do not use asset units. 

Consider the following expectation for an absolutely integrable function $k$, i.e. $\int_{\mathbb{R}}|k(y)|\mathrm dy<\infty$, and fixed $\bm{u}$:
\begin{align}\label{lem:ex}
\mathbb{E}[k(Y)\partial_{u_i}g_{\bm{X}}(\bm{u})]&=\mathbb{E}[\mathbb{E}[k(Y)\partial_{u_i}g_{\bm{X}}(\bm{u})~|~\bm{X}_{-1}]]\notag\\
&=\mathbb{E}\biggl[\int_{-\infty}^{+\infty}\partial_{u_i} g_{(x_1,\bm{X}_{-1})}(\bm{u})k(g_{(x_1,\bm{X}_{-1})}(\bm{u}))\phi(x_1|\bm{X}_{-1})dx_1\biggl].
\end{align} 
We now apply a change of variable:
\begin{equation}\label{eqn:chova}
x_1=l_{(y,\bm{x}_{-1})}(\bm{u})\iff y=g_{(x_1,\bm{x}_{-1})}(\bm{u}).
\end{equation}
For any $\bm{u}\in U$ and $\bm{x}_{-1}\in\mathbb{R}^{n-1}$, we can write (\ref{eqn:chova}) as:
\begin{equation}\label{eqn:linv}
x_1=l_{(g_{(x_1,\bm{x}_{-1})}(\bm{u}),\bm{x}_{-1})}(\bm{u})\iff y=g_{(l_{(y,\bm{x}_{-1})}(\bm{u}),\bm{x}_{-1})}(\bm{u}).
\end{equation}
so that:
\begin{equation}\label{eqn:dchova}
\frac{\mathrm dx_1}{\mathrm dy}=\partial_y l_{(y,\bm{x}_{-1})}(\bm{u})|_{y=g_{(x_1,\bm{x}_{-1})}(\bm{u})}=(\partial_{x_1} g_{(x_1,\bm{x}_{-1})}(\bm{u})|_{x_1=l_{(y,\bm{x}_{-1})}(\bm{u})})^{-1}\,,
\end{equation}
where we used the representation of derivatives of inverse functions. 
Next, we compute the partial derivative of the equation $y=g_{(l_{(y,\bm{x}_{-1})}(\bm{u}),\bm{x}_{-1})}(\bm{u})$ in (\ref{eqn:linv}) with respect to $u_i$, $i=1,\dots,n$, and note that the derivatives of the LHS are zero, i.e. $\partial y/\partial u_i=0$ for $i=1,\dots,n$. For the RHS, we have:
\begin{equation*}
\partial_{u_i}g_{(l_{(y,\bm{x}_{-1})}(\bm{u}),\bm{x}_{-1})}(\bm{u})=\partial_{u_i}l_{(y,\bm{x}_{-1})}(\bm{u})\partial_{x_1} g_{(x_1,\bm{x}_{-1})}(\bm{u})|_{x_1=l_{(y,\bm{x}_{-1})}(\bm{u})}+\partial_{u_i}g_{(x_1,\bm{x}_{-1})}(\bm{u})|_{x_1=l_{(y,\bm{x}_{-1})}(\bm{u})}.
\end{equation*}
From this we deduce that:
\begin{equation}\label{eqn:chova2}
\partial_{u_i}l_{(y,\bm{x}_{-1})}(\bm{u})\partial_{x_1} g_{(x_1,\bm{x}_{-1})}(\bm{u})|_{x_1=l_{(y,\bm{x}_{-1})}(\bm{u})}=-\partial_{u_i}g_{(x_1,\bm{x}_{-1})}(\bm{u})|_{x_1=l_{(y,\bm{x}_{-1})}(\bm{u})}\,.
\end{equation}
Using (\ref{eqn:dchova}) and (\ref{eqn:chova2}), our expectation in (\ref{lem:ex}) now becomes:  
\begin{align}
\mathbb{E}[k(Y)\partial_{u_i}g_{\bm{X}}(\bm{u})]&=\mathbb{E}\biggl[-\int_{-\infty}^{+\infty}k(y)\phi(l_{(y,\bm{X}_{-1})}(\bm{u})|\bm{X}_{-1})\partial_{u_i}l_{(y,\bm{X}_{-1})}(\bm{u})dy\biggl]\notag\\
&=\mathbb{E}\biggl[-\int_{-\infty}^{+\infty}k(y)\frac{\partial_{u_i}l_{(y,\bm{X}_{-1})}(\bm{u})\phi(l_{(y,\bm{X}_{-1})}(\bm{u})|\bm{X}_{-1})}{f_Y(y)}f_Y(y)dy\biggl]\label{eqn:intsw1}\\
&=-\int_{-\infty}^{+\infty}k(y)\frac{\mathbb{E}[\partial_{u_i}l_{(y,\bm{X}_{-1})}(\bm{u})\phi(l_{(y,\bm{X}_{-1})}(\bm{u})|\bm{X}_{-1})]}{f_Y(y)}f_Y(y)dy\label{eqn:intsw2}\\
&=\mathbb{E}[k(Y)q(Y)]\notag,
\end{align}
where:
\begin{equation*}
q(y)=-\frac{\mathbb{E}[\partial_{u_i}l_{(y,\bm{X}_{-1})}(\bm{u})\phi(l_{(y,\bm{X}_{-1})}(\bm{u})|\bm{X}_{-1})]}{f_Y(y)}.
\end{equation*}
Notice that we have switched the order of integration and expectation to move from (\ref{eqn:intsw1}) to (\ref{eqn:intsw2}). This can be justified by considering the following integral on the product space $\mathbb{R}\times\mathbb{R}^{n-1}$:
\begin{equation}\label{eqn:prodint}
I_{k}:=\int_{\mathbb{R}\times\mathbb{R}^{n-1}}|k(y)\phi(l_{(y,\bm{x}_{-1})}(\bm{u})|\bm{x}_{-1})\partial_{u_i}l_{(y,\bm{x}_{-1})}(\bm{u})|\,dy\, dF_{\bm{X}_{-1}}(\bm{x}_{-1}).
\end{equation}
If the integral (\ref{eqn:prodint}) is finite then changing the order of integrals in (\ref{eqn:intsw2}) is justified by Fubini's theorem. Observe that since the integrand is non-negative, we can apply Tonelli's theorem to (\ref{eqn:prodint}):
\begin{align}
I_k&=\int_{\mathbb{R}}\int_{\mathbb{R}^{n-1}}|k(y)\phi(l_{(y,\bm{x}_{-1})}(\bm{u})|\bm{x}_{-1})\partial_{u_i}l_{(y,\bm{x}_{-1})}(\bm{u})|\,dF_{\bm{X}_{-1}}(\bm{x}_{-1})\,dy\notag
\\
&=
\int_{\mathbb{R}}\int_{\mathbb{R}^{n-1}}|k(y)||\phi(l_{(y,\bm{x}_{-1})}(\bm{u})|\bm{x}_{-1})\partial_{u_i}l_{(y,\bm{x}_{-1})}(\bm{u})|\,dF_{\bm{X}_{-1}}(\bm{x}_{-1})\,dy\notag
\\
&=
\int_{\mathbb{R}}|k(y)|\int_{\mathbb{R}^{n-1}}|\phi(l_{(y,\bm{x}_{-1})}(\bm{u})|\bm{x}_{-1})\partial_{u_i}l_{(y,\bm{x}_{-1})}(\bm{u})|\,dF_{\bm{X}_{-1}}(\bm{x}_{-1})\,dy\notag
\\
&=
\int_{\mathbb{R}}|k(y)|\biggl(\mathbb{E}\biggl[\max\biggl(\frac{\partial G}{\partial u_i}(y,\bm{u},\bm{X}_{-1}),-\frac{\partial G}{\partial u_i}(y,\bm{u},\bm{X}_{-1})\biggl)\biggl]\biggl)\,dy,\label{eqn:Ik}
\end{align}
where, as in the proof of Lemma \ref{lem:F}, we used that:
\begin{equation*}
\frac{\partial G}{\partial u_i}(y,\bm{u},\bm{x}_{-1})=\partial_{u_i}l_{(y,\bm{x}_{-1})}(\bm{u})\phi(l_{(y,\bm{x}_{-1})}(\bm{u})|\bm{x}_{-1}).
\end{equation*}
By Assumption \ref{asmp:t} (\ref{asmp:fint}), the expectation in the integrand of (\ref{eqn:Ik}) is finite and since $k$ is absolutely integrable, we conclude that $I_k<\infty$. Using the explicit form of $f_Y$ from Remark \ref{rem:dens}, we conclude that:
\begin{equation*}
\mathbb{E}[\partial_{u_i}g_{\bm{X}}(\bm{u})~|~Y=y]=-\frac{\mathbb{E}[\partial_{u_i}l_{(y,\bm{X}_{-1})}(\bm{u})\phi(l_{(y,\bm{X}_{-1})}(\bm{u})|\bm{X}_{-1})]}{\mathbb{E}[\partial_y l_{(y,\bm{X}_{-1})}(\bm{u})\phi( l_{(y,\bm{X}_{-1})}(\bm{u})|\bm{X}_{-1})]}.
\end{equation*}
\end{proof}

\begin{remark}\label{rem:hongps}
Using Lemmas \ref{lem:F} and \ref{lem:cexp} and Remark \ref{rem:dens}, one can write the derivative of the portfolio with respect to its composition as:
\begin{align}
\frac{\partial F}{\partial u_i}(y,\bm{u})&=\mathbb{E}[\partial_{u_i} l_{(y,\bm{X}_{-1})}(\bm{u})\phi(l_{(y,\bm{X}_{-1})}(\bm{u})|\bm{X}_{-1})]\notag\\
&=-\mathbb{E}[\partial_y l_{(y,\bm{X}_{-1})}(\bm{u})\phi( l_{(y,\bm{X}_{-1})}(\bm{u})|\bm{X}_{-1})]\mathbb{E}[\partial_{u_i}g_{\bm{X}}(\bm{u})~|~Y=y]\notag\\
&=-f_Y(y)\mathbb{E}[\partial_{u_i}g_{\bm{X}}(\bm{u})~|~g_{\bm{X}}(\bm{u})=y].
\end{align}
\end{remark}

Using Assumption \ref{asmp:h}, we demonstrate Proposition \ref{propo:rhocont} which proves partial differentiability of lambda quantiles without assuming $g_{\bm{X}}(\bm{u})$ is $\mathbb{P}$-a.s. strictly increasing. Instead, we assume that the portfolio has a continuous density in a neighbourhood of the lambda quantile. Assumption \ref{asmp:h} and Proposition \ref{propo:rhocont} are then used to prove Theorem \ref{thm:tlqrc} with condition \ref{thmasmp:h}.

\begin{proposition}\label{propo:rhocont}
Suppose Assumption \ref{asmp:h} \eqref{asmp:h2a}, \eqref{asmp:h2b} and \eqref{asmp:grad} are satisfied and $\Lambda$ is continuously differentiable in a neighbourhood of $-\rho_{\Lambda}(\bm{u})$. Then, $\rho_{\Lambda}$ is continuously partially differentiable in $U$ with derivatives:
\begin{equation*}
\frac{\partial \rho_\Lambda}{\partial u_i}(\bm{u})=\biggl(\frac{\partial H}{\partial y}(-\rho_{\Lambda}(\bm{u}),\bm{u})\biggl)^{-1}\frac{\partial H}{\partial u_i}(-\rho_{\Lambda}(\bm{u}),\bm{u}),
\end{equation*}
where $H(y,\bm{u}):=F(y,\bm{u})-\Lambda(y)$.
\end{proposition}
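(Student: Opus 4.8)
The plan is to read the defining relation of the lambda quantile as an implicit equation $H(y,\bm{u})=0$ and to apply the implicit function theorem. Fix $\bm{u}_0\in U$ and write $y^\ast:=-\rho_\Lambda(\bm{u}_0)$, so that by Definition \ref{def:lq} the point $y^\ast$ is the infimum of $\{y\in\mathbb{R}\,:\,F(y,\bm{u}_0)>\Lambda(y)\}=\{y\,:\,H(y,\bm{u}_0)>0\}$. First I would show that $y^\ast$ is a zero of $H(\cdot,\bm{u}_0)$. Assumption \ref{asmp:h} \eqref{asmp:h2a} makes $y\mapsto F(y,\bm{u}_0)$ continuous near $y^\ast$, and $\Lambda$ is continuous there by hypothesis, so $H(\cdot,\bm{u}_0)$ is continuous near $y^\ast$. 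For $y<y^\ast$ the infimum gives $H(y,\bm{u}_0)\le 0$, whence $H(y^\ast,\bm{u}_0)\le 0$ by a left limit; while choosing $y_k\downarrow y^\ast$ with $H(y_k,\bm{u}_0)>0$ yields $H(y^\ast,\bm{u}_0)\ge 0$. Hence $H(y^\ast,\bm{u}_0)=0$.

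Next I would verify the hypotheses of the implicit function theorem at $(y^\ast,\bm{u}_0)$. By Assumption \ref{asmp:h} \eqref{asmp:h2a} one has $\partial_y F(y,\bm{u})=f_Y(y)$ continuous near $y^\ast$, and since $\Lambda\in C^1$ there, $\partial_y H=f_Y-\Lambda'$ is continuous; the partials $\partial_{u_i}H=\partial_{u_i}F$ exist and are jointly continuous by Assumption \ref{asmp:h} \eqref{asmp:h2b}. Thus $H$ is continuously differentiable in a neighbourhood of $(y^\ast,\bm{u}_0)$. The non-degeneracy condition $\partial_y H(y^\ast,\bm{u}_0)\neq 0$ is delivered by Assumption \ref{asmp:h} \eqref{asmp:grad}, which in fact gives the strict inequality $f_Y(y^\ast)-\Lambda'(y^\ast)>0$.

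The implicit function theorem then furnishes a neighbourhood $V$ of $\bm{u}_0$ and a unique continuously differentiable branch $\psi:V\to\mathbb{R}$ with $\psi(\bm{u}_0)=y^\ast$, $H(\psi(\bm{u}),\bm{u})=0$, and $\partial_{u_i}\psi(\bm{u})=-\bigl(\partial_y H\bigr)^{-1}\partial_{u_i}H$ evaluated at $(\psi(\bm{u}),\bm{u})$. Since $\rho_\Lambda=-\psi$ on $V$, differentiating through the minus sign produces exactly the claimed formula, the inverse being well defined because $\partial_y H>0$. The main obstacle is the identification $\psi(\bm{u})=-\rho_\Lambda(\bm{u})$ on $V$: the theorem only supplies a local solution branch, whereas $-\rho_\Lambda$ is a global infimum, so in principle a competing crossing could lie outside the window on which the branch is defined. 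Here the strict positivity in Assumption \ref{asmp:h} \eqref{asmp:grad} is decisive and does double duty. Continuity of $\partial_y H$ upgrades it to $\partial_y H>0$ on a whole strip, so $H(\cdot,\bm{u})$ is strictly increasing across $y^\ast$ uniformly for $\bm{u}\in V$; this transversal crossing both makes $\psi(\bm{u})$ the unique zero in the strip and forces $\bm{u}\mapsto\rho_\Lambda(\bm{u})$ to be continuous at $\bm{u}_0$. Continuity then keeps $-\rho_\Lambda(\bm{u})$ inside the window for $\bm{u}$ near $\bm{u}_0$, where it solves $H=0$ and must therefore coincide with the unique branch $\psi$, completing the argument.
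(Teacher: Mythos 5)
Your proposal is correct and follows essentially the same route as the paper: establish that $y=-\rho_{\Lambda}(\bm{u})$ solves $H(y,\bm{u})=0$, check that $H$ is $C^1$ near that point using Assumption \ref{asmp:h} \eqref{asmp:h2a}--\eqref{asmp:h2b} and the smoothness of $\Lambda$, and invoke the implicit function theorem with non-degeneracy supplied by Assumption \ref{asmp:h} \eqref{asmp:grad}. You are in fact more careful than the paper on the one delicate step --- identifying the local IFT branch $\psi$ with the globally defined infimum $-\rho_{\Lambda}$ --- which the paper's proof passes over silently; the only remaining soft spot in your closing argument is that the hypotheses control $H$ only in a neighbourhood of $y=-\rho_{\Lambda}(\bm{u})$, so the lower half of your continuity claim (that no competing crossing of $F$ and $\Lambda$ can appear far below the local window under a small perturbation of $\bm{u}$) is not literally forced by the stated assumptions, but this is a gap the paper's own proof shares rather than one you introduce.
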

\begin{proof}
Fix $\bm{u}\in U$. Then, $g_{\bm{X}}(\bm{u})$ is $\mathbb{P}$-a.s. a continuous random variable in a neighbourhood of $-\rho_{\Lambda}(\bm{u})$. Therefore, it holds that:
\begin{equation*}
F(-\rho_{\Lambda}(\bm{u}),\bm{u})=\Lambda(-\rho_{\Lambda}(\bm{u})).
\end{equation*}
Then, $y=-\rho_{\Lambda}(\bm{u})$ is a solution of $H(y,\bm{u})=0$ for all $\bm{u}\in U$, i.e. $H(-\rho_{\Lambda}(\bm{u}), \bm{u}) = 0$ for all $\bm{u}\in U$. Note that $H$ is continuously partially differentiable in $y$ and $u_i$, $i=1,\dots,n$, since by assumption, both $f_Y$ and $\Lambda'$ are continuous in the same neighbourhood of $-\rho_{\Lambda}(\bm{u})$. Also, observe that:
\begin{equation*}
\frac{\partial H}{\partial y}(y,\bm{u})\biggl|_{y=-\rho_{\Lambda}(\bm{u})}=f_Y(-\rho_{\Lambda}(\bm{u}))-\Lambda'(-\rho_{\Lambda}(\bm{u}))>0
\end{equation*}
by Assumption \ref{asmp:h} (\ref{asmp:grad}). Applying the implicit function theorem to $H$ and using Assumption \ref{asmp:h} (\ref{asmp:h2b}), we conclude that $-\rho_{\Lambda}$ is continuously partially differentiable in $U$ with derivatives:
\begin{equation*}
\frac{\partial (-\rho_\Lambda)}{\partial u_i}(\bm{u})=-\biggl(\frac{\partial H}{\partial y}(-\rho_{\Lambda}(\bm{u}),\bm{u})\biggl)^{-1}\frac{\partial H}{\partial u_i}(-\rho_{\Lambda}(\bm{u}),\bm{u}).
\end{equation*}
\end{proof}

We now define the portfolio density adjustment which is important for both the risk contributions and Euler decomposition of lambda quantiles. Also, we will show that the portfolio density adjustment evaluated at the point $y=-\rho_{\Lambda}(\bm{u})$ corresponds to the homogeneity degree of lambda quantiles.

\begin{definition}\label{defn:rda}
For a continuous random variable $Y\in\mathcal{X}$ and continuously differentiable lambda function, define the \emph{portfolio density adjustment of $Y$ with respect to $\Lambda$} as the function $\eta_{\Lambda,Y}:\mathbb{R}\rightarrow\mathbb{R}\cup\{+\infty\}$ given by
\begin{equation}\label{rdp}
\eta_{\Lambda,Y}(y):=\frac{f_Y(y)}{f_Y(y)-\Lambda'(y)},
\end{equation}
where we use the convention that $\frac{1}{0} = + \infty$.
\end{definition}
\begin{remark}
Observe that $\eta_{\Lambda,Y}(y)=1$ at a given $y$ if, and only if, $\Lambda'(y)=0$. Also, for a fixed lambda function, the portfolio density adjustment $\eta_{\Lambda,Y}$ is law invariant with respect to the random variable $Y$, that is, for random variables $Y_1,Y_2\in\mathcal{X}$ that are equal in distribution, i.e. $Y_1\,{\buildrel d \over =}\,Y_2$, it holds  $\eta_{\Lambda,Y_1}(y)=\eta_{\Lambda,Y_2}(y)$ for all $y\in\mathbb{R}$.
\end{remark}

The following theorem states the conditions under which lambda quantiles are continuously partially differentiable in the space of portfolio compositions and provides closed form formulae of lambda quantile risk contributions. We prove Theorem \ref{thm:tlqrc} using two different approaches, which correspond to the use of Assumption \ref{asmp:t} and Assumption \ref{asmp:h}. Also, note that the assumptions for the lambda function are different for each approach.

\begin{theorem}\label{thm:tlqrc}
Suppose either:
\begin{enumerate}
\item $\Lambda$ is continuously differentiable on $\mathbb{R}$ and Assumption \ref{asmp:t} is satisfied,\label{thmasmp:t}\\
or
\item $\Lambda$ is continuously differentiable in a neighbourhood of $y=-\rho_{\Lambda}(\bm{u})$ and Assumption \ref{asmp:h} is satisfied.\label{thmasmp:h}
\end{enumerate}
Then, $\rho_{\Lambda}$ is continuously partially differentiable in $U$ with partial derivatives:
\begin{equation}\label{eqn:lqrc}
\frac{\partial \rho_{\Lambda}}{\partial u_i}(\bm{u})=-\eta_{\Lambda,Y}(-\rho_{\Lambda}(\bm{u}))\mathbb{E}[\partial_{u_i} g_{\bm{X}}(\bm{u})~|~Y=-\rho_{\Lambda}(\bm{u})],
\end{equation}
for $i=1,\dots,n$.
\end{theorem}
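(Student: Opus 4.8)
The plan is to reduce both cases to a single application of the implicit function theorem to $H(y,\bm{u}) := F(y,\bm{u}) - \Lambda(y)$, whose root in $y$ traces out $-\rho_\Lambda(\bm{u})$, and then to substitute the explicit forms of the partial derivatives of $F$ to collapse the resulting quotient into $\eta_{\Lambda,Y}$. First, under either hypothesis I would verify that $y^\ast := -\rho_\Lambda(\bm{u})$ is an exact solution of $H(y,\bm{u}) = 0$. The portfolio $Y = g_{\bm{X}}(\bm{u})$ admits a density near $y^\ast$ (continuous, by Remark \ref{rem:dens} under Assumption \ref{asmp:t}, and by Assumption \ref{asmp:h} (\ref{asmp:h2a}) otherwise), so $F_Y$ is continuous there; combined with continuity of $\Lambda$ and the definition $y^\ast = \inf\{y \mid F(y,\bm{u}) > \Lambda(y)\}$, a left/right limiting argument forces $F(y^\ast,\bm{u}) = \Lambda(y^\ast)$, i.e. $H(y^\ast,\bm{u}) = 0$. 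The transversality inequality $f_Y(y^\ast) > \Lambda'(y^\ast)$ -- which is Assumption \ref{asmp:t} (\ref{asmp:elq}) rewritten via Equation \eqref{eq:asm}, and directly Assumption \ref{asmp:h} (\ref{asmp:grad}) -- then gives $\partial_y H(y^\ast,\bm{u}) = f_Y(y^\ast) - \Lambda'(y^\ast) > 0$, so the nondegeneracy hypothesis of the implicit function theorem holds.

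For the first case, Lemma \ref{lem:F} already establishes that $F$ is continuously partially differentiable in $y$ and in each $u_i$, and since $\Lambda \in C^1(\mathbb{R})$ the same is true of $H$; the implicit function theorem thus applies and yields $\partial_{u_i}\rho_\Lambda(\bm{u}) = (\partial_y H)^{-1}\,\partial_{u_i}H$ evaluated at $(y^\ast,\bm{u})$. Substituting $\partial_y H = f_Y - \Lambda'$ and, from Remark \ref{rem:hongps}, $\partial_{u_i}H = \partial_{u_i}F = -f_Y\,\mathbb{E}[\partial_{u_i}g_{\bm{X}}(\bm{u}) \mid Y = y^\ast]$ (using that $\Lambda$ is free of $\bm{u}$), the factor $f_Y/(f_Y - \Lambda')$ assembles into $\eta_{\Lambda,Y}(y^\ast)$ by Definition \ref{defn:rda}, producing Equation \eqref{eqn:lqrc}. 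For the second case, Proposition \ref{propo:rhocont} already delivers continuous partial differentiability together with the identical formula $\partial_{u_i}\rho_\Lambda(\bm{u}) = (\partial_y H)^{-1}\partial_{u_i}H$; what remains is only to supply the closed form of $\partial_{u_i}F$ under Assumption \ref{asmp:h}, namely the probability-sensitivity identity $\partial_{u_i}F(y,\bm{u}) = -f_Y(y)\,\mathbb{E}[\partial_{u_i}g_{\bm{X}}(\bm{u}) \mid Y = y]$, after which the same algebraic collapse gives Equation \eqref{eqn:lqrc}.

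The hard part is establishing this last identity in the second case. It requires interchanging the $u_i$-derivative with the expectation defining $F$ and recognising the outcome as $-f_Y$ times a conditional expectation; the justification rests on the Lipschitz domination in Assumption \ref{asmp:h} (\ref{asmp:h1b}), which bounds the difference quotients of $g_{\bm{X}}$ uniformly and legitimises a dominated-convergence passage to the limit, together with the continuity of the conditional-expectation map in Assumption \ref{asmp:h} (\ref{asmp:h3}) and of $\partial_{u_i}F$ in Assumption \ref{asmp:h} (\ref{asmp:h2b}) to upgrade pointwise differentiability to continuous differentiability. By contrast, in the first case the analogous identity is comparatively painless, since Lemma \ref{lem:cexp} and Remark \ref{rem:hongps} have already carried out the change-of-variables computation; there the only delicate point is confirming that the crossing at $y^\ast$ is genuine and locally unique -- guaranteed precisely by the strict inequality $f_Y > \Lambda'$ -- so that the two-sided implicit function theorem, rather than a one-sided argument, is available.
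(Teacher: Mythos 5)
Your proposal is correct and follows essentially the same route as the paper: the implicit function theorem applied to $H(y,\bm{u})=F(y,\bm{u})-\Lambda(y)$ (this is exactly Proposition \ref{propo:rhocont}), combined with the closed form $\partial_{u_i}F(y,\bm{u})=-f_Y(y)\,\mathbb{E}[\partial_{u_i}g_{\bm{X}}(\bm{u})\mid Y=y]$ obtained from Remark \ref{rem:hongps} in the first case and from the probability-sensitivity identity of \cite{h09} in the second. The only difference is cosmetic: where the paper simply cites Theorem 1 of \cite{h09} for that identity under Assumption \ref{asmp:h}, you sketch its derivation via the Lipschitz domination and dominated convergence, which is precisely the content of that cited result.
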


\begin{proof}[Proof of Theorem \ref{thm:tlqrc}]
\textit{Proof using condition \ref{thmasmp:t}}.
By Lemma \ref{lem:F} and Remark \ref{rem:dens}, $Y=g_{\bm{X}}(\bm{u})$ is a continuous random variable with a continuous probability density function $f_Y$ and the partial derivatives $\partial_{u_i}F(y,\bm{u})$ are continuous in $y$ and $\bm{u}$ for $i=1,\dots,n$. Furthermore, with Assumption \ref{asmp:t} \eqref{asmp:elq}, we invoke Proposition \ref{propo:rhocont} to deduce that $\rho_{\Lambda}$ is continuously partially differentiable in $U$ with derivatives:
\begin{equation*}
\frac{\partial \rho_\Lambda}{\partial u_i}(\bm{u})=\frac{\partial_{u_i}F(-\rho_{\Lambda}(\bm{u}),\bm{u})}{f_Y(-\rho_{\Lambda}(\bm{u}))-\Lambda'(-\rho_{\Lambda}(\bm{u}))},
\end{equation*}
for $i=1,\dots,n$. By Remark \ref{rem:hongps}, we note that:
\begin{equation*}
\partial_{u_i}F(-\rho_{\Lambda}(\bm{u}),\bm{u})=-f_Y(-\rho_{\Lambda}(\bm{u}))\mathbb{E}[\partial_{u_i} g_{\bm{X}}(\bm{u})~|~g_{\bm{X}}(\bm{u})=-\rho_{\Lambda}(\bm{u})],
\end{equation*}
which concludes the proof using condition \ref{thmasmp:t}, since:
\begin{equation*}
\frac{\partial \rho_\Lambda}{\partial u_i}(\bm{u})=-\frac{f_Y(-\rho_{\Lambda}(\bm{u}))}{f_Y(-\rho_{\Lambda}(\bm{u}))-\Lambda'(-\rho_{\Lambda}(\bm{u}))}\mathbb{E}[\partial_{u_i} g_{\bm{X}}(\bm{u})~|~g_{\bm{X}}(\bm{u})=-\rho_{\Lambda}(\bm{u})].
\end{equation*}

\textit{Proof using condition \ref{thmasmp:h}}.
By Proposition \ref{propo:rhocont}, $\rho_{\Lambda}$ is continuously partially differentiable in $U$ with partial derivatives:
\begin{equation*}
\frac{\partial \rho_\Lambda}{\partial u_i}(\bm{u})=\biggl(\frac{\partial H}{\partial y}(-\rho_{\Lambda}(\bm{u}),\bm{u})\biggl)^{-1}\frac{\partial H}{\partial u_i}(-\rho_{\Lambda}(\bm{u}),\bm{u}),
\end{equation*}
where $H(y,\bm{u}):=F(y,\bm{u})-\Lambda(y)$. By Theorem 1 of \cite{h09}, we have:
\begin{equation*}
\partial_{u_i} F(y,\bm{u})=-f_Y(y)\mathbb{E}[\partial_{u_i}g_{\bm{X}}(\bm{u})~|~g_{\bm{X}}(\bm{u})=y],
\end{equation*}
for $i=1,\dots,n$, which are continuous in a neighbourhood of $y=-\rho_{\Lambda}(\bm{u})$ by Assumption \ref{asmp:h} (\ref{asmp:h2b}). Furthermore, observe that:
\begin{equation*}
\frac{\partial H}{\partial y}(y,\bm{u})=f_Y(y)-\Lambda'(y),
\end{equation*}
which, again, is continuous in a neighbourhood of $y=-\rho_{\Lambda}(\bm{u})$ by Assumption \ref{asmp:h} (\ref{asmp:h2a}) and by the assumption that $\Lambda$ is continuously differentiable in a neighbourhood of $y=-\rho_{\Lambda}(\bm{u})$. We conclude that the continuous partial derivatives of $\rho_{\Lambda}$ are given by:
\begin{equation*}
\frac{\partial \rho_\Lambda}{\partial u_i}(\bm{u})=-\frac{f_Y(-\rho_{\Lambda}(\bm{u}))}{f_Y(-\rho_{\Lambda}(\bm{u}))-\Lambda'(-\rho_{\Lambda}(\bm{u}))}\mathbb{E}[\partial_{u_i}g_{\bm{X}}(\bm{u})~|~g_{\bm{X}}=-\rho_{\Lambda}(\bm{u})],
\end{equation*}
for $i=1,\dots,n$.
\end{proof}

Theorem \ref{thm:tlqrc} with condition \ref{thmasmp:h} is a generalisation of the quantile sensitivity of VaR derived in Theorem 2 of \cite{h09} to the class of lambda quantiles.

For the special case of $VaR_{\lambda}$, we observe that the portfolio density adjustment is equal to one, i.e. $\eta_{\lambda,Y}(y)=1$ for all $y\in\mathbb{R}$, which leads to the following result.

\begin{corollary}\label{cor:varrc}
Suppose $\Lambda(x)=\lambda\in(0,1)$ for all $x\in\mathbb{R}$. If Assumption \ref{asmp:t} or Assumption \ref{asmp:h} is satisfied, then $\rho_{\lambda}\equiv VaR_{\lambda}$ is continuously partially differentiable in $U$ with partial derivatives:
\begin{equation}\label{eqn:varrc}
\frac{\partial VaR_{\lambda}}{\partial u_i}(\bm{u})=-\mathbb{E}[\partial_{u_i} g_{\bm{X}}(\bm{u})~|~Y=-VaR_{\lambda}(\bm{u})],
\end{equation}
for $i=1,\dots,n$.
\end{corollary}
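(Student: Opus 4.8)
The plan is to obtain the statement as an immediate specialisation of Theorem \ref{thm:tlqrc}, so that no new machinery is required. The first step is to observe that a constant lambda function $\Lambda(x)=\lambda$ is continuously differentiable on all of $\mathbb{R}$ with $\Lambda'(x)=0$ for every $x\in\mathbb{R}$. Hence the differentiability hypothesis imposed on $\Lambda$ by both conditions \ref{thmasmp:t} and \ref{thmasmp:h} of Theorem \ref{thm:tlqrc} holds automatically, and under either Assumption \ref{asmp:t} or Assumption \ref{asmp:h} the full hypotheses of that theorem are in force.

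Next I would identify the two risk measures in the statement. Comparing Definition \ref{def:lq} with the definition of $VaR_\lambda$ given immediately afterwards shows that, for $\Lambda(y)\equiv\lambda$, the two infima coincide, so $\rho_\lambda\equiv VaR_\lambda$. Theorem \ref{thm:tlqrc} then already yields continuous partial differentiability of $\rho_\lambda=VaR_\lambda$ on $U$, and it remains only to simplify the risk-contribution formula \eqref{eqn:lqrc}.

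For the final step, since $\Lambda'\equiv 0$ the portfolio density adjustment of Definition \ref{defn:rda} collapses to
\[
\eta_{\Lambda,Y}(y)=\frac{f_Y(y)}{f_Y(y)-\Lambda'(y)}=\frac{f_Y(y)}{f_Y(y)}=1
\]
wherever $f_Y(y)>0$, in particular at $y=-\rho_\lambda(\bm{u})=-VaR_\lambda(\bm{u})$. Substituting $\eta_{\Lambda,Y}(-\rho_\lambda(\bm{u}))=1$ into \eqref{eqn:lqrc} then produces exactly \eqref{eqn:varrc}.

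I do not expect a genuine obstacle, as this is a direct corollary; the only point meriting care is verifying that the evaluation $\eta_{\Lambda,Y}(-\rho_\lambda(\bm{u}))=1$ is legitimate rather than an indeterminate $0/0$. This is secured by noting that, under $\Lambda'\equiv 0$, the density condition Assumption \ref{asmp:t} \eqref{asmp:elq} (equivalently \eqref{eq:asm}) and Assumption \ref{asmp:h} \eqref{asmp:grad} both reduce to $f_Y(-VaR_\lambda(\bm{u}))>0$, guaranteeing a strictly positive denominator at the point of evaluation.
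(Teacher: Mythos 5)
Your proposal is correct and matches the paper's intended argument: the paper explicitly omits the proof of Corollary \ref{cor:varrc} as a straightforward specialisation of Theorem \ref{thm:tlqrc}, noting just beforehand that $\eta_{\lambda,Y}(y)=1$ for a constant lambda function. Your extra care in checking that $\eta_{\Lambda,Y}(-VaR_\lambda(\bm{u}))=1$ is not an indeterminate $0/0$ --- via Assumption \ref{asmp:t} \eqref{asmp:elq} or Assumption \ref{asmp:h} \eqref{asmp:grad} reducing to $f_Y(-VaR_\lambda(\bm{u}))>0$ --- is a sensible detail the paper leaves implicit.
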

The Proof of Corollary \ref{cor:varrc} follows straightforwardly from Theorem \ref{thm:tlqrc}, and is thus omitted. Corollary \ref{cor:varrc} with Assumption \ref{asmp:t} generalises Lemma 5.3 in \cite{t99} to generic portfolios $g_{\bm{X}}(\bm{u})$. Furthermore, even though (\ref{eqn:varrc}) with Assumption \ref{asmp:t} is of the same form as the partial derivative given in Theorem 2 of \cite{h09} (except that here $\bm{u}$ is multivariate as opposed to one-dimensional), the assumptions used to obtain these results differ. In \cite{h09}, the simulation output is assumed to be a continuous random variable. Corollary \ref{cor:varrc} with Assumption \ref{asmp:t}, does not require this assumption, we do, however, assume that at least one of the $X_i$ has a continuous density.

The following example shows that the risk contributions of $VaR_{\lambda}$ in \cite{t99}, who considers linear portfolios, are a special case of those of the lambda quantiles.

\begin{example}\label{exmp:lp}
For the linear portfolio operator given in Example \ref{exmp:ops} (\ref{op:lin}), we fix the random vector $\bm{X}$ to obtain the portfolio:
\begin{equation*}
g_{\bm{X}}(\bm{u}):=u_1X_1+u_2X_2.
\end{equation*}
Then the lambda quantiles' risk contribution of asset $i$ to the portfolio is given by:
\begin{equation*}
\frac{\partial \rho_{\Lambda}}{\partial u_i}(\bm{u})=-\eta_{\Lambda,Y}(-\rho_{\Lambda}(\bm{u}))\mathbb{E}[X_i|u_1X_1+u_2X_2=-\rho_{\Lambda}(\bm{u})]\,,
\end{equation*}
where $Y = u_1X_1+u_2X_2$ as given in Example \ref{exmp:ops} (\ref{op:lin}).
If $\Lambda(x)=\lambda\in(0,1)$ is a constant, then we retrieve partial derivatives of $VaR_{\lambda}$ as obtained in \cite{t99}, \cite{gls00}, and \cite{h03}:
\begin{equation*}
\frac{\partial VaR_{\lambda}}{\partial u_i}(\bm{u})=-\mathbb{E}[X_i|u_1X_1+u_2X_2=-VaR_{\lambda}(\bm{u})].
\end{equation*}
\end{example}

\begin{example}\label{exmp:non-hom}
Consider the portfolio operator in Example \ref{exmp:ops} (\ref{op:lq}) with $\tau=1$ such that $g_{\bm{X}}$ is given by:
\begin{equation*}
g_{\bm{X}}(\bm{u})=u_1X_1+u_2X_2-\rho_{\Lambda}(\bm{u};Y),
\end{equation*}
where $Y=u_1X_1+u_2X_2$. Then, the lambda quantile admits the representation
\begin{equation*}
\rho_{\Lambda}(\bm{u};g_{\bm{X}})=\rho_{\Gamma}(\bm{u};Y)+\rho_{\Lambda}(\bm{u};Y),
\end{equation*}
where $\Gamma(z):=\Lambda(z-\rho_{\Lambda}(\bm{u};Y))$ for all $z\in\mathbb{R}$. Thus, the risk contributions of the lambda quantile for the portfolio $g_{\bm{X}}(\bm{u})$, for $i = 1, \ldots, n$, become 
\begin{align*}
\frac{\partial \rho_{\Lambda}}{\partial u_i}(\bm{u};g_{\bm{X}})
&= \frac{\partial \rho_{\Gamma}}{\partial u_i}(\bm{u};Y) + \frac{\partial \rho_{\Lambda}}{\partial u_i}(\bm{u};Y)\\
&=-\eta_{\Gamma, Y}(-\rho_{\Gamma}(\bm{u};Y))\mathbb{E}[X_i~|~Y=-\rho_{\Gamma}(\bm{u};Y)]
\\
& \quad -\eta_{\Lambda,Y}(-\rho_{\Lambda}(\bm{u};Y))\mathbb{E}[X_i~|~Y=-\rho_{\Lambda}(\bm{u};Y)].
\end{align*}
\end{example}

So far we proved in Theorem \ref{thm:tlqrc} that, under smoothness assumptions, lambda quantiles are continuously partially differentiable in $U$. Next, we consider differentiability in subsets of $U$. This is important for situations when portfolio selection is restricted to specific classes of compositions, or in other words, to subsets of $U$. In the following proposition, we use Assumption \ref{asmp:t} to prove that lambda quantiles are continuously partially differentiable in subsets of $U$. This result allows for flexibility in the choice of lambda function of lambda quantiles. Recall that in Theorem \ref{thm:tlqrc} with condition \ref{thmasmp:t}, $\Lambda$ was assumed to be continuously differentiable in $\mathbb{R}$. In Proposition \ref{propo:tlqrci}, we only require continuous differentiability of the lambda function within an interval, thus generalising to lambda functions that may be discontinuous on $\mathbb{R}$.

Consider a subset $V\subset U$ such that for all $\bm{v}\in V$, the smallest intersection point of $F$ and $\Lambda$ lies in the interval $(\alpha,\beta)\subset\mathbb{R}$, i.e. $-\rho_{\Lambda}(\bm{v})\in (\alpha,\beta)$ for all $\bm{v}\in V$. The following result provides the necessary conditions to ensure lambda quantiles are continuously partially differentiable in $V$, and hence allow us to calculate risk contributions of lambda quantiles in $V$.

\begin{proposition}\label{propo:tlqrci}
Assume that:
\begin{enumerate}[(i)]
\item $\Lambda$ is continuously differentiable in the interval $(\alpha,\beta)\subset\mathbb{R}$;
\item Assumption \ref{asmp:t} is satisfied;
\item $-\rho_{\Lambda}(\bm{v})\in (\alpha,\beta)$ for all $\bm{v}\in V\subset U$.
\end{enumerate}
Then, $\rho_{\Lambda}$ is continuously partially differentiable in $V$, where the partial derivatives are given by:
\begin{equation}\label{eqn:lqrcV}
\frac{\partial \rho_{\Lambda}}{\partial v_i}(\bm{v})=-\eta_{\Lambda,Y}(-\rho_{\Lambda}(\bm{v}))\mathbb{E}[\partial_{v_i} g_{\bm{X}}(\bm{v})~|~Y=-\rho_{\Lambda}(\bm{v})],
\end{equation}
for $\bm{v} \in V$ and $i=1,\dots,n$.
\end{proposition}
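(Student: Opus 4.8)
The plan is to treat Proposition \ref{propo:tlqrci} as a localized version of Theorem \ref{thm:tlqrc} under condition \ref{thmasmp:t}: the only change is that $\Lambda$ is no longer assumed globally $C^1$ but merely $C^1$ on the interval $(\alpha,\beta)$. The decisive observation is that hypotheses (i) and (iii) together supply exactly the local smoothness of $\Lambda$ that the implicit function argument requires. Indeed, since $-\rho_\Lambda(\bm{v})\in(\alpha,\beta)$ for every $\bm{v}\in V$ and $(\alpha,\beta)$ is open, $\Lambda$ is continuously differentiable in a neighbourhood of $-\rho_\Lambda(\bm{v})$, which is precisely the standing assumption on $\Lambda$ in Proposition \ref{propo:rhocont}. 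So I would argue that the machinery behind Theorem \ref{thm:tlqrc} carries over once restricted to $V$, with Lemma \ref{lem:F} and Remark \ref{rem:dens} supplying (from Assumption \ref{asmp:t}) the regularity of $F$ and the continuity of the portfolio density that play the role of the density hypotheses in Proposition \ref{propo:rhocont}.

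Concretely, first I would invoke Lemma \ref{lem:F}, applicable because Assumption \ref{asmp:t} holds, to conclude that $F(y,\bm{v})$ is continuously partially differentiable in $y$ and in each $v_i$; Remark \ref{rem:dens} then gives that $Y=g_{\bm{X}}(\bm{v})$ is a continuous random variable with continuous density $f_Y$. Fixing $\bm{v}\in V$, I set $H(y,\bm{v}):=F(y,\bm{v})-\Lambda(y)$ on a neighbourhood of $-\rho_\Lambda(\bm{v})$ contained in $(\alpha,\beta)$; there $H$ is continuously partially differentiable since both $f_Y$ and $\Lambda'$ are continuous. Because $F(\cdot,\bm{v})$ is continuous and the smallest crossing point lies in $(\alpha,\beta)$, I would verify that $F(-\rho_\Lambda(\bm{v}),\bm{v})=\Lambda(-\rho_\Lambda(\bm{v}))$, i.e. $H(-\rho_\Lambda(\bm{v}),\bm{v})=0$. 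Next, the partial derivative $\partial_y H$ at $y=-\rho_\Lambda(\bm{v})$ equals $f_Y(-\rho_\Lambda(\bm{v}))-\Lambda'(-\rho_\Lambda(\bm{v}))$, which is strictly positive by Assumption \ref{asmp:t} \eqref{asmp:elq} in the equivalent form \eqref{eq:asm}. Applying the implicit function theorem, exactly as in Proposition \ref{propo:rhocont}, then yields that $-\rho_\Lambda$, and hence $\rho_\Lambda$, is continuously partially differentiable on $V$.

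For the explicit formula I would combine the implicit-function expression
\[
\frac{\partial\rho_\Lambda}{\partial v_i}(\bm{v})=\Bigl(\tfrac{\partial H}{\partial y}(-\rho_\Lambda(\bm{v}),\bm{v})\Bigr)^{-1}\frac{\partial H}{\partial v_i}(-\rho_\Lambda(\bm{v}),\bm{v})
\]
with Remark \ref{rem:hongps}, which gives $\partial_{v_i}F(y,\bm{v})=-f_Y(y)\,\mathbb{E}[\partial_{v_i}g_{\bm{X}}(\bm{v})\mid Y=y]$. Substituting at $y=-\rho_\Lambda(\bm{v})$, the denominator $f_Y-\Lambda'$ recombines with the factor $f_Y$ in the numerator to produce the portfolio density adjustment $\eta_{\Lambda,Y}$ of Definition \ref{defn:rda}, which delivers \eqref{eqn:lqrcV}.

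The hard part will be the localization, since this is the sole place where the argument genuinely departs from Theorem \ref{thm:tlqrc}: I must keep the implicit function theorem confined to a neighbourhood of $-\rho_\Lambda(\bm{v})$ that stays inside $(\alpha,\beta)$, where no control over $\Lambda$ is available, and then check that the $C^1$ branch produced by the theorem really coincides with $-\rho_\Lambda$ for compositions $\bm{u}$ near $\bm{v}$. The latter uses that $-\rho_\Lambda(\bm{u})$ remains in $(\alpha,\beta)$ for $\bm{u}$ in a neighbourhood of $\bm{v}$—a consequence of continuity of $\rho_\Lambda$ together with hypothesis (iii)—so that the smallest crossing point is correctly identified throughout and the formula is valid at every $\bm{v}\in V$.
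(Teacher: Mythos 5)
Your proposal is correct and follows essentially the same route as the paper: the paper's own proof is a one-line reduction to the argument of Theorem \ref{thm:tlqrc} under condition \ref{thmasmp:t}, localised so that the implicit function theorem is applied on a neighbourhood of $-\rho_{\Lambda}(\bm{v})$ contained in $(\alpha,\beta)$, together with the observation that $F(-\rho_{\Lambda}(\bm{v}),\bm{v})=\Lambda(-\rho_{\Lambda}(\bm{v}))$ identifies the crossing point. Your write-up supplies the same chain (Lemma \ref{lem:F}, Remark \ref{rem:dens}, the implicit function theorem via Proposition \ref{propo:rhocont}, and Remark \ref{rem:hongps}) and in fact is more explicit than the paper about confining the argument to $(\alpha,\beta)$ and matching the implicit $C^1$ branch with $-\rho_{\Lambda}$, which hypothesis (iii) guarantees on all of $V$.
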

\begin{proof}
The proof follows the same approach as that of Theorem \ref{thm:tlqrc} with condition \ref{thmasmp:t}. The major difference is that we calculate partial derivatives with respect to $y$ in the interval $(\alpha,\beta)$ to ensure $\Lambda'(y)$ exists and is well defined.
We further point out that for fixed $\bm{v}\in V$, $-\rho_{\Lambda}(\bm{v})$ is the smallest intersection point of $F$ and $\Lambda$ on $(\alpha,\beta)$, since they're both continuous on this interval, i.e. we have:
\begin{equation*}
F(-\rho_{\Lambda}(\bm{v}),\bm{v})=\Lambda(-\rho_{\Lambda}(\bm{v})),
\end{equation*}
for all $\bm{v}\in V$.
\end{proof}

\section{Euler decomposition and the generalised Euler allocation rule}\label{sec:ed}

In this section, we aggregate the risk contributions of lambda quantiles to prove a relationship known as the \emph{Euler decomposition} for lambda quantiles. The Euler decomposition is, for homogeneous risk measures in $U$ with homogeneity degree 1, the property that the risk measure, scaled by its homogeneity degree, can be written as a sum of its partial derivatives scaled by the number of assets. We show that the homogeneity degree of lambda quantiles is determined by the portfolio composition, the density function of the portfolio, and the gradient of the lambda function, both evaluated at the lambda quantile. This implies that lambda quantile homogeneity degree is not constant over choices of portfolio compositions or lambda functions. Furthermore, the homogeneity degree varies across different distributions of the portfolio. This is in contrast to other risk measures, such as VaR, where the homogeneity degree is constant.

In risk measure theory, the property of homogeneity is typically studied for risk measures defined on the space of random variables. A risk measure defined on random variables is positive homogeneous (of degree 1), if the risk of an asset scales linearly, e.g. doubling the asset's units doubles the position's risk. The positive homogeneity (of degree 1) property of risk measures forms part of the definition of coherent risk measures, introduced in the seminal paper by \cite{a99}. However, the property of a risk measure having a homogeneity degree of 1 has been questioned in \cite{fs02}. They argue that large multiples of a position may introduce additional liquidity risk and, therefore, the position's risk and size may not increase linearly.

In this paper, we study the homogeneity property of lambda quantiles on the set $U$ and explore the relationship between asset units and portfolio risk for lambda quantiles. Therefore, our treatment of the homogeneity property should not be confused with homogeneity of risk measures defined on set of random variables $\mathcal{X}$.

The Euler decomposition of a positive homogeneous (of degree 1) risk measure, defined on the set of random variables, is known as the \emph{Euler allocation rule} \citep{p99,d01,t07}, which is one of the most well established allocation methods in risk measure theory. This allocation rule assigns economic capital to assets using directional derivatives (in the direction of the asset) of positive homogeneous risk measures. Furthermore, Euler allocation rule is used for portfolios with linear risk aggregation or linear portfolio operators in \cite{t07} and \cite{ts09}.

Our treatment considers a more general setup, where we consider a generic portfolio operator and lambda quantiles, that are risk measures with non-constant homogeneity degree.

\begin{definition}\label{def:hom}
Let $\alpha \colon U \to \mathbb{R}$ be a function. 
We call a function $r:U\rightarrow\mathbb{R}$ positively homogeneous of degree $\alpha(\bm{u})$, if for all $\bm{u} \in U$ and $t>0$ such that $t \bm{u} \in U$, it holds that
\begin{equation}
    r(t\bm{u}) = t^{\alpha(\bm{u})} r(\bm{u})\,.
\end{equation}
\end{definition}
A function that is positively homogeneous of degree $\alpha(\bm{u})$ satisfies an Euler-like theorem. Indeed, a differentiable function $r:U\rightarrow\mathbb{R}$ is positively homogeneous of degree $\alpha(\bm{u})$ if, and only if  for all $\bm{u} \in U$ it holds that
\begin{equation}\label{eq: Euler}
    \sum_{i = 1}^n u_i \,\frac{\partial r}{\partial u_i}  (\bm{u}) = \alpha(\bm{u}) r(\bm{u})\,.
\end{equation}
Note that if $\alpha(\bm{u}) = \alpha \in \mathbb R$ is a constant, then we recover the usual definition of positively homogeneous functions. 

\begin{definition}\label{defn:esh}
Let $\tau \colon U \to \mathbb{R}$ and $E\in\mathcal{F}$ be an event. An operator $g:U\times\mathcal{X}^n\rightarrow\mathcal{X}$ is said to be \emph{$\mathbb{P}$-almost surely positively homogeneous of degree $\tau(\bm{u})$} in $U$ and in the event $E$ if for all $\bm{X}\in\mathcal{X}^n$, and all $\bm{u}\in U$ and $t>0$ with $t\bm{u} \in U$, we have:
\begin{equation}
\mathbb{P}\left(\{\omega\in E:g[t\bm{u},\bm{X}](\omega)=t^{\tau(\bm{u})}g[\bm{u},\bm{X}](\omega)\}\right)=1\,.
\end{equation}
If $E=\Omega$, we say $g$ is \emph{$\mathbb{P}$-a.s. positively $\tau$-homogeneous for a function $\tau$ in $U$}.
\end{definition}

Observe that if $g$ is $\mathbb{P}$-a.s. positively $\tau$-homogeneous for a function $\tau$ in $U$ and in the event $E\in\mathcal{F}$, then $g_{\bm{X}}$ is also $\mathbb{P}$-a.s. positively $\tau$-homogeneous for a function $\tau$ in $U$ and in the event $E$ for all $\bm{X}\in\mathcal{X}^n$.
Moreover, any portfolio operator that is linear in $U$ is $\mathbb{P}$-a.s. 1-homogeneous in $U$. For a non-linear operator $g$, however, the homogeneity property may not hold for all $\omega\in\Omega$.
In contrast, Definition \ref{defn:esh} applies to functions which map onto random variables, where there may exist outcomes $\omega$ for which the operator $g$ is not homogeneous in $U$. Therefore, $\mathbb{P}$-a.s. homogeneity is especially appealing to non-linear portfolio operators.

\begin{theorem}\label{thm:ed}
Suppose $\Lambda$ is continuously differentiable on $\mathbb{R}$, $g_{\bm{X}}$ is $\mathbb{P}$-a.s. positively $\tau$-homogeneous in $U$, for some function $\tau \colon U\to \mathbb{R}$, and $g_{\bm{x}}(\bm{u})$ is differentiable in $U$, for any fixed $\bm{x}\in\mathbb{R}^n$. If either Assumption \ref{asmp:t} or Assumption \ref{asmp:h} is satisfied, then for all $\bm{u}\in U$, $\rho_{\Lambda}$ satisfies:
\begin{equation}\label{lqed}
\tau(\bm{u})\;\eta_{\Lambda,Y}(-\rho_{\Lambda}(\bm{u}))\;\rho_{\Lambda}(\bm{u})=\sum_{i=1}^nu_i\frac{\partial \rho_{\Lambda}}{\partial u_i}(\bm{u}).
\end{equation}
\end{theorem}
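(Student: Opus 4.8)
The plan is to combine the explicit risk-contribution formula from Theorem~\ref{thm:tlqrc} with the path-wise Euler relation \eqref{eq: Euler} induced by the $\tau$-homogeneity of the portfolio operator. Since either Assumption~\ref{asmp:t} or Assumption~\ref{asmp:h} holds and $\Lambda$ is continuously differentiable, Theorem~\ref{thm:tlqrc} applies and gives, for every $\bm u\in U$ and $i=1,\dots,n$,
\[
\frac{\partial \rho_{\Lambda}}{\partial u_i}(\bm{u})=-\eta_{\Lambda,Y}(-\rho_{\Lambda}(\bm{u}))\,\mathbb{E}[\partial_{u_i} g_{\bm{X}}(\bm{u})\mid Y=-\rho_{\Lambda}(\bm{u})].
\]
First I would multiply by $u_i$ and sum over $i$. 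The factor $\eta_{\Lambda,Y}(-\rho_{\Lambda}(\bm{u}))$ does not depend on $i$, so it can be pulled out of the sum, and by linearity of the conditional expectation the weighted sum of partial derivatives moves inside, yielding $\sum_{i}u_i\,\frac{\partial \rho_{\Lambda}}{\partial u_i}(\bm u)=-\eta_{\Lambda,Y}(-\rho_{\Lambda}(\bm{u}))\,\mathbb{E}[\sum_i u_i\,\partial_{u_i} g_{\bm{X}}(\bm{u})\mid Y=-\rho_{\Lambda}(\bm{u})]$.

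The crux of the argument is to identify $\sum_{i=1}^n u_i\,\partial_{u_i} g_{\bm{X}}(\bm{u})$. For a fixed realisation $\bm X(\omega)=\bm x$, the map $\bm u\mapsto g_{\bm x}(\bm u)$ is differentiable by hypothesis, and the $\mathbb P$-almost sure $\tau$-homogeneity of Definition~\ref{defn:esh} renders it positively homogeneous of degree $\tau(\bm u)$ in the sense of Definition~\ref{def:hom}. Applying the Euler relation \eqref{eq: Euler} realisation-by-realisation then gives $\sum_{i=1}^n u_i\,\partial_{u_i} g_{\bm{x}}(\bm{u})=\tau(\bm u)\,g_{\bm x}(\bm u)$, so that $\sum_{i=1}^n u_i\,\partial_{u_i} g_{\bm{X}}(\bm{u})=\tau(\bm u)\,g_{\bm X}(\bm u)=\tau(\bm u)\,Y$ holds $\mathbb P$-almost surely. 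Substituting this into the conditional expectation, noting that $\tau(\bm u)$ is deterministic once $\bm u$ is fixed, and using $\mathbb{E}[Y\mid Y=-\rho_{\Lambda}(\bm u)]=-\rho_{\Lambda}(\bm u)$, collapses the right-hand side to $-\eta_{\Lambda,Y}(-\rho_{\Lambda}(\bm{u}))\,\tau(\bm u)\,(-\rho_{\Lambda}(\bm u))$, which rearranges to the claimed identity \eqref{lqed}.

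The main obstacle I anticipate is the measure-theoretic justification of this path-wise Euler identity. Definition~\ref{defn:esh} guarantees the homogeneity relation $g[t\bm u,\bm X]=t^{\tau(\bm u)}g[\bm u,\bm X]$ only up to a $\mathbb P$-null set that may a priori depend on the scaling factor $t$, whereas \eqref{eq: Euler} requires that, for $\mathbb P$-almost every $\omega$, the entire function $\bm u\mapsto g_{\bm X(\omega)}(\bm u)$ is genuinely $\tau$-homogeneous along the ray $t\mapsto t\bm u$. I would remove this $t$-dependence by fixing $\bm u$, taking the union of the exceptional null sets over all rational $t>0$ with $t\bm u\in U$ (still a null set), and then extending the identity from rational to all real $t$ by continuity of $g_{\bm x}$ in its composition argument, which follows from differentiability. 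Differentiating $t\mapsto g_{\bm x}(t\bm u)=t^{\tau(\bm u)}g_{\bm x}(\bm u)$ at $t=1$ then produces the Euler identity outside a single null set, after which the interchange of summation and conditional expectation and the substitution of $Y=-\rho_\Lambda(\bm u)$ are routine.
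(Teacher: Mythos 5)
Your proposal is correct and follows essentially the same route as the paper's own proof: apply Theorem \ref{thm:tlqrc}, sum the weighted partial derivatives, pull the sum inside the conditional expectation, and use the pathwise Euler identity $\sum_i u_i\,\partial_{u_i} g_{\bm{X}}(\bm{u})=\tau(\bm{u})\,g_{\bm{X}}(\bm{u})$ $\mathbb{P}$-a.s.\ together with continuity of $F_Y$ to evaluate $\mathbb{E}[Y\mid Y=-\rho_\Lambda(\bm{u})]$. Your extra step of unifying the $t$-dependent null sets over rational $t$ and extending by continuity is a point the paper passes over silently, and it is a welcome refinement rather than a deviation.
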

\begin{proof}
If $g_{\bm{X}}$ is $\mathbb{P}$-a.s. positively $\tau$-homogeneous in $U$ for a function $\tau$ and $g_{\bm{x}}(\bm{u})$ is differentiable in $\bm{u}$, for all $\bm{u} \in U$ and fixed $\bm{x}\in\mathbb{R}^n$, then by Equation \eqref{eq: Euler} it holds for almost all $\omega\in\Omega$ that
\begin{equation}\label{eqn:gxoed1}
\tau(\bm{u})\;g_{\bm{X}(\omega)}(\bm{u})=\sum_{i=1}^n u_i\partial_{u_i} g_{\bm{X}(\omega)}(\bm{u}),
\end{equation}
for all $\bm{u}\in U$.
We note that (\ref{eqn:gxoed1}) is equivalent to
\begin{equation}\label{eqn:gxoed3}
\tau (\bm{u}) \;g_{\bm{X}}(\bm{u})=\sum_{i=1}^n u_i\partial_{u_i} g_{\bm{X}}(\bm{u})\quad\mathbb{P}\text{-a.s.},
\end{equation}
for all $\bm{u}\in U$. In (\ref{eqn:gxoed3}), we have equivalence of two random variables in a $\mathbb{P}$-a.s. sense, thus they have $\mathbb{P}$-a.s. equal conditional expectations, that is
\begin{equation}\label{eqn:ceas}
\mathbb{E}[\tau(\bm{u}) g_{\bm{X}}(\bm{u})~|~Y]=\mathbb{E}\biggl[\sum_{i=1}^n u_i\partial_{u_i} g_{\bm{X}}(\bm{u})~|~Y\biggl]\quad\mathbb{P}\text{-a.s.},
\end{equation}
where $Y:= g_{\bm{X}}(\bm{u})\in\mathcal{X}$. Note that under Assumption \ref{asmp:t}, we apply Theorem \ref{thm:tlqrc} with condition \ref{thmasmp:t}, and under Assumption \ref{asmp:h}, we apply Theorem \ref{thm:tlqrc} with condition \ref{thmasmp:h}, to obtain partial derivatives of $\rho_{\Lambda}$. Recall expression (\ref{eqn:lqrc}) from Theorem \ref{thm:tlqrc}:
\begin{equation*}
\frac{\partial \rho_{\Lambda}}{\partial u_i}(\bm{u})=-\eta_{\Lambda,Y}(-\rho_{\Lambda}(\bm{u}))\mathbb{E}[\partial_{u_i} g_{\bm{X}}(\bm{u})~|~Y=-\rho_{\Lambda}(\bm{u})],
\end{equation*}
for $i=1,\dots,n$. Note that the conditioning event $Y=-\rho_{\Lambda}(\bm{u})$ in the expectation is the same for all $i$. Therefore, summing the risk contributions scaled by the number of assets over $i$ and using (\ref{eqn:ceas}), we obtain
\begin{align*}
\sum_{i=1}^n u_i\frac{\partial \rho_{\Lambda}}{\partial u_i}(\bm{u})&=-\eta_{\Lambda,Y}(-\rho_{\Lambda}(\bm{u}))\sum_{i=1}^n u_i\mathbb{E}[\partial_{u_i} g_{\bm{X}}(\bm{u})~|~Y=-\rho_{\Lambda}(\bm{u})]\nonumber\\
&=-\eta_{\Lambda,Y}(-\rho_{\Lambda}(\bm{u}))\mathbb{E}\biggl[\sum_{i=1}^n u_i\partial_{u_i} g_{\bm{X}}(\bm{u})~|~Y=-\rho_{\Lambda}(\bm{u})\biggl]\\
&=-\eta_{\Lambda,Y}(-\rho_{\Lambda}(\bm{u}))\mathbb{E}[\tau(\bm{u})\; g_{\bm{X}}(\bm{u})~|~Y=-\rho_{\Lambda}(\bm{u})]\\
&=\tau(\bm{u})\;\eta_{\Lambda,Y}(-\rho_{\Lambda}(\bm{u}))\;\rho_{\Lambda}(\bm{u})\,,
\end{align*}
where the last equation holds by continuity of $F_Y$.
\end{proof}

From Theorem \ref{thm:ed} we conclude that lambda quantiles are positively homogeneous of degree given in the next proposition.
\begin{proposition}\label{prop:homo-lambda}
Let $g_{\bm{X}}$ be $\mathbb{P}$-a.s. positively $\tau$-homogeneous in $U$, for a function $\tau \colon U \to \mathbb R$. Then,
$\rho_{\Lambda}$ applied to $g_{\bm{X}}$ is homogeneous in $U$ of degree $\tau(\bm{u})\;\eta_{\Lambda,Y}(-\rho_{\Lambda}(\bm{u}))$. That is, the lambda quantile $\rho_\Lambda$ is $\gamma$-homogeneous in $U$ for the function $\gamma\colon U \to \mathbb{R}$, defined by $\gamma(\bm{u}) := \tau(\bm{u})\;\eta_{\Lambda,Y}(-\rho_{\Lambda}(\bm{u}))$.
\end{proposition}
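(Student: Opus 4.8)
The plan is to read this proposition as an immediate consequence of the Euler decomposition in Theorem~\ref{thm:ed} together with the Euler-type characterisation of positive homogeneity recorded in Equation~\eqref{eq: Euler}. Write $\gamma(\bm{u}) := \tau(\bm{u})\,\eta_{\Lambda,Y}(-\rho_{\Lambda}(\bm{u}))$. The target identity from Definition~\ref{def:hom} is $\rho_{\Lambda}(t\bm{u}) = t^{\gamma(\bm{u})}\rho_{\Lambda}(\bm{u})$ for all $\bm{u}\in U$ and $t>0$ with $t\bm{u}\in U$, so the whole argument amounts to matching the hypotheses of \eqref{eq: Euler} to $r=\rho_{\Lambda}$ with $\alpha=\gamma$.

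First I would record that $\rho_{\Lambda}$ is differentiable on $U$, which is exactly the conclusion of Theorem~\ref{thm:tlqrc} under either Assumption~\ref{asmp:t} or Assumption~\ref{asmp:h}; this differentiability is what licenses the Euler characterisation. Next I would quote Theorem~\ref{thm:ed}, whose conclusion \eqref{lqed} can be rearranged as
\begin{equation*}
\sum_{i=1}^n u_i\,\frac{\partial \rho_{\Lambda}}{\partial u_i}(\bm{u}) = \tau(\bm{u})\,\eta_{\Lambda,Y}(-\rho_{\Lambda}(\bm{u}))\,\rho_{\Lambda}(\bm{u}) = \gamma(\bm{u})\,\rho_{\Lambda}(\bm{u}),
\end{equation*}
for all $\bm{u}\in U$. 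This is precisely the Euler relation \eqref{eq: Euler} with the choice $\alpha=\gamma$, and invoking the ``if'' direction of \eqref{eq: Euler} then yields that $\rho_{\Lambda}$ is positively homogeneous of degree $\gamma(\bm{u})$, completing the proof.

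The one point that deserves care --- and the only place where the argument is more than bookkeeping --- is the validity of the biconditional \eqref{eq: Euler} when the degree genuinely varies with $\bm{u}$. If one prefers to establish the homogeneity directly rather than cite \eqref{eq: Euler}, the underlying mechanism is a scalar ODE: fixing $\bm{u}$ and setting $\phi(t):=\rho_{\Lambda}(t\bm{u})$, the chain rule gives $\phi'(t)=\sum_i u_i\,\partial_{u_i}\rho_{\Lambda}(t\bm{u})$, and applying the Euler identity \eqref{lqed} at the base point $t\bm{u}$ turns this into $\phi'(t) = \tfrac{1}{t}\,\gamma(t\bm{u})\,\phi(t)$. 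Integrating from $1$ to $t$ gives $\phi(t)=\phi(1)\exp\bigl(\int_1^t \tfrac{\gamma(s\bm{u})}{s}\,ds\bigr)$, which collapses to the desired $t^{\gamma(\bm{u})}\rho_{\Lambda}(\bm{u})$ exactly when $\gamma$ is constant along the ray $s\mapsto s\bm{u}$. I would therefore make explicit that Definition~\ref{def:hom} is only self-consistent when the degree is ray-constant: applying it from the two base points $\bm{u}$ and $s\bm{u}$ forces $\alpha(\bm{u})=\alpha(s\bm{u})$ wherever $r\neq0$, so that $\gamma(s\bm{u})=\gamma(\bm{u})$ is exactly the regime in which \eqref{eq: Euler} is a genuine equivalence and the integral evaluates to $\gamma(\bm{u})\ln t$. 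The main obstacle is thus conceptual rather than computational: being explicit that the variable homogeneity degree is constant along rays, which is what makes the passage from the differential Euler identity to the multiplicative scaling law legitimate.
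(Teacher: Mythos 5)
Your proposal follows the paper's own route exactly: the paper gives no separate proof of this proposition, deriving it in one line from the Euler decomposition of Theorem~\ref{thm:ed} together with the Euler-type characterisation~\eqref{eq: Euler}, which is precisely your argument. Your additional observation --- that the biconditional~\eqref{eq: Euler} with a variable degree is only legitimate because Definition~\ref{def:hom} forces the degree to be constant along rays (so the integral $\int_1^t \gamma(s\bm{u})/s\,ds$ collapses to $\gamma(\bm{u})\ln t$) --- is a genuine subtlety the paper leaves entirely implicit, and making it explicit is a worthwhile strengthening rather than a deviation.
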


Theorem \ref{thm:ed} and Proposition \ref{prop:homo-lambda} has several interesting implications. To begin with, the homogeneity degree of a lambda quantile is $\tau(\bm{u}) \, \eta_{\Lambda,Y}(-\rho_{\Lambda}(\bm{u}))$, a composition of the homogeneity degree of the portfolio $\tau(\bm{u})$ and the portfolio density adjustment $\eta_{\Lambda,Y}(-\rho_{\Lambda}(\bm{u}))$.
Thus, the homogeneity degree of $\rho_{\Lambda}$ of a linear portfolio operator (i.e. $\tau(\bm{u}) = 1$) is $\eta_{\Lambda,Y}(-\rho_{\Lambda}(\bm{u}))$. It is straightforward that the homogeneity degree of $\rho_{\Lambda}$ with a constant lambda function, $\Lambda(x)=\lambda\in(0,1)$, and for a linear portfolio operator is precisely the homogeneity degree of the $VaR_{\lambda}$ measure, that is 1. Indeed, for $\mathbb{P}$-a.s. 1-homogeneous portfolio operators, $\rho_{\Lambda}$ is 1-homogeneous if, and only if, $\Lambda'(y)=0$ for all $y\in \mathbb{R}$. Note that for $VaR_{\lambda}$ the homogeneity degree is independent of the portfolio composition $\bm{u}$. This is in contrast to a non-constant $\Lambda$ function, in which case the lambda quantile homogeneity degree may differ for each portfolio composition $\bm{u}\in U$. Moreover, for a fixed portfolio composition $\bm{u}$, the homogeneity degree of lambda quantiles may change for different choices of the $\Lambda$ function.

Next, we use Theorem \ref{thm:ed} to define a new capital allocation rule, which generalises the well-known Euler allocation. For a linear portfolio, risk contributions calculated as directional derivatives of positive homogeneous risk measures of degree 1 are known as Euler contributions \citep{t07}. Furthermore, the assignment of capital using Euler contributions is known as Euler allocation. Defining for Euler contributions is that they possess the full allocation property, i.e. the sum of the Euler contributions over all assets equals the risk measure itself. We propose a generalisation of Euler contributions which satisfies the full allocation property and that is compatible with $\gamma$-homogeneous risk measures, $\gamma\colon U \to \mathbb{R}$, and generic portfolio operators, thus applicable to lambda quantiles.
\begin{definition}\label{defn:gear}
Consider a portfolio $g_{\bm{X}}(\bm{u})$ and a risk measure $\Phi:\mathcal{X}\rightarrow\mathbb{R}$ defined on the space of random variables. Assume that the composition $\Phi \circ g_{\bm{X}} \colon U \to \mathbb{R}$ is positively $\gamma$-homogeneous for a function $\gamma \colon U \to \mathbb{R}$.
Then, the functionals $\psi_{i}^{\Phi}:\mathcal{X}\rightarrow\mathbb{R}$ defined by
\begin{equation}
\psi_{i}^\Phi(g_{\bm{X}}):=\frac{1}{\gamma(\bm{1})}\;\frac{\partial\Phi}{\partial u_i}\big(g_{\bm{X}}(\bm{u})\big)\biggl|_{\bm{u}=\bm{1}}\quad \text{for} \quad i=1,\dots,n,
\end{equation}
are called \emph{generalised Euler contributions}. Furthermore, we call the process of allocating capital to sub-portfolios using generalised Euler contributions, the \emph{generalised Euler allocation rule}.
\end{definition}

Euler contributions as defined by \citep{t07} and the Euler allocation rule \citep{p99,d01,t07} are special cases of Definition \ref{defn:gear} with $\eta=\tau =1$.

\begin{proposition}
Suppose $\Lambda$ is continuously differentiable on $\mathbb{R}$, $g_{\bm{X}}$ is $\mathbb{P}$-a.s. positively $\tau$-homogeneous in $U$, for some $\tau \colon U \to \mathbb R$, and $g_{\bm{x}}(\bm{u})$ is differentiable in $\bm{u}$, for all $\bm{u} \in U$ and fixed $\bm{x}\in\mathbb{R}^n$. Also, suppose that either Assumption \ref{asmp:t} or Assumption \ref{asmp:h} is satisfied. Then, the generalised Euler contributions of the lambda quantile are given by: 
\begin{equation}
\psi_{i}^\Lambda(g_{\bm{X}})=-\frac{1}{\tau(\bm{1})}\mathbb{E}[\partial_{u_i}g_{\bm{X}}(\bm{1})~|~g_{\bm{X}}(\bm{1})=-\rho_{\Lambda}(\bm{1})]\quad \text{for} \quad i=1,\dots,n.
\end{equation}
Furthermore, allocations $\psi_{i}^\Lambda(\cdot)$ define a generalised Euler allocation rule for lambda quantiles with the full allocation property:
\begin{equation}
\sum_{i=1}^n\psi_{i}^\Lambda(g_{\bm{X}})=\rho_{\Lambda}(\bm{1}).
\end{equation}
\end{proposition}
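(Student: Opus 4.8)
The plan is to assemble the result directly from three facts already in hand: the homogeneity degree of $\rho_{\Lambda}$ from Proposition \ref{prop:homo-lambda}, the explicit risk-contribution formula from Theorem \ref{thm:tlqrc}, and the Euler decomposition of Theorem \ref{thm:ed}. There is no delicate analysis to perform; the entire task is to check that the hypotheses of these three results hold simultaneously under the stated assumptions and then to chain them together.

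First I would confirm that Definition \ref{defn:gear} applies. Under the present hypotheses, Proposition \ref{prop:homo-lambda} guarantees that the composition $\rho_{\Lambda}\circ g_{\bm{X}}$ is positively $\gamma$-homogeneous in $U$ with $\gamma(\bm{u}) := \tau(\bm{u})\,\eta_{\Lambda,Y}(-\rho_{\Lambda}(\bm{u}))$. Hence the generalised Euler contributions $\psi_{i}^{\Lambda}(g_{\bm{X}}) = \frac{1}{\gamma(\bm{1})}\,\frac{\partial \rho_{\Lambda}}{\partial u_i}(\bm{u})\big|_{\bm{u}=\bm{1}}$ are well defined, since $\gamma(\bm{1})=\tau(\bm{1})\,\eta_{\Lambda,Y}(-\rho_{\Lambda}(\bm{1}))$ is nonzero: the portfolio density adjustment $\eta_{\Lambda,Y}$ is finite and positive at $-\rho_{\Lambda}(\bm{1})$ by Equation \eqref{eq:asm}, which is in force under either Assumption \ref{asmp:t} or Assumption \ref{asmp:h}.

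Next I would substitute the partial derivative supplied by Theorem \ref{thm:tlqrc}, evaluated at $\bm{u}=\bm{1}$, namely $\frac{\partial \rho_{\Lambda}}{\partial u_i}(\bm{1}) = -\eta_{\Lambda,Y}(-\rho_{\Lambda}(\bm{1}))\,\mathbb{E}[\partial_{u_i} g_{\bm{X}}(\bm{1})~|~g_{\bm{X}}(\bm{1})=-\rho_{\Lambda}(\bm{1})]$. Dividing by $\gamma(\bm{1})=\tau(\bm{1})\,\eta_{\Lambda,Y}(-\rho_{\Lambda}(\bm{1}))$, the factor $\eta_{\Lambda,Y}(-\rho_{\Lambda}(\bm{1}))$ cancels in numerator and denominator, leaving exactly the claimed expression $\psi_{i}^{\Lambda}(g_{\bm{X}}) = -\frac{1}{\tau(\bm{1})}\,\mathbb{E}[\partial_{u_i} g_{\bm{X}}(\bm{1})~|~g_{\bm{X}}(\bm{1})=-\rho_{\Lambda}(\bm{1})]$. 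This single cancellation is the only algebraic point to watch, and it is immediate.

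Finally, for the full allocation property I would sum over $i$ and invoke Theorem \ref{thm:ed} at $\bm{u}=\bm{1}$. Setting all $u_i=1$ in Equation \eqref{lqed} yields $\sum_{i=1}^n \frac{\partial \rho_{\Lambda}}{\partial u_i}(\bm{1}) = \tau(\bm{1})\,\eta_{\Lambda,Y}(-\rho_{\Lambda}(\bm{1}))\,\rho_{\Lambda}(\bm{1}) = \gamma(\bm{1})\,\rho_{\Lambda}(\bm{1})$. Consequently $\sum_{i=1}^n \psi_{i}^{\Lambda}(g_{\bm{X}}) = \frac{1}{\gamma(\bm{1})}\sum_{i=1}^n \frac{\partial \rho_{\Lambda}}{\partial u_i}(\bm{1}) = \rho_{\Lambda}(\bm{1})$, which is the full allocation property. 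I do not expect any genuine obstacle here; the only care needed is in tracking that $\gamma$ coincides with the homogeneity degree identified in Proposition \ref{prop:homo-lambda}, so that the normalising constant $\gamma(\bm{1})$ in Definition \ref{defn:gear} is precisely the scalar produced by the Euler decomposition, making the two cancellations consistent.
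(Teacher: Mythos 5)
Your proposal is correct and takes essentially the same approach as the paper: Proposition~\ref{prop:homo-lambda} identifies the normalising degree $\gamma(\bm{1})=\tau(\bm{1})\,\eta_{\Lambda,Y}(-\rho_{\Lambda}(\bm{1}))$, Theorem~\ref{thm:tlqrc} supplies the partial derivative whose $\eta_{\Lambda,Y}$ factor cancels, and the full allocation property then follows. The only cosmetic difference is that you obtain $\sum_{i=1}^n\partial_{u_i}\rho_{\Lambda}(\bm{1})=\gamma(\bm{1})\,\rho_{\Lambda}(\bm{1})$ by citing Theorem~\ref{thm:ed} at $\bm{u}=\bm{1}$, whereas the paper re-derives that identity by pushing the $\mathbb{P}$-a.s.\ $\tau$-homogeneity of $g_{\bm{X}}$ through the conditional expectation; the two steps are interchangeable.
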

\begin{proof}
The lambda quantile can be written as the composition $ \rho_\Lambda = \Phi_{\Lambda}\circ g_{\bm{X}}\colon U \to \mathbb{R} \cup \{+ \infty\}$, where we define $\Phi_{\Lambda}$ for fixed $\bm{X}\in\mathcal{X}^n$ by $\Phi_{\Lambda}(g_{\bm{X}}(\bm{u}))=\rho_{\Lambda}(\bm{u})$, for all $\bm{u} \in U$.
Moreover, the lambda quantile is positively homogeneous of degree $\tau(\bm{u})\; \eta_{\Lambda,g_{\bm{X}}(\bm{u})}(-\rho_{\Lambda}(\bm{u}))$ by Proposition \ref{prop:homo-lambda}. We obtain, using Theorem \ref{thm:tlqrc} in the third equality, that
\begin{align*}
\psi_i^\Lambda(g_{\bm{X}})
&=\frac{1}{\tau(\bm{1})\,\eta_{\Lambda,g_{\bm{X}}(\bm{1})}(-\rho_{\Lambda}(\bm{1})))}\frac{\partial\Phi_{\Lambda}}{\partial u_i}(g_{\bm{X}}(\bm{u}))\biggl|_{\bm{u}=\bm{1}}
\\
&=\frac{1}{\tau(\bm{1})\,\eta_{\Lambda,g_{\bm{X}}(\bm{1})}(-\Phi_{\Lambda}(g_{\bm{X}}(\bm{1})))}\frac{\partial\rho_{\Lambda}}{\partial u_i}(\bm{u})\biggl|_{\bm{u}=\bm{1}}
\\
&=-\frac{\eta_{\Lambda,g_{\bm{X}}(\bm{1})}(-\Phi_{\Lambda}(g_{\bm{X}}(\bm{1})))}{\tau(\bm{1})\,\eta_{\Lambda,g_{\bm{X}}(\bm{1})}(-\Phi_{\Lambda}(g_{\bm{X}}(\bm{1})))}\mathbb{E}[\partial_{u_i} g_{\bm{X}}(\bm{1})~|~g_{\bm{X}}(\bm{1})=-\rho_{\Lambda}(\bm{1})]\\
&=-\frac{1}{\tau(\bm{1})}\mathbb{E}[\partial_{u_i} g_{\bm{X}}(\bm{1})~|~g_{\bm{X}}(\bm{1})=-\rho_{\Lambda}(\bm{1})].
\end{align*}
Observe that by the $\mathbb{P}$-a.s. $\tau$-homogeneity property of $g_{\bm{X}}$, we can write
\begin{align*}
\sum_{i=1}^n\psi_i^\Lambda(g_{\bm{X}})&=-\frac{1}{\tau(\bm{1})}\sum_{i=1}^n\mathbb{E}[\partial_{u_i} g_{\bm{X}}(\bm{1})~|~g_{\bm{X}}(\bm{1})=-\rho_{\Lambda}(\bm{1})]\\
&=-\frac{1}{\tau(\bm{1})}\mathbb{E}\biggl[\sum_{i=1}^n\partial_{u_i} g_{\bm{X}}(\bm{1})~|~g_{\bm{X}}(\bm{1})=-\rho_{\Lambda}(\bm{1})\biggl]\\
&=-\frac{1}{\tau(\bm{1})}\mathbb{E}[\tau(\bm{1}) g_{\bm{X}}(\bm{1})~|~g_{\bm{X}}(\bm{1})=-\rho_{\Lambda}(\bm{1})]\\
&=\rho_{\Lambda}(\bm{1})=\Phi_{\Lambda}(g_{\bm{X}}(\bm{1})).
\end{align*}
\end{proof}
In applications, portfolio operators and their portfolios are typically positively homogeneous of a constant degree, that is $\tau(\bm{u}) = \tau \in \mathbb{R}$. Thus, the multiplicative factor in the Euler contributions $\frac{1}{\tau(\bm{1})}$ reduces to $\frac{1}{\tau}$. 
\begin{example}
For Example \ref{exmp:ops} \eqref{mj:ii}, the generalised Euler contributions of the lambda quantile are given by
\begin{equation}
    \psi_{i}^\Lambda(g_{\bm{X}})
    =
    - \mathbb{E}\left[\max\left\{0, X_i - \mathbb{E}[X_i]\right\} ~|~ g_{\bm{X}}(\bm{1}) = -\rho_\Lambda(\bm{1}; g_{\bm{X}})\right]\,.
\end{equation}
For Example \ref{exmp:ops} \eqref{mj:iii}, the risk contributions of the lambda quantile are given by
\begin{align}
    \psi_{i}^\Lambda(g_{\bm{X}})
    &= -\mathbb{E}[X_i~|~X_1 + X_2=-\rho_{\Lambda}(\bm{1};g_{\bm{X}}) + VaR_{\lambda}(\bm{1};X_1 + X_2)]
    \\
    & \quad  - \mathbb{E}[X_i~|~X_1 + X_2 =-VaR_{\lambda}(\bm{1};X_1 + X_2)]\,.
\end{align}
\end{example}

Euler allocations and their desirable properties are typically considered for 1-homogeneous risk measures defined on the space of random variables and linear portfolio operators. An exception is \cite{Pesenti2021RA} who consider non-linear but positively homogeneous portfolios and distortion risk measures which are 1-homogeneous. If we consider the special case of a linear portfolio operator, then the generalised Euler allocations of the lambda quantile fulfil the properties of \textit{monotonicity} and \textit{risklessness}. Monotonicity is the property that if $X_j \ge X_i$ $\mathbb{P}$-a.s., then the generalised Euler contributions of $X_j$ is smaller than the contribution of $X_i$, i.e. $\psi^\Lambda_j(g_{\bm{X}}) \le \psi^\Lambda_i(g_{\bm{X}})$. An Euler allocation is called riskless, if $X_j$ is $\mathbb{P}$-a.s. constant, $X_j= a$, $a\in \mathbb{R}$, then $\psi^\Lambda_j(g_{\bm{X}}) = -a$. We refer to \cite{d01} and \cite{kalkbrener2005MF} for a detailed discussion of properties of Euler allocations for linear portfolios. For non-linear portfolios, however, the monotonicity and the riskless property do not hold in general nor are they desirable. Indeed consider the  portfolio $g_{\bm{X}}(\bm{u}) = u_1 X_1  + u_2 X_1 X_2$, then the generalised Euler contribution are
\begin{align}
    \psi_1^\Lambda(g_{\bm{X}}) &=  -\mathbb{E}[X_1 ~|~ g_{\bm{X}}(\bm{1}) = -\rho_\Lambda (\bm{1})\,
    \\
    \psi_2^\Lambda(g_{\bm{X}}) &=  -\mathbb{E}[X_1 X_2 ~|~ g_{\bm{X}}(\bm{1}) = -\rho_\Lambda (\bm{1})]\,.
\end{align}
Therefore, a stochastic ordering of $X_1$ and $X_2$ should not induce a ordering of the risk contributions. Moreover, if $X_2 = a$, for some $a \in \mathbb{R }$, then the generalised Euler contributions to $X_2$ is given by $\psi_2^\Lambda(g_{\bm{X}}) = a\, \psi_1^\Lambda(g_{\bm{X}})  \neq -a$.
\section{Homogeneity of portfolio operators}\label{hpo}
In applications, portfolio operators and their portfolios are typically positively homogeneous of a constant degree, that is $\tau(\bm{u}) = \tau>0$. Thus, for simplicity of exposition, we assume throughout this section that the considered portfolios are positively $\tau$-homogeneous of constant degree $\tau>0$.
The central assumption for the Euler decomposition of lambda quantiles is the $\mathbb{P}$-a.s. positively $\tau$-homogeneity of $g_{\bm{X}}$. Thus, in this section, we study properties that ensure $\mathbb{P}$-a.s. positively homogeneity in $U$ of 
generic portfolio operators. For this, we first consider operators $g$ of the following additive form to motivate some preliminary results:
\begin{equation}\label{eq:gab}
g[\bm{u},\bm{X}]=\mathfrak{a}[\bm{u},\bm{X}]+\mathfrak{b}(\bm{u},\bm{X}),
\end{equation}
where $\mathfrak{a}:U\times\mathcal{X}^n\rightarrow\mathcal{X}$ and $\mathfrak{b}:U\times\mathcal{X}^n\rightarrow\mathbb{R}$. We refer to $\mathfrak{a}$ as the \emph{stochastic} part of $g$ because it depends on a given $\omega\in\Omega$ and $\mathfrak{b}$ as the \emph{deterministic} part of $g$ because it is a constant over all choices of $\omega\in\Omega$ (in \cite{m18}, $\mathfrak{a}$ and $\mathfrak{b}$ are referred to as the pointwise and constant functions respectively). In what follows and unless otherwise stated, homogeneity of $g$ and $\mathfrak{a}$ is understood in the $\mathbb{P}$-a.s. sense (see Definition \ref{defn:esh}), whereas homogeneity of $\mathfrak{b}$ and $\rho_{\Lambda}$ is understood in the sense of Definition \ref{def:hom}.

\begin{proposition}\label{propo:operab}
Suppose the portfolio operator $g$ can be written in the form \eqref{eq:gab}. Then, $g$ is $\mathbb{P}$-a.s. positively $\tau$-homogeneous in $U$, $\tau \in \mathbb{R}$, if $\mathfrak{a}$ is $\mathbb{P}$-a.s. positively $\tau$-homogeneous in $U$ and $\mathfrak{b}$ is positively $\tau$-homogeneous in $U$.
\end{proposition}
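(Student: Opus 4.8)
The plan is to verify the defining identity of Definition \ref{defn:esh} directly, by combining the two homogeneity hypotheses on the summands. First I would fix an arbitrary $\bm{X}\in\mathcal{X}^n$, a composition $\bm{u}\in U$, and a scalar $t>0$ with $t\bm{u}\in U$, since these are exactly the quantities over which the homogeneity relation in Definition \ref{defn:esh} must hold.

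The $\mathbb{P}$-a.s. $\tau$-homogeneity of the stochastic part $\mathfrak{a}$ supplies an event $A=A_{\bm{u},t,\bm{X}}\in\mathcal{F}$ with $\mathbb{P}(A)=1$ on which $\mathfrak{a}[t\bm{u},\bm{X}](\omega)=t^{\tau}\mathfrak{a}[\bm{u},\bm{X}](\omega)$. In parallel, the deterministic part $\mathfrak{b}$ is positively $\tau$-homogeneous in the sense of Definition \ref{def:hom}, so for the fixed $\bm{X}$ the identity $\mathfrak{b}(t\bm{u},\bm{X})=t^{\tau}\mathfrak{b}(\bm{u},\bm{X})$ holds; crucially, since $\mathfrak{b}$ takes values in $\mathbb{R}$ and is constant across outcomes $\omega\in\Omega$, this identity holds \emph{surely} rather than merely almost surely.

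I would then add the two relations pointwise on $A$: for every $\omega\in A$,
\begin{equation*}
g[t\bm{u},\bm{X}](\omega)=\mathfrak{a}[t\bm{u},\bm{X}](\omega)+\mathfrak{b}(t\bm{u},\bm{X})=t^{\tau}\bigl(\mathfrak{a}[\bm{u},\bm{X}](\omega)+\mathfrak{b}(\bm{u},\bm{X})\bigr)=t^{\tau}g[\bm{u},\bm{X}](\omega).
\end{equation*}
Because $\mathbb{P}(A)=1$, this establishes the full-probability identity required by Definition \ref{defn:esh}, and since $\bm{X}$, $\bm{u}$, and $t$ were arbitrary, $g$ is $\mathbb{P}$-a.s. positively $\tau$-homogeneous in $U$.

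There is no substantial obstacle here; the argument is essentially bookkeeping. The only point requiring care is the interplay between the two distinct notions of homogeneity: $\mathfrak{a}$ satisfies its relation only on a full-measure event that may depend on $(\bm{u},t,\bm{X})$, whereas $\mathfrak{b}$ satisfies its relation for every $\omega$. The argument must therefore intersect the almost-sure statement for $\mathfrak{a}$ with the sure statement for $\mathfrak{b}$, noting that an almost-sure identity added to a sure identity yields an almost-sure identity for the sum. This is precisely what lets the deterministic homogeneity of $\mathfrak{b}$ (Definition \ref{def:hom}) combine cleanly with the probabilistic homogeneity of $\mathfrak{a}$ (Definition \ref{defn:esh}) to give $\mathbb{P}$-a.s. homogeneity of $g$.
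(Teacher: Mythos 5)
Your proposal is correct and follows essentially the same route as the paper's own proof: fix $\bm{X}$, $\bm{u}$, and $t$, apply the $\mathbb{P}$-a.s. homogeneity of $\mathfrak{a}$ and the sure homogeneity of $\mathfrak{b}$, and add the two identities pointwise on the full-measure event. Your extra remark about intersecting the almost-sure statement with the sure one is a sensible clarification but does not change the argument.
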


\begin{proof}
If $\mathfrak{a}$ is $\mathbb{P}$-a.s. positively $\tau$-homogeneous in $U$, then for almost all $\omega\in\Omega$ and for any $t>0$ and $\bm{u}\in U$ with $t\bm{u}\in U$, we have:
\begin{align*}
g[t\bm{u},\bm{X}](\omega)=&\mathfrak{a}[t\bm{u},\bm{X}](\omega)+\mathfrak{b}(t\bm{u},\bm{X})\\
=&t^{\tau}\mathfrak{a}[\bm{u},\bm{X}](\omega)+t^{\tau}\mathfrak{b}(\bm{u,X})\\
=&t^{\tau}g[\bm{u},\bm{X}](\omega),
\end{align*}
and hence $g$ is $\mathbb{P}$-a.s. positively $\tau$-homogeneous.
\end{proof}

\begin{proposition}
Suppose the portfolio operator $g$ can be written in the form \eqref{eq:gab} and $\mathfrak{a}$ is given by:
\begin{equation*}
\mathfrak{a}[\bm{u},\bm{X}]=\sum_{i=1}^n u_i^{\tau}h_i(X_i),
\end{equation*}
where $h_i:\mathcal{X}\rightarrow\mathcal{X}$ for $i=1,\dots,n$ and $\tau\in\mathbb{R}$. Then, $g$ is $\mathbb{P}$-a.s. positively $\tau$-homogeneous in $U$ if, and only if, $\mathfrak{b}$ is positively $\tau$-homogeneous in $U$.
\end{proposition}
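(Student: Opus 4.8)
The plan is to exploit the explicit power-function structure of $\mathfrak{a}$, which makes it \emph{unconditionally} homogeneous, and then reduce the equivalence to the additive splitting already handled by Proposition \ref{propo:operab}. First I would record the key observation that $\mathfrak{a}$ is $\mathbb{P}$-a.s.\ (indeed pointwise in $\omega$) positively $\tau$-homogeneous in $U$: for any $t>0$ with $t\bm{u}\in U$ and every $\omega\in\Omega$,
\begin{equation*}
\mathfrak{a}[t\bm{u},\bm{X}](\omega)=\sum_{i=1}^n (tu_i)^{\tau}h_i(X_i(\omega))=t^{\tau}\sum_{i=1}^n u_i^{\tau}h_i(X_i(\omega))=t^{\tau}\mathfrak{a}[\bm{u},\bm{X}](\omega),
\end{equation*}
where I use $(tu_i)^{\tau}=t^{\tau}u_i^{\tau}$, valid for $t>0$ wherever $u_i^{\tau}$ is defined. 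Thus $\mathfrak{a}$ satisfies the hypotheses of Proposition \ref{propo:operab} regardless of any property of $\mathfrak{b}$.

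For the ``if'' direction, suppose $\mathfrak{b}$ is positively $\tau$-homogeneous in $U$. Since $\mathfrak{a}$ is $\mathbb{P}$-a.s.\ positively $\tau$-homogeneous by the display above, Proposition \ref{propo:operab} applies directly and yields that $g=\mathfrak{a}+\mathfrak{b}$ is $\mathbb{P}$-a.s.\ positively $\tau$-homogeneous in $U$.

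For the ``only if'' direction, suppose $g$ is $\mathbb{P}$-a.s.\ positively $\tau$-homogeneous. I would isolate $\mathfrak{b}=g-\mathfrak{a}$ and combine the homogeneity of $g$ with that of $\mathfrak{a}$: for $t>0$ with $t\bm{u}\in U$,
\begin{equation*}
\mathfrak{b}(t\bm{u},\bm{X})=g[t\bm{u},\bm{X}]-\mathfrak{a}[t\bm{u},\bm{X}]=t^{\tau}g[\bm{u},\bm{X}]-t^{\tau}\mathfrak{a}[\bm{u},\bm{X}]=t^{\tau}\mathfrak{b}(\bm{u},\bm{X})\qquad\mathbb{P}\text{-a.s.}
\end{equation*}
Because $\mathfrak{b}$ takes values in $\mathbb{R}$ and is constant over $\omega$, both sides of this identity are deterministic, so the $\mathbb{P}$-a.s.\ equality is an exact equality; hence $\mathfrak{b}$ is positively $\tau$-homogeneous in $U$ in the sense of Definition \ref{def:hom}.

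The step requiring the most care is not a computation but the reconciliation of the two notions of homogeneity in play: Definition \ref{defn:esh} ($\mathbb{P}$-a.s.) for the random objects $g$ and $\mathfrak{a}$, versus Definition \ref{def:hom} (deterministic) for $\mathfrak{b}$. The argument hinges on the observation that $\mathfrak{b}$ is $\mathbb{R}$-valued and non-random, so that an almost-sure identity between the two constants $\mathfrak{b}(t\bm{u},\bm{X})$ and $t^{\tau}\mathfrak{b}(\bm{u},\bm{X})$ forces them to coincide exactly. A secondary, purely technical point is to ensure $(tu_i)^{\tau}=t^{\tau}u_i^{\tau}$ holds on $U$; this is immediate for $t>0$ wherever the power $u_i^{\tau}$ is defined, which is precisely where the operator itself is defined.
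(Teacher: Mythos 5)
Your proposal is correct and follows essentially the same route as the paper: observe that the power structure makes $\mathfrak{a}$ automatically ($\mathbb{P}$-a.s., indeed pointwise) positively $\tau$-homogeneous, invoke Proposition \ref{propo:operab} for the ``if'' direction, and cancel the $\mathfrak{a}$ terms in the ``only if'' direction to deduce $\mathfrak{b}(t\bm{u},\bm{X})=t^{\tau}\mathfrak{b}(\bm{u},\bm{X})$. Your explicit remark that the deterministic nature of $\mathfrak{b}$ upgrades the $\mathbb{P}$-a.s.\ identity to an exact one is a point the paper leaves implicit, but it does not change the argument.
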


\begin{proof}
We note that $\mathfrak{a}$ is $\mathbb{P}$-a.s. positively $\tau$-homogeneous in $U$ because of the powers of the $u_i$'s. If $\mathfrak{b}$ is also positively $\tau$-homogeneous, then $g$ is $\mathbb{P}$-a.s. positively $\tau$-homogeneous by an argument similar to the proof of Proposition \ref{propo:operab}. For the opposite case, assume $g$ is $\mathbb{P}$-a.s. positively $\tau$-homogeneous. Then for any $t>0$ and $\bm{u}$ with $t\bm{u}\in U$ and almost all $\omega\in\Omega$, we have:
\begin{align*}
g[t\bm{u},\bm{X}](\omega)&=\sum_{i=1}^n (u_it)^{\tau}h_i(X_i(\omega))+\mathfrak{b}(t\bm{u},\bm{X})\\
&=t^{\tau}\mathfrak{a}[\bm{u},\bm{X}](\omega)+\mathfrak{b}(t\bm{u},\bm{X}).
\end{align*}
By our assumption, we can write:
\begin{equation*}
g[t\bm{u},\bm{X}](\omega)=t^{\tau}g[\bm{u},\bm{X}](\omega)=t^{\tau}(\mathfrak{a}[\bm{u},\bm{X}](\omega)+\mathfrak{b}(\bm{u,X})).
\end{equation*}
Hence, $\mathfrak{b}(t\bm{u,X})=t^{\tau}\mathfrak{b}(\bm{u,X})$ and $\mathfrak{b}$ is positively $\tau$-homogeneous in $U$.
\end{proof}

\begin{corollary}\label{cor:linlq}
An operator of the form $g=\mathfrak{a}[\bm{u},\bm{X}]+\rho_{\Lambda}(\bm{u})$, where $\mathfrak{a}$ is linear in $\bm{u}$ and $\Lambda$ is strictly increasing, is not homogeneous in $U$.
\end{corollary}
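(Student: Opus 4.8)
The plan is to argue by contradiction, first using the linear part $\mathfrak{a}$ to pin down the only admissible homogeneity degree and then showing that the deterministic summand $\rho_\Lambda(\bm{u})$ cannot be homogeneous of that degree when $\Lambda$ is strictly increasing. Throughout I read $\rho_\Lambda(\bm{u})$ as $\rho_\Lambda(\bm{u}; Y)$, the lambda quantile of the linear portfolio $Y := \mathfrak{a}[\bm{u},\bm{X}]$, exactly as in Example \ref{exmp:ops} \eqref{mj:iii}, and I work under whichever of Assumption \ref{asmp:t} or \ref{asmp:h} is in force, so that $Y$ has a continuous density; in particular $Y$ is not $\mathbb{P}$-a.s.\ constant. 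Since $\mathfrak{a}$ is linear in $\bm{u}$, the portfolio $Y$ is $\mathbb{P}$-a.s.\ positively $1$-homogeneous in $U$.

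Next I would suppose, for contradiction, that $g$ is $\mathbb{P}$-a.s.\ positively $\tau$-homogeneous in $U$ for some function $\tau \colon U \to \mathbb{R}$. Fixing $\bm{u} \in U$ and $t>0$ with $t\bm{u}\in U$, and using the $1$-homogeneity of $\mathfrak{a}$ together with the assumed homogeneity of $g$, I would equate the two resulting expressions for $g[t\bm{u},\bm{X}]$ to obtain, $\mathbb{P}$-a.s.,
\begin{equation*}
\bigl(t - t^{\tau(\bm{u})}\bigr)\, Y = t^{\tau(\bm{u})}\rho_\Lambda(\bm{u}) - \rho_\Lambda(t\bm{u})\,.
\end{equation*}
The right-hand side is deterministic whereas $Y$ is not $\mathbb{P}$-a.s.\ constant, so the scalar coefficient $t - t^{\tau(\bm{u})}$ must vanish for every admissible $t$; this forces $\tau(\bm{u}) = 1$ for all $\bm{u}$. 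Substituting $\tau(\bm{u}) = 1$ back collapses the displayed identity to $\rho_\Lambda(t\bm{u}) = t\,\rho_\Lambda(\bm{u})$, i.e.\ $\rho_\Lambda$ would have to be positively $1$-homogeneous in $U$.

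The contradiction then comes from Proposition \ref{prop:homo-lambda}. Because $Y$ is $1$-homogeneous, that proposition gives that $\rho_\Lambda(\cdot; Y)$ is positively $\gamma$-homogeneous with $\gamma(\bm{u}) = \eta_{\Lambda, Y}(-\rho_\Lambda(\bm{u}))$. Comparing with the $1$-homogeneity obtained above forces $\gamma(\bm{u}) = 1$ at every $\bm{u}$ with $\rho_\Lambda(\bm{u}) \neq 0$, and by the Remark following Definition \ref{defn:rda} this is equivalent to $\Lambda'(-\rho_\Lambda(\bm{u})) = 0$. Equivalently, since Assumption \ref{asmp:t} \eqref{asmp:elq} (resp.\ Assumption \ref{asmp:h} \eqref{asmp:grad}) gives $f_Y > \Lambda' \ge 0$, one has $\eta_{\Lambda,Y}(-\rho_\Lambda(\bm{u})) > 1$ wherever $\Lambda'(-\rho_\Lambda(\bm{u})) > 0$, which already rules out $1$-homogeneity. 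Since $\rho_\Lambda$ is continuous in $\bm{u}$ by Theorem \ref{thm:tlqrc}, the values $-\rho_\Lambda(\bm{u})$ range over a nondegenerate interval, on which $\Lambda'$ would then have to vanish identically, contradicting the strict monotonicity of $\Lambda$. Hence no function $\tau$ can make $g$ homogeneous, and $g$ is not homogeneous in $U$.

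I expect the principal obstacle to be precisely this last step: passing from the hypothesis that $\Lambda$ is strictly increasing to the pointwise nonvanishing of $\Lambda'$ at the relevant levels $-\rho_\Lambda(\bm{u})$. A strictly increasing $C^1$ function may have isolated zeros of its derivative, so the contradiction cannot be read off from a single $\bm{u}$ in general; instead one exploits the continuity of $\rho_\Lambda$ to ensure the levels $-\rho_\Lambda(\bm{u})$ fill an interval that must meet the (dense) set where $\Lambda' > 0$, forcing $\gamma(\bm{u}) \neq 1$ there.
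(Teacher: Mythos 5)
Your argument is correct and follows essentially the same route as the paper: the paper's (very terse) justification deduces the corollary from the additive decomposition results of Section \ref{hpo} together with Proposition \ref{prop:homo-lambda}, observing that the linear stochastic part forces any admissible homogeneity degree to equal $1$ while the deterministic summand $\rho_{\Lambda}$ has homogeneity degree $\eta_{\Lambda,Y}(-\rho_{\Lambda}(\bm{u}))>1$ when $\Lambda$ is strictly increasing. You simply re-derive the ``only if'' direction inline via the deterministic-versus-nondegenerate-random separation, and you are in fact more careful than the paper on the final step, namely the passage from ``$\Lambda$ strictly increasing'' to ``$\Lambda'>0$ at the relevant levels $-\rho_{\Lambda}(\bm{u})$'' (and the implicit exclusion of $\rho_{\Lambda}\equiv 0$), which the paper asserts without comment.
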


The assumption that $\Lambda$ is strictly increasing in Corollary \ref{cor:linlq} implies that the homogeneity degree of the lambda quantile is strictly greater than 1. Therefore, by Proposition \ref{propo:operab}, operator $g$ is not positively homogeneous.

\begin{example}
Consider the operators (\ref{op:lin})-(\ref{mj:iv}) from Example \ref{exmp:ops}. These operators are $\mathbb{P}$-a.s. 1-homogeneous in $U$ since the $\min$ and $\max$ functions and the VaR are 1-homogeneous. \\
Operator (\ref{op:lq}) from Example \ref{exmp:ops} is $\mathbb{P}$-a.s. positively homogeneous of degree $\tau \in \mathbb{R}$ in $U$ if, and only if $1=\eta_{\Lambda,Y}(-\rho_{\Lambda}(\bm{u}))$, which means that the $\Lambda^\prime(-\rho_{\Lambda}(\bm{u})) =0$ is  for all $\bm{u} \in U$.
\end{example}

The next result shows that if a deterministic variable $\mathfrak{b}_{\bm{X}}(\bm{u})$ (of portfolio compositions), i.e. a positive cash amount determined by asset units, is added to, or subtracted from, the portfolio profit and loss, then the lambda quantile is reduced or increased, respectively, by the same amount $\mathfrak{b}_{\bm{X}}(\bm{u})$. Indeed, the lambda function is shifted by this deterministic variable. The following result is related to the $\Lambda$-translation invariance property of lambda quantiles in \cite{fmp14}, that is the cash additivity property of lambda quantiles defined on the set of probability measures. However, it should not be confused with the translation invariance property defined on $U$, since we are not adding or subtracting from the portfolio composition $\bm{u}$ but from $\mathfrak{a}_{\bm{X}}(\bm{u})$ instead.

\begin{proposition}\label{prop:tinv}
For fixed $\bm{X}\in\mathcal{X}^n$, consider the portfolio: 
\begin{equation}\label{eq:gxab}
g_{\bm{X}}(\bm{u})=\mathfrak{a}_{\bm{X}}(\bm{u})+\mathfrak{b}_{\bm{X}}(\bm{u}),
\end{equation}
where $\mathfrak{a}_{\bm{X}}(u) := \mathfrak{a}[u, \bm{X}]$ and $\mathfrak{b}_{\bm{X}}(u) := \mathfrak{b}(u, \bm{X})$. Also, define $\Gamma(z):=\Lambda(z+\mathfrak{b}_{\bm{X}}(\bm{u}))$ for all $z\in\mathbb{R}$. Then, for all $\bm{u}\in U$, it holds that:
\begin{equation*}
\rho_{\Lambda}(\bm{u};g_{\bm{X}})=\rho_{\Gamma}(\bm{u};\mathfrak{a}_{\bm{X}})-\mathfrak{b}_{\bm{X}}(\bm{u}).
\end{equation*}
\end{proposition}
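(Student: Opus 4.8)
The plan is to fix $\bm{u}\in U$ and $\bm{X}\in\mathcal{X}^n$ throughout and to reduce the identity to a single change of variables inside the infimum that defines the lambda quantile. The crucial structural observation is that, for fixed $\bm{X}$ and $\bm{u}$, the deterministic part $\mathfrak{b}_{\bm{X}}(\bm{u})$ is a (non-random) constant, so adding it to $\mathfrak{a}_{\bm{X}}(\bm{u})$ merely translates the portfolio distribution by this constant.

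First I would expand the definition of the lambda quantile from Definition \ref{def:lq},
\begin{equation*}
\rho_{\Lambda}(\bm{u};g_{\bm{X}})=-\inf\{y\in\mathbb{R}\mid\mathbb{P}(g_{\bm{X}}(\bm{u})\leq y)>\Lambda(y)\}.
\end{equation*}
Since $g_{\bm{X}}(\bm{u})=\mathfrak{a}_{\bm{X}}(\bm{u})+\mathfrak{b}_{\bm{X}}(\bm{u})$ with $\mathfrak{b}_{\bm{X}}(\bm{u})$ constant, the event $\{g_{\bm{X}}(\bm{u})\leq y\}$ coincides with $\{\mathfrak{a}_{\bm{X}}(\bm{u})\leq y-\mathfrak{b}_{\bm{X}}(\bm{u})\}$, so the defining set becomes
\begin{equation*}
A:=\{y\in\mathbb{R}\mid\mathbb{P}(\mathfrak{a}_{\bm{X}}(\bm{u})\leq y-\mathfrak{b}_{\bm{X}}(\bm{u}))>\Lambda(y)\}.
\end{equation*}
Next, I would apply the bijective change of variables $z=y-\mathfrak{b}_{\bm{X}}(\bm{u})$ on $\mathbb{R}$. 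Substituting $y=z+\mathfrak{b}_{\bm{X}}(\bm{u})$ and invoking the definition $\Gamma(z)=\Lambda(z+\mathfrak{b}_{\bm{X}}(\bm{u}))$ turns the condition into $\mathbb{P}(\mathfrak{a}_{\bm{X}}(\bm{u})\leq z)>\Gamma(z)$, so that $A=\mathfrak{b}_{\bm{X}}(\bm{u})+A'$ with $A':=\{z\in\mathbb{R}\mid\mathbb{P}(\mathfrak{a}_{\bm{X}}(\bm{u})\leq z)>\Gamma(z)\}$. Because translation by a constant commutes with the infimum, $\inf A=\mathfrak{b}_{\bm{X}}(\bm{u})+\inf A'$, and recognising $\inf A'=-\rho_{\Gamma}(\bm{u};\mathfrak{a}_{\bm{X}})$ from Definition \ref{def:lq} yields
\begin{equation*}
\rho_{\Lambda}(\bm{u};g_{\bm{X}})=-\inf A=-\mathfrak{b}_{\bm{X}}(\bm{u})-\inf A'=\rho_{\Gamma}(\bm{u};\mathfrak{a}_{\bm{X}})-\mathfrak{b}_{\bm{X}}(\bm{u}),
\end{equation*}
which is the claim.

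The argument is essentially a one-line change of variables, so there is no substantial obstacle; the only points requiring care are bookkeeping. I would verify that $\Gamma$ is a legitimate lambda function, namely that it is a horizontal shift of $\Lambda$ and hence inherits the range $[\lambda_m,\lambda_M]$, so that $\rho_{\Gamma}$ is well-defined and finite, and I would note that $\Gamma$ depends (through $\mathfrak{b}_{\bm{X}}(\bm{u})$) on the fixed $\bm{u}$, which is harmless since the identity is established pointwise in $\bm{u}$. The remaining subtlety is purely the sign convention: tracking the leading negative sign in the definition of $\rho_{\Lambda}$ together with the right-quantile convention, but this is routine.
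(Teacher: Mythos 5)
Your proposal is correct and follows essentially the same route as the paper's proof: expand the defining infimum, rewrite the event as $\{\mathfrak{a}_{\bm{X}}(\bm{u})\leq y-\mathfrak{b}_{\bm{X}}(\bm{u})\}$, substitute $z=y-\mathfrak{b}_{\bm{X}}(\bm{u})$, and use that translation by a constant commutes with the infimum. Your added check that $\Gamma$ inherits the range $[\lambda_m,\lambda_M]$ and is therefore a legitimate lambda function is a sensible piece of bookkeeping that the paper leaves implicit.
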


\begin{proof}
Observe that for all $\bm{u}\in U$, we can write:
\begin{align*}
\rho_{\Lambda}(\bm{u};g_{\bm{X}})&:=-\inf\{y~|~\mathbb{P}[\mathfrak{a}_{\bm{X}}(\bm{u})+\mathfrak{b}_{\bm{X}}(\bm{u})\leq y]>\Lambda(y)\}
\\
&=-\inf\{y~|~\mathbb{P}[\mathfrak{a}_{\bm{X}}(\bm{u})\leq y-\mathfrak{b}_{\bm{X}}(\bm{u})]>\Lambda(y)\}
\\
&=-\inf\{z+\mathfrak{b}_{\bm{X}}(\bm{u})~|~\mathbb{P}[\mathfrak{a}_{\bm{X}}(\bm{u})\leq z]>\Lambda(z+\mathfrak{b}_{\bm{X}}(\bm{u}))\}
\\
&=-\inf\{z~|~\mathbb{P}[\mathfrak{a}_{\bm{X}}(\bm{u})\leq z]>\Gamma(z)\}-\mathfrak{b}_{\bm{X}}(\bm{u})\\
&=\rho_{\Gamma}(\bm{u};\mathfrak{a}_{\bm{X}})-\mathfrak{b}_{\bm{X}}(\bm{u})\,.
\end{align*}
\end{proof}

In Proposition \ref{prop:tinv}, $\mathfrak{b}_
{\bm{X}}(\bm{u})$ corresponds to the cash amount determined by the asset units $\bm{u}$, which is added to the existing portfolio $\mathfrak{a}_{\bm{X}}(\bm{u})$ to obtain the $g_{\bm{X}}(\bm{u})$ of the newly formed portfolio. As a result, the risk of the new portfolio, i.e. $\rho_{\Lambda}(\bm{u};g_{\bm{X}})$, is obtained by subtracting $\mathfrak{b}_{\bm{X}}(\bm{u})$ from the risk of the existing portfolio, i.e. $\rho_{\Gamma}(\bm{u};\mathfrak{a}_{\bm{X}})$. Note that the cash injection causes the lambda quantile of existing and new portfolios to be calculated using different, but related lambda functions, with the relationship given by $\Gamma(z):=\Lambda(z+\mathfrak{b}_{\bm{X}}(\bm{u}))$.

\begin{proposition}\label{prop:homogeneity-non-linear}
Suppose $g_{\bm{X}}$ can be written in the form \eqref{eq:gxab}. If $\rho_{\Gamma}(\bm{u};\mathfrak{a}_{\bm{X}})$ and $\mathfrak{b}_{\bm{X}}$ are positively homogeneous in $U$ with the same degree, then $\rho_{\Lambda}(\bm{u};g_{\bm{X}})$ is positively homogeneous in $U$ with degree:
\begin{equation}
\eta_{\Lambda,Y}(-\rho_{\Lambda}(\bm{u};g_{\bm{X}}))=\eta_{\Gamma,Z}(-\rho_{\Gamma}(\bm{u};\mathfrak{a}_{\bm{X}})),
\end{equation}
where $Y=g_{\bm{X}}(\bm{u})$ and $Z:=\mathfrak{a}_{\bm{X}}(\bm{u})$.
\end{proposition}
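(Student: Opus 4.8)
The plan is to transport the homogeneity of the two pieces in the additive representation of Proposition~\ref{prop:tinv} onto $\rho_{\Lambda}(\cdot;g_{\bm{X}})$, and then to identify the resulting degree with the portfolio density adjustment by means of the constant shift relating $Y$ and $Z$.

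First I would invoke Proposition~\ref{prop:tinv}, which for each fixed $\bm{u}$ gives
\[
\rho_{\Lambda}(\bm{u};g_{\bm{X}})=\rho_{\Gamma}(\bm{u};\mathfrak{a}_{\bm{X}})-\mathfrak{b}_{\bm{X}}(\bm{u}),
\qquad
\Gamma(z):=\Lambda\big(z+\mathfrak{b}_{\bm{X}}(\bm{u})\big).
\]
By hypothesis $\rho_{\Gamma}(\cdot;\mathfrak{a}_{\bm{X}})$ and $\mathfrak{b}_{\bm{X}}$ are positively homogeneous in $U$ of one common degree. Since the difference of two maps that are positively homogeneous of the same (possibly $\bm{u}$-dependent) degree is again positively homogeneous of that degree — from $r_1(t\bm{u})=t^{\delta(\bm{u})}r_1(\bm{u})$ and $r_2(t\bm{u})=t^{\delta(\bm{u})}r_2(\bm{u})$ one gets $(r_1-r_2)(t\bm{u})=t^{\delta(\bm{u})}(r_1-r_2)(\bm{u})$ — the map $\rho_{\Lambda}(\cdot;g_{\bm{X}})$ inherits that degree. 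This settles homogeneity and reduces the claim to computing the degree.

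To identify the degree, I would apply Proposition~\ref{prop:homo-lambda} to the stochastic part $\mathfrak{a}_{\bm{X}}$, which under the standing assumption of this section is $\mathbb{P}$-a.s.\ positively $1$-homogeneous: with lambda function $\Gamma$ and $Z=\mathfrak{a}_{\bm{X}}(\bm{u})$ this yields that $\rho_{\Gamma}(\cdot;\mathfrak{a}_{\bm{X}})$ is homogeneous of degree $\eta_{\Gamma,Z}(-\rho_{\Gamma}(\bm{u};\mathfrak{a}_{\bm{X}}))$, so the common degree above equals $\eta_{\Gamma,Z}(-\rho_{\Gamma}(\bm{u};\mathfrak{a}_{\bm{X}}))$. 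It then remains to prove the stated identity $\eta_{\Lambda,Y}(-\rho_{\Lambda}(\bm{u};g_{\bm{X}}))=\eta_{\Gamma,Z}(-\rho_{\Gamma}(\bm{u};\mathfrak{a}_{\bm{X}}))$, which I would check by direct substitution into Definition~\ref{defn:rda}. Because $Y=Z+\mathfrak{b}_{\bm{X}}(\bm{u})$ differs from $Z$ only by the deterministic constant $\mathfrak{b}_{\bm{X}}(\bm{u})$, one has $f_Y(y)=f_Z\big(y-\mathfrak{b}_{\bm{X}}(\bm{u})\big)$; combined with $-\rho_{\Lambda}(\bm{u};g_{\bm{X}})=-\rho_{\Gamma}(\bm{u};\mathfrak{a}_{\bm{X}})+\mathfrak{b}_{\bm{X}}(\bm{u})$ this gives $f_Y(-\rho_{\Lambda}(\bm{u};g_{\bm{X}}))=f_Z(-\rho_{\Gamma}(\bm{u};\mathfrak{a}_{\bm{X}}))$, while differentiating $\Gamma(z)=\Lambda(z+\mathfrak{b}_{\bm{X}}(\bm{u}))$ gives $\Gamma'(-\rho_{\Gamma}(\bm{u};\mathfrak{a}_{\bm{X}}))=\Lambda'(-\rho_{\Lambda}(\bm{u};g_{\bm{X}}))$. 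Substituting these into the numerator and denominator of~\eqref{rdp} yields the equality of the two portfolio density adjustments and completes the identification.

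The step I expect to be the main obstacle is the interplay between the rescaling $\bm{u}\mapsto t\bm{u}$ and the fact that the shifted lambda function $\Gamma$ in Proposition~\ref{prop:tinv} itself depends on $\bm{u}$ through $\mathfrak{b}_{\bm{X}}(\bm{u})$: rescaling $\bm{u}$ replaces $\Gamma$ by $z\mapsto\Lambda(z+\mathfrak{b}_{\bm{X}}(t\bm{u}))$, so the hypothesis that ``$\rho_{\Gamma}(\cdot;\mathfrak{a}_{\bm{X}})$ is homogeneous'' and the use of Proposition~\ref{prop:homo-lambda} must be read for a \emph{fixed} $\Gamma$ and then reconciled with the $\bm{u}$-dependent object that the decomposition actually produces. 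I would keep track of this by differentiating the representation with an explicit chain-rule term for the $\bm{u}$-dependence of $\Gamma$ and verifying that this extra contribution recombines with the remaining pieces so that the partial derivatives agree with the formula of Theorem~\ref{thm:tlqrc}; once this bookkeeping is carried out, the difference argument and the substitution into~\eqref{rdp} are routine.
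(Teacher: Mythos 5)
Your proof is correct and its first half is exactly the paper's: Proposition \ref{prop:tinv} plus the observation that a difference of two maps positively homogeneous of the same degree is again homogeneous of that degree yields the homogeneity of $\rho_{\Lambda}(\cdot;g_{\bm{X}})$. Where you genuinely diverge is in establishing the identity $\eta_{\Lambda,Y}(-\rho_{\Lambda}(\bm{u};g_{\bm{X}}))=\eta_{\Gamma,Z}(-\rho_{\Gamma}(\bm{u};\mathfrak{a}_{\bm{X}}))$. The paper applies the Euler decomposition (Theorem \ref{thm:ed} together with \eqref{eq: Euler}) to both $\rho_{\Lambda}(\cdot;g_{\bm{X}})$ and $\rho_{\Gamma}(\cdot;\mathfrak{a}_{\bm{X}})$ and equates the two density adjustments because each equals the common degree $\tau$; you invoke the Euler machinery only once (for $\rho_{\Gamma}(\cdot;\mathfrak{a}_{\bm{X}})$) and then prove the identity by direct substitution into Definition \ref{defn:rda}, using $f_Y(y)=f_Z(y-\mathfrak{b}_{\bm{X}}(\bm{u}))$ and $\Gamma'(-\rho_{\Gamma}(\bm{u};\mathfrak{a}_{\bm{X}}))=\Lambda'(-\rho_{\Lambda}(\bm{u};g_{\bm{X}}))$. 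That substitution is a real gain: it shows the equality of the two portfolio density adjustments holds unconditionally, independently of any homogeneity hypothesis, and it would let you avoid entirely the $\bm{u}$-dependence of $\Gamma$ that you flag as the main obstacle --- apply Theorem \ref{thm:ed} only to $\rho_{\Lambda}(\cdot;g_{\bm{X}})$ with the \emph{fixed} function $\Lambda$ to read off the degree as $\eta_{\Lambda,Y}(-\rho_{\Lambda}(\bm{u};g_{\bm{X}}))$, then transport it to $\eta_{\Gamma,Z}$ by your computation, so the Euler identity is never needed for the $\bm{u}$-dependent $\Gamma$ (the paper's own proof does invoke it there and is exposed to the same subtlety). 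The one point to watch is that reading the homogeneity degree off Theorem \ref{thm:ed} as a bare density adjustment presupposes that the underlying portfolio is $\mathbb{P}$-a.s.\ $1$-homogeneous; you assert this for $\mathfrak{a}_{\bm{X}}$ as a standing assumption, and the paper's argument relies on the analogous fact just as implicitly, so this is not a defect relative to the paper.
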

\begin{proof}
Suppose $\rho_{\Gamma}(\bm{u};\mathfrak{a}_{\bm{X}})$ and $\mathfrak{b}_{\bm{X}}$ have homogeneity degree $\tau\in\mathbb{R}$. Then, by Proposition \ref{prop:tinv} we can write the following for any $t>0$:
\begin{align*}
\rho_{\Lambda}(t\bm{u};g_{\bm{X}})&=\rho_{\Gamma}(t\bm{u};\mathfrak{a}_{\bm{X}})-\mathfrak{b}_{\bm{X}}(t\bm{u})\\
&=t^{\tau}\rho_{\Gamma}(\bm{u};\mathfrak{a}_{\bm{X}})-t^{\tau}\mathfrak{b}_{\bm{X}}(\bm{u})\\
&=t^{\tau}\rho_{\Lambda}(\bm{u};g_{\bm{X}}),
\end{align*}
which implies $\rho_{\Lambda}(\bm{u};g_{\bm{X}})$ is positively $\tau$-homogeneous in $U$. Thus by  Equation \eqref{eq: Euler}, both $\rho_{\Lambda}(\bm{u};g_{\bm{X}})$ and $\rho_{\Gamma}(\bm{u};\mathfrak{a}_{\bm{X}})$ can be written in the form (\ref{lqed}) from Theorem \ref{thm:ed}. Therefore, the homogeneity degrees of $\rho_{\Lambda}(\bm{u};g_{\bm{X}})$ and $\rho_{\Gamma}(\bm{u};\mathfrak{a}_{\bm{X}})$ are given by $\eta_{\Lambda,Y}(-\rho_{\Lambda}(\bm{u};g_{\bm{X}}))$ and $\eta_{\Gamma,Z}(-\rho_{\Gamma}(\bm{u};\mathfrak{a}_{\bm{X}}))$ respectively, which proves our result.
\end{proof}
Proposition \ref{prop:homogeneity-non-linear} in particular applies to Example \ref{exmp:ops} cases \eqref{mj:i}, \eqref{mj:iii}, and \eqref{op:lq}, with $\mathfrak{a}_{\bm{X}}(\bm{u}) = u_1X_1 +u_2 X_2 $ for \eqref{mj:i} and \eqref{mj:iii}, and with $\mathfrak{a}_{\bm{X}}(\bm{u}) = u_1^\tau X_1 +u_2^\tau X_2 $ for example \eqref{op:lq}. The components $\mathfrak{b}_{\bm{X}}$ are given by $\mathfrak{b}_{\bm{X}}(\bm{u}) = -\mathbb{E}[ u_1X_1 +u_2 X_2 ]$, $\mathfrak{b}_{\bm{X}}(\bm{u}) = -VaR_\lambda(\bm{u};\mathfrak{a}_{\bm{X}})$, and $\mathfrak{b}_{\bm{X}}(\bm{u}) = - \rho_\Lambda(\bm{u};\mathfrak{a}_{\bm{X}})$, respectively.

\begin{proposition}
Let $g_{\bm{X}}$ be of the form \eqref{eq:gxab} and assume:
\begin{enumerate}[(i)]
\item Assumption \ref{asmp:t} is satisfied for $\mathfrak{a}_{\bm{X}}$ and $\Gamma$;
\item $\Gamma$ is continuously differentiable on $\mathbb{R}$;
\item $\mathfrak{b}_{\bm{X}}$ is continuously differentiable by $u_i$ for $i=1,\ldots,n$ for all $\bm{u}\in U$.
\end{enumerate}
Then, $\rho_{\Lambda}(\cdot;g_{\bm{X}})$ is partially differentiable in $U$ with continuous derivatives given by:
\begin{equation}
\frac{\partial \rho_{\Lambda}}{\partial u_i}(\bm{u};g_{\bm{X}})=-\eta_{\Gamma,Z}(-\rho_{\Gamma}(\bm{u};\mathfrak{a}_{\bm{X}}))\mathbb{E}[\partial_{u_i}\mathfrak{a}_{\bm{X}}(\bm{u})|Z=-\rho_{\Gamma}(\bm{u};\mathfrak{a}_{\bm{X}})]-\partial_{u_i}\mathfrak{b}(\bm{u}),
\end{equation}
for $i=1,\dots,n$.
\end{proposition}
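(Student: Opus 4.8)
The plan is to build the derivative from the additive decomposition supplied by Proposition \ref{prop:tinv}. For fixed $\bm{X}$, that proposition gives the pointwise identity
\begin{equation*}
\rho_{\Lambda}(\bm{u};g_{\bm{X}}) = \rho_{\Gamma}(\bm{u};\mathfrak{a}_{\bm{X}}) - \mathfrak{b}_{\bm{X}}(\bm{u}),
\end{equation*}
valid for every $\bm{u}\in U$, with $\Gamma(z)=\Lambda(z+\mathfrak{b}_{\bm{X}}(\bm{u}))$ and $Z=\mathfrak{a}_{\bm{X}}(\bm{u})$. Differentiating this identity in $u_i$ reduces the problem to two pieces: the partial derivative of the lambda quantile $\rho_{\Gamma}(\cdot;\mathfrak{a}_{\bm{X}})$ of the stochastic part, and the ordinary partial derivative of the deterministic part $\mathfrak{b}_{\bm{X}}$. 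The second piece is immediate, since hypothesis (iii) states that $\mathfrak{b}_{\bm{X}}$ is continuously differentiable in each $u_i$, contributing $-\partial_{u_i}\mathfrak{b}_{\bm{X}}(\bm{u})$.

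For the $\rho_{\Gamma}$ term, I would invoke Theorem \ref{thm:tlqrc}: hypotheses (i) and (ii) are exactly the hypotheses of condition \ref{thmasmp:t}, read with the stochastic part $\mathfrak{a}_{\bm{X}}$ in place of $g_{\bm{X}}$ and the lambda function $\Gamma$ in place of $\Lambda$ (Assumption \ref{asmp:t} holds for $\mathfrak{a}_{\bm{X}}$ and $\Gamma$, and $\Gamma$ is continuously differentiable on $\mathbb{R}$). The theorem then yields that $\rho_{\Gamma}(\cdot;\mathfrak{a}_{\bm{X}})$ is continuously partially differentiable with
\begin{equation*}
\frac{\partial \rho_{\Gamma}}{\partial u_i}(\bm{u};\mathfrak{a}_{\bm{X}}) = -\eta_{\Gamma,Z}\big(-\rho_{\Gamma}(\bm{u};\mathfrak{a}_{\bm{X}})\big)\,\mathbb{E}\!\left[\partial_{u_i}\mathfrak{a}_{\bm{X}}(\bm{u})\mid Z=-\rho_{\Gamma}(\bm{u};\mathfrak{a}_{\bm{X}})\right].
\end{equation*}
Adding the two pieces, and noting that both $\rho_{\Gamma}(\cdot;\mathfrak{a}_{\bm{X}})$ and $\mathfrak{b}_{\bm{X}}$ have continuous partials so their sum does too, would deliver both the asserted continuous partial differentiability on $U$ and the stated formula.

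The step I expect to be the main obstacle is precisely the application of Theorem \ref{thm:tlqrc} to $\rho_{\Gamma}(\cdot;\mathfrak{a}_{\bm{X}})$, because the lambda function $\Gamma(z)=\Lambda(z+\mathfrak{b}_{\bm{X}}(\bm{u}))$ itself depends on $\bm{u}$ through $\mathfrak{b}_{\bm{X}}(\bm{u})$, whereas Theorem \ref{thm:tlqrc} presupposes a lambda function held fixed as $\bm{u}$ varies. To differentiate the identity legitimately one must either verify that the $\bm{u}$-dependence of $\Gamma$ does not contribute to $\tfrac{\partial}{\partial u_i}\rho_{\Gamma}(\bm{u};\mathfrak{a}_{\bm{X}})$, or else track that contribution explicitly. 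I would settle this by an independent route that bypasses $\Gamma$ entirely: apply the implicit function theorem directly to $H(y,\bm{u}):=\mathbb{P}(g_{\bm{X}}(\bm{u})\le y)-\Lambda(y)$ at its root $y=-\rho_{\Lambda}(\bm{u};g_{\bm{X}})$, using Remark \ref{rem:hongps} to write $\partial_{u_i}F(y,\bm{u})=-f_Y(y)\,\mathbb{E}[\partial_{u_i}g_{\bm{X}}(\bm{u})\mid g_{\bm{X}}(\bm{u})=y]$ with the split $\partial_{u_i}g_{\bm{X}}=\partial_{u_i}\mathfrak{a}_{\bm{X}}+\partial_{u_i}\mathfrak{b}_{\bm{X}}$, together with the shift relations $f_Y(-\rho_{\Lambda})=f_Z(-\rho_{\Gamma})$ and $\Lambda'(-\rho_{\Lambda})=\Gamma'(-\rho_{\Gamma})$ that follow from $Y=Z+\mathfrak{b}_{\bm{X}}(\bm{u})$ and the definition of $\Gamma$. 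This direct computation pins down exactly where the conditioning event $Z=-\rho_{\Gamma}$ and the density adjustment $\eta_{\Gamma,Z}(-\rho_{\Gamma})$ enter, and it is the cleanest way to confirm the precise form in which the deterministic term $\partial_{u_i}\mathfrak{b}_{\bm{X}}$ appears in the final expression.
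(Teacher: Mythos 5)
Your first route is precisely the paper's own proof: decompose via Proposition \ref{prop:tinv}, apply Theorem \ref{thm:tlqrc} with condition \ref{thmasmp:t} to $\rho_{\Gamma}(\cdot;\mathfrak{a}_{\bm{X}})$, and differentiate $\mathfrak{b}_{\bm{X}}$ directly. The obstacle you flag, however, is genuine, and the paper's proof does not address it either: $\Gamma(z)=\Lambda(z+\mathfrak{b}_{\bm{X}}(\bm{u}))$ moves with $\bm{u}$, whereas Theorem \ref{thm:tlqrc} differentiates the lambda quantile for a \emph{fixed} lambda function, so it only supplies the partial derivative of $\rho_{\Gamma}(\bm{u};\mathfrak{a}_{\bm{X}})$ with $\Gamma$ frozen. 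You should carry your proposed direct computation to the end, because it shows that the extra chain-rule term does not vanish in general. Writing $\tilde\rho(\bm{u},c):=-\inf\{z\,|\,\mathbb{P}(\mathfrak{a}_{\bm{X}}(\bm{u})\le z)>\Lambda(z+c)\}$, the implicit function theorem in $c$ gives $\partial_c\tilde\rho=1-\eta_{\Gamma,Z}(-\rho_{\Gamma}(\bm{u};\mathfrak{a}_{\bm{X}}))$ at $c=\mathfrak{b}_{\bm{X}}(\bm{u})$, so the total $u_i$-derivative of the first summand is
\begin{equation*}
-\eta_{\Gamma,Z}(-\rho_{\Gamma})\,\mathbb{E}[\partial_{u_i}\mathfrak{a}_{\bm{X}}(\bm{u})\,|\,Z=-\rho_{\Gamma}]+\bigl(1-\eta_{\Gamma,Z}(-\rho_{\Gamma})\bigr)\,\partial_{u_i}\mathfrak{b}_{\bm{X}}(\bm{u})\,,
\end{equation*}
where $-\rho_\Gamma$ abbreviates $-\rho_{\Gamma}(\bm{u};\mathfrak{a}_{\bm{X}})$. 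Adding $-\partial_{u_i}\mathfrak{b}_{\bm{X}}(\bm{u})$ yields
\begin{equation*}
\frac{\partial \rho_{\Lambda}}{\partial u_i}(\bm{u};g_{\bm{X}})=-\eta_{\Gamma,Z}(-\rho_{\Gamma})\Bigl(\mathbb{E}[\partial_{u_i}\mathfrak{a}_{\bm{X}}(\bm{u})\,|\,Z=-\rho_{\Gamma}]+\partial_{u_i}\mathfrak{b}_{\bm{X}}(\bm{u})\Bigr)\,,
\end{equation*}
which is exactly what your alternative route delivers: applying Theorem \ref{thm:tlqrc} directly to $g_{\bm{X}}=\mathfrak{a}_{\bm{X}}+\mathfrak{b}_{\bm{X}}$ with the shift relations $f_Y(-\rho_{\Lambda})=f_Z(-\rho_{\Gamma})$, $\Lambda'(-\rho_{\Lambda})=\Gamma'(-\rho_{\Gamma})$ and $\{Y=-\rho_{\Lambda}\}=\{Z=-\rho_{\Gamma}\}$ gives the same expression. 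Hence the deterministic part also acquires the density adjustment, and the formula displayed in the proposition coincides with this only when $\Gamma'(-\rho_{\Gamma})=0$, i.e. $\eta_{\Gamma,Z}(-\rho_{\Gamma})=1$. Your instinct to bypass $\Gamma$ and work with $H(y,\bm{u})=F(y,\bm{u})-\Lambda(y)$ directly is the correct repair; what your write-up lacks is only the execution of that calculation and the observation that it corrects the coefficient of $\partial_{u_i}\mathfrak{b}_{\bm{X}}$.
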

\begin{proof}
By Proposition \ref{prop:tinv}, we can write
\begin{equation*}
\rho_{\Lambda}(\bm{u};g_{\bm{X}})=\rho_{\Gamma}(\bm{u};\mathfrak{a}_{\bm{X}})-\mathfrak{b}_{\bm{X}}(\bm{u}).
\end{equation*}
By Theorem \ref{thm:tlqrc} with condition \ref{thmasmp:t}, $\rho_{\Gamma}(\cdot;\mathfrak{a}_{\bm{X}})$ is partially differentiable in $U$ with continuous derivatives given by:
\begin{equation*}
\frac{\partial \rho_{\Gamma}}{\partial u_i}(\bm{u};\mathfrak{a}_{\bm{X}})=-\eta_{\Gamma,Z}(-\rho_{\Gamma}(\bm{u};\mathfrak{a}_{\bm{X}}))\mathbb{E}[\partial_{u_i} \mathfrak{a}_{\bm{X}}(\bm{u})|Z=-\rho_{\Gamma}(\bm{u};\mathfrak{a}_{\bm{X}})],
\end{equation*}
for $i=1,\dots,n$. Since both $\rho_{\Gamma}(\cdot;\mathfrak{a}_{\bm{X}})$ and $\mathfrak{b}_{\bm{X}}$ are continuously differentiable in $U$, we conclude that $\rho_{\Lambda}(\cdot;g_{\bm{X}})$ is continuously partially differentiable in $U$.
\end{proof}

As seen in Examples \ref{exmp:ops}, the function $\mathfrak{a}$ is typically a polynomial in $\bm{X}$ whilst $\mathfrak{b}$ an expectation or lambda quantile. A portfolio operator may also be constructed via a function of these ``building blocks". Thus, we consider $g$ to be a composition of a function $f:\mathcal{X}\times\mathbb{R}\rightarrow\mathcal{X}$ with $\mathfrak{a}$ and $\mathfrak{b}$, i.e. we consider operators of the form
\begin{equation}\label{eqn:opr}
g[\bm{u},\bm{X}]=(f\circ(\mathfrak{a},\mathfrak{b}))(\bm{u,X})=f(\mathfrak{a}[\bm{u},\bm{X}],\mathfrak{b}(\bm{u,X})).
\end{equation}
Since $f$ acts on $\mathfrak{a}[\bm{u},\bm{X}]\in\mathcal{X}$ and $\mathfrak{b}(\bm{u,X})\in\mathbb{R}$, positively homogeneity of the function $f$ is discussed in $\mathcal{X}$ and $\mathbb{R}$, but not in $U$. The function $f$ in this case may be implicitly  positively homogeneous in $U$.

\begin{proposition}
Suppose the function $\mathfrak{a}:U\times\mathcal{X}^n\rightarrow\mathcal{X}$ is $\mathbb{P}$-a.s. positively $\tau$-homogeneous in $U$ and $\mathfrak{b}:U\times\mathcal{X}^n\rightarrow\mathbb{R}$ is positively $\tau$-homogeneous in $U$. Also, suppose the function $f:\mathcal{X}\times\mathbb{R}\rightarrow\mathcal{X}$ is $\mathbb{P}$-a.s. positively $\nu$-homogeneous in both $\mathcal{X}$ and $\mathbb{R}$. Then, the operator in Equation \eqref{eqn:opr} is $\mathbb{P}$-a.s. positively homogeneous of degree $\tau\nu$ in $U$.
\end{proposition}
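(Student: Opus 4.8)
The plan is to verify the defining identity of Definition~\ref{defn:esh} for $g$ directly, by pushing the homogeneity of the two ``building blocks'' $\mathfrak{a}$ and $\mathfrak{b}$ through the outer map $f$. Fix an arbitrary $\bm{X}\in\mathcal{X}^n$, a composition $\bm{u}\in U$, and a scalar $t>0$ with $t\bm{u}\in U$. Since $\mathfrak{a}$ is $\mathbb{P}$-a.s. positively $\tau$-homogeneous in $U$, there is a null set $N_1$ off which $\mathfrak{a}[t\bm{u},\bm{X}]=t^{\tau}\mathfrak{a}[\bm{u},\bm{X}]$, whereas $\mathfrak{b}$ is deterministic and positively $\tau$-homogeneous, so that $\mathfrak{b}(t\bm{u},\bm{X})=t^{\tau}\mathfrak{b}(\bm{u},\bm{X})$ holds identically.

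The next step is to substitute these two relations into the composition \eqref{eqn:opr}. Writing $Z:=\mathfrak{a}[\bm{u},\bm{X}]\in\mathcal{X}$ and $c:=\mathfrak{b}(\bm{u},\bm{X})\in\mathbb{R}$, and setting the common scaling factor $s:=t^{\tau}>0$, the relations above give
\begin{equation*}
g[t\bm{u},\bm{X}]=f\big(\mathfrak{a}[t\bm{u},\bm{X}],\,\mathfrak{b}(t\bm{u},\bm{X})\big)=f\big(s\,Z,\,s\,c\big)\qquad\mathbb{P}\text{-a.s.}
\end{equation*}
I would then invoke the $\mathbb{P}$-a.s. joint $\nu$-homogeneity of $f$ in $\mathcal{X}$ and $\mathbb{R}$, applied to the arguments $Z$ and $c$ with the single positive scalar $s$, which yields a further null set $N_2$ off which $f(s\,Z,s\,c)=s^{\nu}f(Z,c)$. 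Off $N_1\cup N_2$ one then obtains
\begin{equation*}
g[t\bm{u},\bm{X}]=(t^{\tau})^{\nu}\,f\big(Z,c\big)=t^{\tau\nu}\,f\big(\mathfrak{a}[\bm{u},\bm{X}],\mathfrak{b}(\bm{u},\bm{X})\big)=t^{\tau\nu}\,g[\bm{u},\bm{X}],
\end{equation*}
and since $\mathbb{P}(N_1\cup N_2)=0$, this is exactly the identity required by Definition~\ref{defn:esh} with $E=\Omega$ and homogeneity degree $\tau\nu$. Note that it is the use of a \emph{single} scalar $s=t^{\tau}$ in both slots of $f$ that produces the product degree $\tau\nu$ (rather than $2\tau\nu$), which fixes the intended reading of ``$\nu$-homogeneous in both $\mathcal{X}$ and $\mathbb{R}$'' as joint homogeneity.

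The only delicate point — and the step I expect to need the most care — is the passage that replaces the genuine argument $\mathfrak{a}[t\bm{u},\bm{X}]$ by $t^{\tau}\mathfrak{a}[\bm{u},\bm{X}]=s\,Z$ inside $f$. This is legitimate because the two random variables agree outside $N_1$ and $f$ is a well-defined map on $\mathcal{X}$, so it respects $\mathbb{P}$-a.s. equality of its stochastic argument; combining this with the a.s. homogeneity of $f$ is precisely what forces us to work modulo the union $N_1\cup N_2$ rather than pointwise on all of $\Omega$. Everything else is a direct substitution, and no hypothesis beyond the stated $\tau$- and $\nu$-homogeneity is required — in particular $t>0$ guarantees $s=t^{\tau}>0$, so that the \emph{positive} homogeneity of $f$ is applicable.
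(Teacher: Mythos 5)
Your proposal is correct and follows essentially the same direct-substitution argument as the paper: apply the $\tau$-homogeneity of $\mathfrak{a}$ and $\mathfrak{b}$ to the inner arguments, then the $\nu$-homogeneity of $f$ with the single scalar $t^{\tau}$ to obtain degree $\tau\nu$. Your explicit tracking of the null sets $N_1\cup N_2$ and your remark on the joint reading of ``$\nu$-homogeneous in both slots'' are slightly more careful than the paper's pointwise-in-$\omega$ computation, but the route is the same.
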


\begin{proof}
Noting that $\mathfrak{b}(\bm{u,X})$ is constant across outcomes $\omega\in\Omega$, we can write the following for almost all $\omega\in\Omega$ and for any $t>0$ and $\bm{u}\in U$:
\begin{align*}
g[t\bm{u,X}](\omega)&=(f\circ(\mathfrak{a},\mathfrak{b}))(t\bm{u,X})(\omega)\\
&=f(\mathfrak{a}[t\bm{u,X}](\omega),\mathfrak{b}(t\bm{u,X}))\\
&=f(t^{\tau}\mathfrak{a}[\bm{u},\bm{X}](\omega),t^{\tau}\mathfrak{b}(\bm{u,X}))\\
&=(t^{\tau})^{\nu}f(\mathfrak{a}[\bm{u},\bm{X}](\omega),\mathfrak{b}(\bm{u,X}))\\
&=t^{\tau\nu}f(\mathfrak{a}[\bm{u},\bm{X}](\omega),\mathfrak{b}(\bm{u,X}))\\
&=t^{\tau\nu}(f\circ(\mathfrak{a},\mathfrak{b}))(\bm{u,X})(\omega)\\
&=t^{\tau\nu}g[\bm{u},\bm{X}](\omega).
\end{align*}
Hence, $g$ is $\mathbb{P}$-a.s. positively homogeneous in $U$ of degree $\tau\nu$.
\end{proof}
\vspace{-2em}
\section{Applications to financial markets}\label{sect:applications}

In this section we present an application using real market data to illustrate the lambda quantile's homogeneity degree, its generalised Euler and risk contributions on a non-linear but homogeneous portfolio. \\
\subsection{Portfolio setup}
We consider a non-linear portfolio where we measure the portfolio risk using lambda quantiles; a study which has so far not been conducted in the literature. For this purpose, we construct a portfolio consisting of an stock and a European call option whose underlying asset is the same stock. 
Our portfolio is non-linear since the profit and loss of an option is a non-linear function of the underlying asset's profit and loss.

We denote the prices of the stock and call option at the valuation date $t$ by $S_t$ and $C_t$ respectively. Furthermore, we let $\bm{u}_P=(u_s,u_c)$, where $u_s$ and $u_c$ represent the number of units of stocks and call options in our portfolio respectively. The price of the European call option at time $t \in[t_0,T]$ is calculated using the Black-Scholes formula, where $t_0$ is the option issue date and $T$ the option maturity.

For this numerical application, we consider the stock of Exxon Mobil Corporation (NYSE: XOM) and an option on the same stock with a maturity of 2 years ($T =  2 \times 250$ days), and strike at moneyness level $90\%$ of the stock price. The daily stock market close prices for the period 1\textsuperscript{st} January 2018 to 5\textsuperscript{th} November 2021 have been sourced from Bloomberg. The option issue date $t_0$ is 13\textsuperscript{th} November 2018. In our implementation, we compute the daily lambda quantiles of our portfolio for all days in the period $[t_0,T]$. At each valuation date $t \in [t_0,T]$, the time-to-maturity of the option is $T-t$ days. We assume that the number of asset units remains constant over this period, unless otherwise stated. In the Black-Scholes formula, we fix the annualised risk-free rate to be $2\%$ and use the annualised standard deviation of historical Exxon Mobil Corporation stock log-returns as the volatility.
The value of our portfolio at time $t$, $V_t(\bm{u}_P):U\rightarrow\mathcal{X}$, is thus given by:
\begin{equation*}
V_t(\bm{u}_P)=u_sS_t+u_cC_t,
\end{equation*}
and the profit and loss of the portfolio, $PL_t(\bm{u}_P):U\rightarrow\mathcal{X}$, from time $t-1$ to time $t$ is defined by
\begin{equation}\label{eq:pnl}
PL_t(\bm{u}_P):=V_t(\bm{u}_P)-V_{t-1}(\bm{u}_P)\,.
\end{equation}
Next, we rewrite the profit and loss in the same form as Example \ref{exmp:ops} (\ref{op:lin}). For this, we define the profit and loss of the stock and call option at time $t$ by $X_{s,t}:=S_t-S_{t-1}$ and $X_{c,t}:=C_t-C_{t-1}$, respectively, such that $\bm{X}_t:=(X_{s},X_{c})_t$, and 
\begin{equation*}
PL_t(\bm{u}_P)=g_{\bm{X}_t}(\bm{u}_P)=u_sX_{s,t}+u_cX_{c,t}.
\end{equation*}
Note that the $PL_t$ is a homogeneous function in $U$ of degree 1, thus the lambda quantile applied to the portfolio is homogeneous with degree given in Proposition \ref{prop:homo-lambda}. Partial derivatives of $PL_t$ with respect to asset units, which appear in the partial derivatives of $\rho_{\Lambda}$ in \eqref{eqn:lqrc}, are given by:
\begin{equation*}
\partial_{u_s}PL_t(\bm{u}_P)=X_{s,t}, \quad
\partial_{u_c}PL_t(\bm{u}_P)=X_{c,t}.
\end{equation*}
To estimate the probability density and the distribution function (CDF) of the portfolio profit and loss,
we use the full valuation method based on historical simulation of the underlying asset's profits and losses combined with kernel density estimation (KDE). We refer to \cite{rs07} for details on how to implement the full valuation method. Specifically, we consider 250 daily historically simulated profits and losses of the stock corresponding to an in-sample period of 250 days. Then, we estimate the daily portfolio densities and CDFs via KDE with a Gaussian kernel fitted to the 250 historically simulated portfolio profits and losses.

\subsection{Estimation of the lambda function}
\begin{figure}[t]
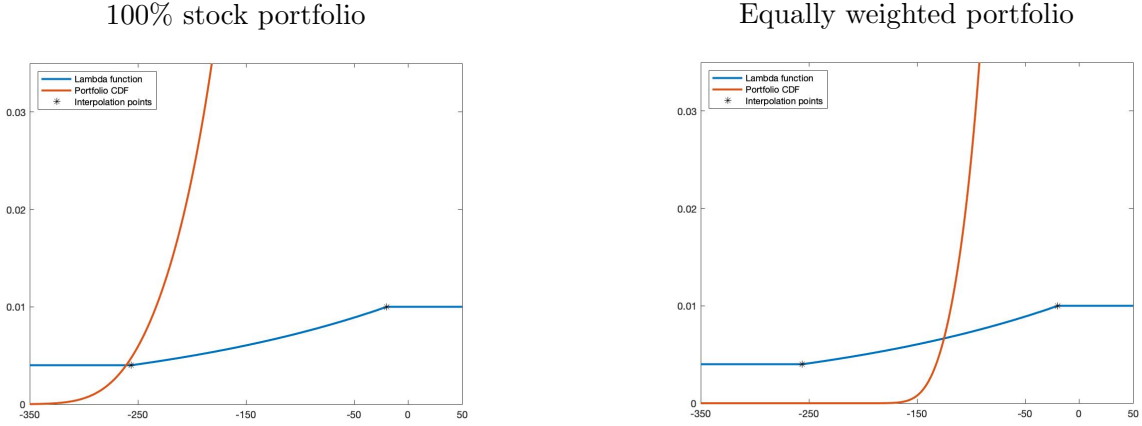

\centering
\begin{minipage}{0.45\textwidth}
\centering
100\% stock portfolio\\
\includegraphics[width=\textwidth]{XOM_LFCDF_S100_C0_420}
\end{minipage}
\hfill
\begin{minipage}{0.45\textwidth}
\centering
Equally weighted portfolio\\
\includegraphics[width=\textwidth]{XOM_LFCDF_S50_C50_420}
\end{minipage}
\caption{Lambda function (blue) and portfolio CDF (red), where the $x$-axis value of the intersection point is the negative of the lambda quantile, i.e., $-\rho_{\Lambda_t}(\bm{u}_P)$. Both panels are at time 16\textsuperscript{th} July 2020. The left panel represents the portfolio with composition $u_s=100$ and $u_c=0$ and the right panel with composition $u_s=50$ and $u_c=50$.}
\label{fig:XOM_LFCDF}
\end{figure}
A first approach in the literature to estimate the lambda function is given by \cite{hmp18} which has been generalised by \cite{cp18}. In these works, the authors suggest to calibrate the lambda function via a data-driven approach based on market benchmarks, such as stock market indexes in the case of stock portfolios. 
The authors choose the maximum of the lambda function equal to a maximum acceptable level $\lambda\in(0,1)$, so that in this case the lambda quantile becomes comparable to $VaR_\lambda$.
They further set the minimum of the lambda function to be strictly larger than the frequency of one observation, see \cite{cp18} for a detailed discussion on this requirement.
The authors then obtain the lambda function as a linear interpolation between points that are associated to left-tail order statistics of the chosen market benchmarks. This approach thus treats the lambda function as a proxy of the tail distribution of market benchmarks.
\\
In this paper, we choose a parametric approach for the calibration of the lambda function that maintains the original idea in the previous literature. Specifically, we define the lambda function at time $t$, $\Lambda_t$, to have an exponential growth in the interval $[x_{a,t}, x_{b,t}]$ and to be constant otherwise. The choice of the lambda function to be convex in $[x_{a,t}, x_{b,t}]$ is prudent, since compared to a concave interpolation, a convex interpolation always results in a larger lambda quantile for any fixed portfolio. Thus, the parametric form of the lambda function is given by:
\begin{equation}\label{appl:lf}
\Lambda_t(x)= 1_{\{ x < x_{a,t}\}} \lambda_a+1_{\{ x_{a,t} \leq x \leq x_{b,t} \}} \beta_t e^{\alpha_t x}+ 1_{\{x > x_{b,t}\}} \lambda_b,
\quad x \in \mathbb{R},
\end{equation}
where $x_{a,t} < x_{b,t}$, $0<\lambda_a < \lambda_b<1$, and the coefficients $\alpha_t,\beta_t\in\mathbb{R}$ are chosen such that the lambda function is continuous, i.e., they are given by:
\begin{align*}
\alpha_t&=\frac{\log(\lambda_a/\lambda_b)}{x_{a,t}-x_{b,t}},\\
\beta_t&=\lambda_a\exp{\biggl(-\frac{x_{a,t}}{x_{a,t}-x_{b,t}}\log{(\lambda_a/\lambda_b)}\biggl)}.
\end{align*}

Note that we choose the minimum and maximum value of the lambda function, i.e., $\lambda_a$ and $\lambda_b$, respectively, to be constant over time.  Point $x_{a,t}$ represents the threshold below which a risk manager assigns to the portfolio the $VaR$ at the lowest confidence level $\lambda_a := \min_x \{\Lambda_t(x)\} = \Lambda_t(x_{a,t})$. Similar to the previous literature, $x_{a,t}$ is chosen as the daily minimum over the market index and the stock profits and losses for the given sample window. This threshold is a technical requirement so that lambda quantiles are not always lower than $x_{a, t}$, for all $t \in[t_0, T]$. We fix $\lambda_a=  1.001\times\frac{1}{250}  = 0.4004\%$ as we require that $\lambda_a $ is larger than the probability of one historically simulated observation (in-sample period of 250 days). Point $x_{b,t} := \text{arg}\max \{\Lambda_t(x)\}$ corresponds to the threshold above which a risk manager assigns to the portfolio the $VaR$ at the highest acceptable confidence level $\lambda_b := \max_x\{\Lambda_t(x)\} = \Lambda_t(x_{b,t})$, here set to $1\%$. Therefore $x_{b,t}$ is chosen, similar to the previous literature, as an extreme left-tail statistic of a market index, here the $VaR_{1\%}$ of the selected index. Note that both $\lambda_a$ and $\lambda_b$ are kept constant for all time points $t\in[t_0,T]$.

As our portfolio consists of stock positions, we chose the S\&P500 index as the market index, where daily market close prices have been sourced from Bloomberg for the period 1\textsuperscript{st} January 2018 to 5\textsuperscript{th} November 2021. The lambda function is calibrated on a daily basis, that is $x_{a,t}$ and $x_{b,t}$ are recalculated daily using the previous 250 days. In Figure \ref{fig:XOM_LFCDF}, we display estimated lambda functions for two different dates.
Finally, at any time $t$ we compute the lambda quantile of our portfolio, here denoted $\rho_{\Lambda_t}(\bm{u}_P)$, as the negative of the smallest value on the profit and loss axis corresponding to the intersection point between the portfolio CDF and the lambda function using Matlab's \texttt{fzero} solver.

In total, we have 499 daily estimates of lambda quantiles which correspond to the 499 business days from option issuance date to $T-1 = 2 \times 250 -1$. Figure \ref{fig:XOM_LQ} shows the daily out-of-sample portfolio profits and losses and a comparison between $-\rho_{\Lambda_t}(\bm{u}_P)$ and $-VaR_{1\%,t}(\bm{u}_P)$ of the portfolio in consideration over time. As expected the lambda quantile is more conservative than $VaR_{1\%}$.

\begin{figure}[t]
\centering
\includegraphics[width=0.6\textwidth]{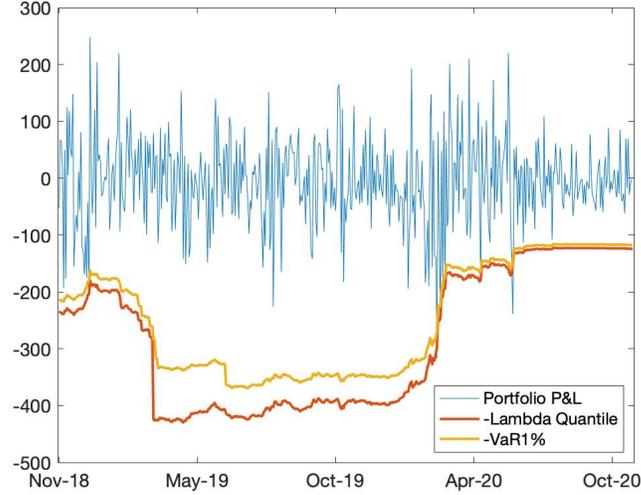}
\caption{Time series of portfolio profits and losses, negative of lambda quantile and negative of $VaR_{1\%}$ of portfolio consisting of Exxon Mobil Corporation stock and call option with $m=90\%$, $T=2$-years, $u_s=u_c=50$ and issue date of 13\textsuperscript{th} November 2018.}
\label{fig:XOM_LQ}
\end{figure}

\vspace{2em}
\subsection{Homogeneity degree}\label{appl:hd}
Following the calculation of the daily lambda quantiles of our portfolio, we proceed by computing the daily homogeneity degree and show its variability over both time and the choice of portfolio composition (note that once a portfolio composition is chosen, we keep the composition fixed from inception until option maturity). Even though we keep our portfolio composition $\bm{u}_P$ fixed, we expect our portfolio's homogeneity degree to show variation over the life of the option contract since both the portfolio CDF and lambda function are re-calibrated on a daily basis. Using Proposition \ref{prop:homo-lambda}, recall that the homogeneity degree of the portfolio is equal to $\tau=1$, we can write the lambda quantile  homogeneity degree at $t$ as:

\begin{equation*}
\eta_t(\bm{u}_P)=\frac{f_t(-\rho_{\Lambda_t}(\bm{u}_P))}{f_t(-\rho_{\Lambda_t}(\bm{u}_P))-\Lambda_t'(-\rho_{\Lambda_t}(\bm{u}_P))},
\end{equation*}
where $f_t$ is the probability density function of the portfolio profit and loss $PL_t(\bm{u}_P)$ estimated on a daily basis using KDE. From this formula, it is evident that the homogeneity degree of lambda quantiles is dictated by the portfolio density $f_t$ and the slope of the lambda function $\Lambda$, both evaluated at the point $y=-\rho_{\Lambda_t}(\bm{u}_P)$. Since the lambda function is calibrated on a market index, the homogeneity degree may change day-over-day. Furthermore, the homogeneity degree may also change if a different portfolio composition is selected. Therefore, the portfolio risk, measured using the lambda quantile, and portfolio composition may not scale linearly under all market conditions, if at all.

In this numerical study, the homogeneity degree $\eta_t$ is well-defined with $\eta_t\geq1$ since by construction of the lambda function it holds that $f(-\rho_{\Lambda_t}(\bm{u}_P))>\Lambda'(-\rho_{\Lambda_t}(\bm{u}_P))$. This inequality is equivalent to Assumption \ref{asmp:h} (\ref{asmp:grad}) for our choice of portfolio profit and loss $PL_t$. The derivative of the lambda function is easily computed from \eqref{appl:lf} and given by:
\begin{equation*}
\Lambda_t'(x)=1_{\{ x_{a,t} \leq x \leq x_{b,t} \}}\alpha_t \beta_t e^{\alpha_t x},
\quad x \in \mathbb{R}.
\end{equation*}

\begin{figure}[t]
\centering
\includegraphics[width=0.6\textwidth]{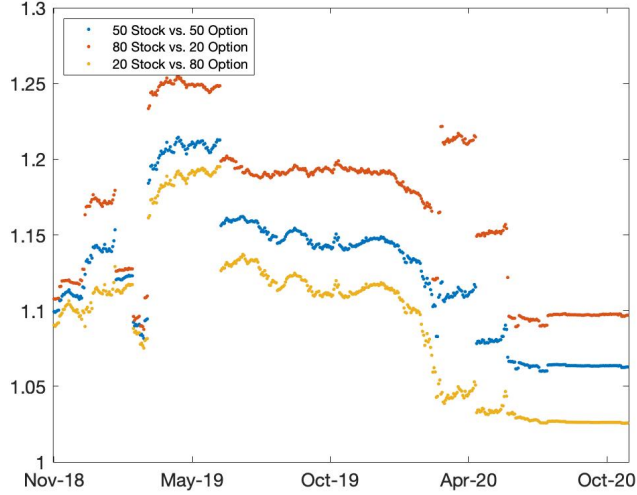}
\caption{Time series of the homogeneity degree of the lambda quantile risk measure applied on several portfolio compositions of Exxon Mobil Corporation stock and call option with $m=90\%$, $T=2$-years, and issue date of 13\textsuperscript{th} November 2018.}
\label{fig:XOM_HD}
\end{figure}

Figure \ref{fig:XOM_HD} displays the homogeneity degree of three different portfolio compositions over time. From Figure \ref{fig:XOM_HD}, we see that, in the period between 13\textsuperscript{th} November 2018 and 6\textsuperscript{th} November 2020, the homogeneity degree of all three portfolio compositions is strictly larger than 1, which occurs when $-\rho_{\Lambda_t}(\bm{u}_P)\in[x_{a,t},x_{b,t}]$. In this case, the lambda quantile of our portfolio is larger than the $VaR_{1\%}$ of the market index. The variability of the homogeneity degree of each portfolio composition is particularly visible during periods of high volatility of the Exxon Mobil Corporation stock profits and losses, as at the end of 2018 and during the initial stages of the Covid-19 pandemic. In contrast, the fluctuations of the homogeneity degree are more moderate in 2019, a period of low volatility of the Exxon Mobil Corporation stock profits and losses.

If $-\rho_{\Lambda_t}(\bm{u}_P)\in[x_{a,t},x_{b,t}]$, then the portfolio risk, measured using the lambda quantile, and portfolio composition scale with a degree larger than one, i.e. it holds that
\begin{equation}
\rho_{\Lambda_t}(k\bm{u}_P)=k^{\eta_t}\rho_{\Lambda_t}(\bm{u}_P) \quad \text{for all } k>0.
\end{equation}
Therefore, for portfolio compositions for which the lambda quantile $-\rho_{\Lambda_t}(\bm{u}_P)\in[x_{a,t},x_{b,t}]$ is closer to $x_{b,t}$, the higher is its homogeneity degree.

How the homogeneity degree is affected by different portfolio compositions is displayed in Figure \ref{fig:XOM_HD}. The equally-weighted portfolio is displayed in blue. The portfolio with 20 stocks and 80 options (displayed in yellow), has lower homogeneity degrees for the 2-years under consideration. This is in contrast to the portfolio with 80 stocks and 20 options (displayed in red) which has a larger homogeneity degree.

The homogeneity degree of the lambda quantile is equal to 1, if the portfolio CDF intersects the lambda function in its flat parts, i.e. when $-\rho_{\Lambda_t}(\bm{u}_n)\not\in[x_{a,t},x_{b,t}]$. In these cases, the lambda quantile is equal to the $VaR$ at the lowest level $\lambda_a$, if $-\rho_{\Lambda_t}(\bm{u}_n)<x_a$, or at the highest level $\lambda_b$, if $-\rho_{\Lambda_t}(\bm{u}_n)>x_b$, and the portfolio risk scales linearly with portfolio composition since:
\begin{equation}
\rho_{\Lambda_t}(k\bm{u}_P)= k\rho_{\Lambda_t}(\bm{u}_P) \quad \text{for all } k>0.
\end{equation}

We observe homogeneity degree 1 during the Covid-19 pandemic for the composition of 100 stocks and 0 calls. The reader is referred to the left plot of Figure \ref{fig:XOM_LFCDF} for an example. Here, the negative of the lambda quantile $-\rho_{\Lambda_t}(\bm{u}_n)$ is lower than $x_a$, the minimum over the market index and the stock profits and losses of the previous 250 days. This means that, the risk of the 100 stocks portfolio for this day is higher than the worst case scenario over the previous 250 days.

\subsection{Risk contributions and generalised Euler contributions}
In this section, we calculate the daily generalised Euler contributions of lambda quantile applied to the portfolio with 50 stocks and 50 options. The generalised Euler contributions quantify how much each asset in the portfolio contributes to the portfolio risk, measured by the lambda quantile risk measure. Recall that for this portfolio we have a homogeneity degree of $\tau=1$. Therefore, the generalised Euler contributions of the lambda quantile at time $t$ are given, for $j \in \{s, c\}$, by:
\begin{align}\label{condexp}
\psi_{j}^{\Lambda_t}(PL_t(\bm{1})) 
=& - \mathbb{E}[\partial_{u_j}PL_t(\bm{1})~|~PL_t(\bm{1})=-\rho_{\Lambda_t}(\bm{1})]=
- \mathbb{E}[X_{j,t}~|~PL_t(\bm{1})=-\rho_{\Lambda_t}(\bm{1})]\,.
\end{align}
We estimate the conditional expectations using the Nadaraya-Watson kernel estimator, see e.g., \cite{Wand1994book}, with a standard Gaussian kernel smoothing function. Specifically, for each time $t$, we take a sample of the historically simulated stock's profits and losses using the past $N=250$ trading days. 
At time $t$, the samples, each of size $N$, consist of the stock's, option's and portfolio's profits and losses where the $i$-th sample entry is from time $t-i$ and are given by:
\begin{align*}
x_{s,t}^{i},\quad 
x_{c,t}^{i},\quad \text{and} \quad
y_t^{i}:=u_sx_{s,t}^{i}+u_cx_{c,t}^{i},
\end{align*}
where $i=1,\dots,N$. Therefore, we estimate the generalised Euler contributions of the lambda quantile, for $j \in \{s, c\}$, by:
\begin{equation}\label{appl:nwgec}
\widehat{\psi}_{j}^{\Lambda_t}(PL_t(\bm{1}))
=- \frac{\sum_{i=1}^N x_{j, t}^i\;\phi(\frac{-\rho_{\Lambda_t}(\bm{1})-y_{t}^i}{h_t})}{\sum_{i=1}^N \phi(\frac{-\rho_{\Lambda_t}(\bm{1})-y_{t}^i}{h_t})},
\end{equation}
where $\phi$ is the standard normal density and $h_t$ is the bandwidth given by Silverman's rule, see e.g., \cite{nonpareco}:
\begin{equation*}
h_t = 1.06\times\sigma(y_t^1,\dots,y_t^N)\times N^{-1/5},
\end{equation*}
where $\sigma(y_t^1,\dots,y_t^N)$ is the standard deviation of portfolio profits and losses $y_t^1,\dots,y_t^N$ in each sample.
\begin{figure}[htb]
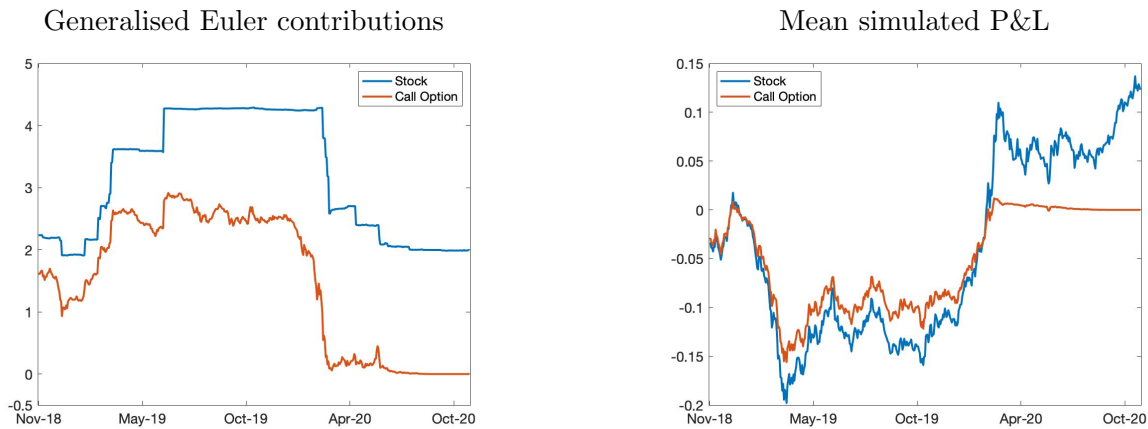

\centering
\begin{minipage}{0.45\textwidth}
\centering
Generalised Euler contributions\\
\includegraphics[width=\textwidth]{XOM_GEC}
\end{minipage}
\hfill
\begin{minipage}{0.45\textwidth}
\centering
Mean simulated P\&L\\
\includegraphics[width=\textwidth]{XOM_MEAN_PNL}
\end{minipage}
\caption{The left panel is a comparison between the lambda quantile generalised Euler contributions of Exxon Mobil Corporation stock and call option. The right panel displays a time series of mean of simulated profits and losses of Exxon Mobil Corporation stock and call option. For both panels, we have $m=90\%$, $T=2$-years, issue date of 13\textsuperscript{th} November 2018 and $u_s=u_c=1$.}
\label{fig:XOM_GEC}
\end{figure}

The left panel of Figure \ref{fig:XOM_GEC} displays the time series of the generalised Euler contributions for the stock and call option. At time $t$, the risk contribution of the $j$-th asset represents how much the lambda quantile of the portfolio increases for an infinitesimal increase in the number of units of the $j$-th asset. We observe, that the generalised Euler contributions of the stock ranges from 1.9064 to 4.2897, while that of the call option ranges from 0 to 2.9161.\vspace*{0.3em}

Note that the estimator of the generalised Euler contributions for the stock and the call option only differ by the simulated profits and losses $x_{j,t}^i$, $j\in\{s,c\}$ since the terms $\phi(\frac{-\rho_{\Lambda_t}(\bm{1})-y_{t}^i}{h_t})$ are the same for the stock and call option. Therefore, the generalised Euler contributions of the stock are larger than the call option because the stock's profit and loss are more negative than that of the call option until the option price approaches zero, see right panel of Figure \ref{fig:XOM_GEC} where we show the time series of the average simulated profits and losses of the stock and the call option. From a financial point of view, this can be interpreted as follows: if the performance of asset 1 is poorer than that of asset 2, then increasing the exposure to asset 1 increases the lambda quantile more compared to increasing the exposure to asset 2.
 
Next, we consider the effect of varying the number of units of stocks and options in the portfolio on the lambda quantile risk contributions. To do this, we fix a valuation date and calculate risk contributions of the stock and option by considering multiple portfolio compositions. Note that for each risk contribution, the portfolio composition is kept constant from option issuance until maturity. Figure \ref{fig:XOM_RC} shows the lambda quantile risk contribution as a function of the portfolio composition for three different dates. Specifically, we plot: 
\begin{equation}
\frac{\partial \rho_{\Lambda_t} }{\partial u_j} (\bm{u}_P)
=- \eta_t(\bm{u}_P)\mathbb{E}[X_{j,t}~|~PL_t(\bm{u}_P)=-\rho_{\Lambda_t}(\bm{u}_P)]\,,
\end{equation} 
for $\bm{u}_P=(u_s,u_c)=(10,90),(11,89),\dots,(90,10)$, and where the conditional expectations is estimated using the Nadaraya-Watson kernel estimator, and the homogeneity degree $\eta_t(\bm{u}_P)$ is estimated as described in Section \ref{appl:hd}.

We observe in Figure \ref{fig:XOM_RC} that as the number of stocks in our portfolio increases, the risk contribution of the stocks also increases. In contrast, as the number of call options in our portfolio increases, the risk contribution of the call options decreases. A different pattern is observed close to the option's maturity, right panel of Figure \ref{fig:XOM_RC}. In this case the risk contributions of the call options become zero independently of the portfolio composition. This is due to the fact that close to option maturity, the option is out-of-the money and its price is zero. We further observe, from the left to the right panel in Figure \ref{fig:XOM_RC}, that the values of the risk contribution of both assets reduces as we approach option maturity.

\begin{figure}[!htb]
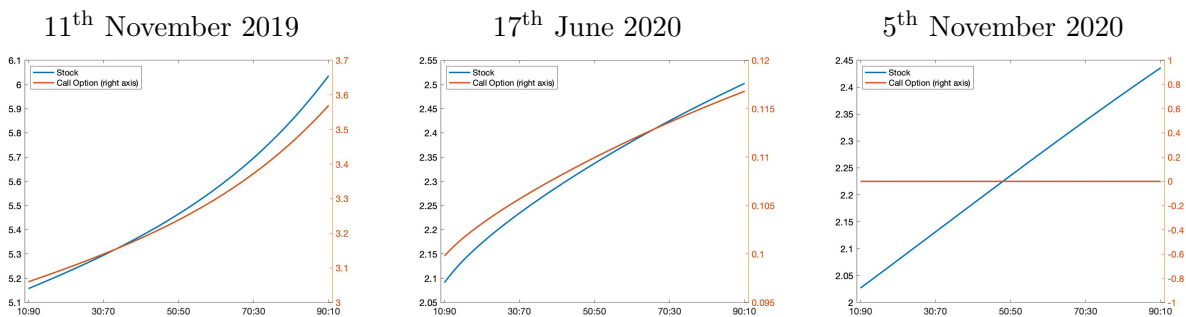
 
\centering
\begin{minipage}{0.32\textwidth}
\centering
11\textsuperscript{th} November 2019\\
\includegraphics[width=\textwidth]{XOM_RC_ALC_250}
\end{minipage}
\hfill
\begin{minipage}{0.32\textwidth}
\centering
17\textsuperscript{th} June 2020\\
\includegraphics[width=\textwidth]{XOM_RC_ALC_400}
\end{minipage}
\hfill
\begin{minipage}{0.32\textwidth}
\centering
5\textsuperscript{th} November 2020\\
\includegraphics[width=\textwidth]{XOM_RC_ALC_499}
\end{minipage}
\caption{Lambda quantile risk contributions as of 11\textsuperscript{th} November 2019 (left), 17\textsuperscript{th} June 2020 (centre) and 5\textsuperscript{th} November 2020 (right) of Exxon Mobil Corporation stock and call option (right axis) as a function of portfolio composition with $m=90\%$, $T=2$-years, and issue date of 13\textsuperscript{th} November 2018. On the $x$-axis, 10:90 represents the portfolio composition $(u_s,u_c)=(10,90)$ and 90:10 represents the portfolio composition $(u_s,u_c)=(90,10)$.}
\label{fig:XOM_RC}
\end{figure}

\section{Conclusion}

This paper presents a novel treatment of lambda quantile risk measures on subsets of $\mathbb{R}^n$. We prove that lambda quantiles are differentiable with respect to the portfolio composition, subject to smoothness conditions, and derive explicit formulae for the derivatives. These partial derivatives correspond to risk contributions of assets to the overall portfolio risk. We further provide the Euler decomposition of lambda quantiles, i.e. the property that lambda quantiles, scaled by a factor, can be written as sums of their partial derivatives scaled by the number of assets. This decomposition demonstrates that lambda quantiles are homogeneous risk measures in the space of portfolio compositions. Our results further show that the homogeneity degree of a lambda quantile is determined by the portfolio composition and the lambda function. This allows us to treat homogeneity as a dynamic property rather than constant and universal. Indeed, the lambda quantile homogeneity degree varies across portfolio risk profiles, rather than remaining constant. This contrasts the case of risk measures that have a constant homogeneity degree, such as VaR. Due to the variable nature of lambda quantiles' homogeneity degrees, we introduce a generalised Euler capital allocation rule, that is compatible with risk measures of any homogeneity degree and non-linear portfolios. We prove that the generalised Euler allocations of lambda quantiles have the full allocation property.
On a financial market portfolio, we illustrate and provide interpretation of the homogeneity degree of a non-linear portfolio and its generalised Euler allocation.

\newpage

\appendices
\section{Auxiliary definitions and results}
\label{app:results}
This appendix is a collection of results and definition relevant for the exposition of the paper.

\begin{lemma}[Theorem A.5.1 of \cite{d19}]\label{lem:d19}
Let $(S,\mathcal{S},\mu)$ be a measure space. Let $f$ be a complex valued function defined on $\mathbb{R}\times S$. Let $\delta>0$, and suppose that for $x\in(y-\delta,y+\delta)$ we have:
\begin{enumerate}[(i)]
\item $u(x)=\int_S f(x,s)\mu(ds)$ with $\int_S |f(x,s)|\mu(ds)<\infty$,
\item for fixed $s$, $\partial f/\partial x (x,s)$ exists and is a continuous function of $x$,
\item $v(x)=\int_S \frac{\partial f}{\partial x}(x,s)\mu(ds)$ is continuous at $x=y$,
\item $\int_S \int_{-\delta}^{\delta}|\frac{\partial f}{\partial x}(y+\theta,s)|d\theta\mu(ds)<\infty$,
\end{enumerate}
then $u'(y)=v(y)$.
\end{lemma}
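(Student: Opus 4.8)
The plan is to reduce the claim $u'(y)=v(y)$ to the Fundamental Theorem of Calculus combined with Fubini's theorem, where condition (iv) supplies exactly the integrability needed to interchange the order of integration and condition (iii) supplies the limit at the end. Since $f$ is complex valued, the argument is applied to its real and imaginary parts separately, so I may treat $f$ as real valued without loss of generality.

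First I would fix $h$ with $0<|h|<\delta$. By hypothesis (ii), for each fixed $s\in S$ the map $x\mapsto \partial f/\partial x(x,s)$ is continuous on $(y-\delta,y+\delta)$, so $x\mapsto f(x,s)$ is continuously differentiable there and the Fundamental Theorem of Calculus gives
\[
f(y+h,s)-f(y,s)=\int_0^h \frac{\partial f}{\partial x}(y+t,s)\,dt .
\]
Integrating both sides over $S$ against $\mu$, and using (i) to ensure that $u(y+h)$ and $u(y)$ are finite, I obtain
\[
u(y+h)-u(y)=\int_S\int_0^h \frac{\partial f}{\partial x}(y+t,s)\,dt\,\mu(ds).
\]

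Next I would justify swapping the two integrals. Condition (iv) states precisely that $(t,s)\mapsto \partial f/\partial x(y+t,s)$ is absolutely integrable on $(-\delta,\delta)\times S$ with respect to the product measure $dt\times\mu$; hence for $|h|<\delta$ Fubini's theorem applies and yields
\[
u(y+h)-u(y)=\int_0^h\left(\int_S \frac{\partial f}{\partial x}(y+t,s)\,\mu(ds)\right)dt=\int_0^h v(y+t)\,dt,
\]
the last equality being the definition of $v$ in (iii). Dividing by $h$ then expresses the difference quotient of $u$ as the average of $v$ over the interval between $y$ and $y+h$.

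Finally, I would pass to the limit $h\to 0$. Since $v$ is continuous at $y$ by (iii), the averaged quantity converges to $v(y)$; explicitly,
\[
\left|\frac{u(y+h)-u(y)}{h}-v(y)\right|=\left|\frac{1}{h}\int_0^h \big(v(y+t)-v(y)\big)\,dt\right|\le \sup_{|t|\le|h|}\big|v(y+t)-v(y)\big|\longrightarrow 0 .
\]
Therefore $u'(y)$ exists and equals $v(y)$, as claimed. I expect the only delicate point to be the Fubini interchange: one must invoke (iv) to verify absolute integrability of the integrand on the product space $(-\delta,\delta)\times S$, and one must track the orientation of the inner integral $\int_0^h$ consistently when $h<0$; both are routine once (iv) is read as the product-measure integrability hypothesis it is.
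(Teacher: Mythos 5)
This lemma is stated in the paper's appendix as an imported result (Theorem A.5.1 of the cited reference) and is given no proof there, so there is nothing in the paper to compare against line by line. Your argument is correct and is in fact the standard textbook proof of this differentiation-under-the-integral result: the Fundamental Theorem of Calculus applied pointwise in $s$ via hypothesis (ii), a Fubini--Tonelli interchange justified by the iterated integrability in (iv) (note that strictly one first invokes Tonelli on the nonnegative integrand to pass from the iterated integral in (iv) to product-measure integrability, together with joint measurability of $(t,s)\mapsto \partial f/\partial x(y+t,s)$, which follows since it is continuous in $t$ and measurable in $s$), and finally the continuity of $v$ at $y$ from (iii) to conclude that the averaged difference quotient converges to $v(y)$. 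Your handling of the sign convention for $\int_0^h$ when $h<0$ and the reduction to the real-valued case are both fine; no gaps.
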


\begin{definition}[Definition 4.2 of \cite{t99}]\label{defn:pm}
Let $U\neq\emptyset$ be a set in $\mathbb{R}^n$ and $r:U\rightarrow\mathbb{R}$ be a function defined on $U$. A vector field $\bm{a}:=(a_1,\dots,a_n):U\rightarrow\mathbb{R}^n$ is called \emph{suitable for performance measurement} with the function $r$ if the following conditions are satisfied:
\begin{enumerate}[(a)]
\item for all $\bm{m}\in\mathbb{R}^n$ and $\bm{u}\in U$ with $r(\bm{u})\neq\bm{m'u}$ and $i\in\{1,\dots,n\}$ the inequality
\begin{equation}\label{ineq1}
m_ir(\bm{u})>a_i(\bm{u})\bm{m'u}
\end{equation}
implies that there exists an $\epsilon>0$ such that for all $t\in(0,\epsilon)$ we have
\begin{equation*}
g_{r,\bm{m}}(\bm{u}-t\bm{e}_i)<g_{r,\bm{m}}(\bm{u})<g_{r,\bm{m}}(\bm{u}+t\bm{e}_i).
\end{equation*}
\item for all $\bm{m}\in\mathbb{R}^n$ and $\bm{u}\in U$ with $r(\bm{u})\neq\bm{m'u}$ and $i\in\{1,\dots,n\}$ the inequality
\begin{equation}\label{ineq2}
m_ir(\bm{u})<a_i(\bm{u})\bm{m'u}
\end{equation}
implies that there exists an $\epsilon>0$ such that for all $t\in(0,\epsilon)$ we have
\begin{equation*}
g_{r,\bm{m}}(\bm{u}-t\bm{e}_i)>g_{r,\bm{m}}(\bm{u})>g_{r,\bm{m}}(\bm{u}+t\bm{e}_i),
\end{equation*}
\end{enumerate}
where $g=g_{r,\bm{m}}:\{\bm{u}\in U|r(\bm{u})\neq\bm{m'u}\}\rightarrow\mathbb{R}$ is the \emph{profit and loss function} for $r$ for a fixed $\bm{m}\in\mathbb{R}^n$ defined by
\begin{equation*}
g_{r,\bm{m}}(\bm{u}):=\frac{\bm{m'u}}{r(\bm{u})-\bm{m'u}}\,.
\end{equation*}
\end{definition}

\begin{lemma}[Theorem 4.4 of \cite{t99}]\label{lem:tspm}
Let $\emptyset\neq U\subset\mathbb{R}^n$ be an open set and $r:U\rightarrow\mathbb{R}$ a function partially differentiable in $U$ with continuous derivatives. Also, let $\bm{a}=(a_i,\dots,a_n):U\rightarrow\mathbb{R}^n$ be a continuous vector field. Then $\bm{a}$ is suitable for performance measurement with function $r$ if, and only if:
\begin{equation*}
a_i(\bm{u})=\frac{\partial r}{\partial u_i}(\bm{u})
\end{equation*}
with $i=1,\dots,n$ and $\bm{u}\in U$.
\end{lemma}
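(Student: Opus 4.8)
The plan is to reduce both implications to a single computation, namely the sign of the partial derivative of the profit and loss function $g_{r,\bm{m}}$ in the coordinate direction $\bm{e}_i$. First I would fix $\bm{m}\in\mathbb{R}^n$ and a point $\bm{u}\in U$ with $r(\bm{u})\neq\bm{m}'\bm{u}$. Since $r$ is partially differentiable with continuous derivatives and the denominator $r(\bm{u})-\bm{m}'\bm{u}$ is continuous and nonzero on a neighbourhood of $\bm{u}$, the quotient $g_{r,\bm{m}}$ is itself continuously partially differentiable there, and the quotient rule yields
\begin{equation*}
\frac{\partial g_{r,\bm{m}}}{\partial u_i}(\bm{u})=\frac{m_i\,r(\bm{u})-\bm{m}'\bm{u}\,\frac{\partial r}{\partial u_i}(\bm{u})}{\bigl(r(\bm{u})-\bm{m}'\bm{u}\bigr)^2}\,.
\end{equation*}
The denominator is strictly positive, so the sign of $\partial g_{r,\bm{m}}/\partial u_i(\bm{u})$ agrees with that of the numerator $m_i\,r(\bm{u})-\bm{m}'\bm{u}\,\partial r/\partial u_i(\bm{u})$. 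Because $g_{r,\bm{m}}$ is $C^1$, a strictly positive value of this derivative forces, by continuity of the derivative along the segment $t\mapsto\bm{u}+t\bm{e}_i$, the strict chain $g_{r,\bm{m}}(\bm{u}-t\bm{e}_i)<g_{r,\bm{m}}(\bm{u})<g_{r,\bm{m}}(\bm{u}+t\bm{e}_i)$ for all small $t>0$, while a strictly negative value forces the reversed chain.

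For the \emph{if} direction I would substitute $a_i(\bm{u})=\partial r/\partial u_i(\bm{u})$ into the numerator above, so that it becomes precisely $m_i r(\bm{u})-a_i(\bm{u})\,\bm{m}'\bm{u}$. The hypothesis $m_i r(\bm{u})>a_i(\bm{u})\bm{m}'\bm{u}$ of Definition \ref{defn:pm}(a) then says this numerator is positive, hence $\partial g_{r,\bm{m}}/\partial u_i(\bm{u})>0$, which delivers exactly the increasing chain demanded by (a); symmetrically, the hypothesis of (b) makes the numerator negative and produces the required decreasing chain. Thus the gradient $\bm{a}=\nabla r$ is suitable for performance measurement with $r$.

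For the \emph{only if} direction I would argue by contradiction. Suppose $\bm{a}$ is suitable, yet $a_i(\bm{u}_0)\neq\partial r/\partial u_i(\bm{u}_0)$ for some index $i$ and some $\bm{u}_0\in U$; assume without loss of generality $a_i(\bm{u}_0)<\partial r/\partial u_i(\bm{u}_0)$, the reverse case being symmetric (using (b) in place of (a)). The idea is to choose $\bm{m}$ so that the \emph{hypothesis} of (a) holds at $\bm{u}_0$ while the numerator is strictly negative, so that $g_{r,\bm{m}}$ is actually strictly decreasing in direction $\bm{e}_i$ at $\bm{u}_0$, contradicting the increasing \emph{conclusion} that suitability would force. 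Concretely, for $r(\bm{u}_0)\neq0$ I would fix a target value $c:=\bm{m}'\bm{u}_0>0$ with $c\neq r(\bm{u}_0)$ and then pick the component $m_i$ inside the open interval determined by
\begin{equation*}
a_i(\bm{u}_0)\,c\;<\;m_i\,r(\bm{u}_0)\;<\;\frac{\partial r}{\partial u_i}(\bm{u}_0)\,c\,,
\end{equation*}
which is nonempty precisely because $\partial r/\partial u_i(\bm{u}_0)-a_i(\bm{u}_0)>0$ and $c>0$. The left inequality is the hypothesis of (a), while the right inequality makes the numerator $m_i r(\bm{u}_0)-c\,\partial r/\partial u_i(\bm{u}_0)$ strictly negative, yielding the contradiction and hence $a_i(\bm{u}_0)=\partial r/\partial u_i(\bm{u}_0)$.

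The step I expect to be the main obstacle is the construction of the vector $\bm{m}$, because $m_i$ appears simultaneously in the sandwich inequality and in the inner product $\bm{m}'\bm{u}_0=\sum_j m_j u_{0,j}$, so the two cannot be prescribed independently. My plan is to first solve the sandwich inequality for $m_i$ given the target $c>0$, and then exploit the remaining freedom in the components $m_j$, $j\neq i$, to realise $\bm{m}'\bm{u}_0=c$; openness of $U$ guarantees $\bm{u}_0\pm t\bm{e}_i\in U$ for small $t$ so that the monotonicity statements are meaningful. A short case analysis is needed for the degenerate configurations, in particular when $r(\bm{u}_0)=0$ (where the numerator no longer involves $a_i$, so the point cannot be tested directly) and when $\bm{u}_0$ is a coordinate multiple of $\bm{e}_i$; these remaining points are then covered by invoking the continuity of $\bm{a}$ and of $\nabla r$, which forces the equality $a_i=\partial r/\partial u_i$ already established on the open set $\{r\neq0\}$ to persist on its closure.
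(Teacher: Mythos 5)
The paper does not actually prove this lemma: it is imported verbatim from Tasche (1999, Theorem 4.4) and stated without proof in Appendix \ref{app:results}, so there is no internal argument to compare against. Your central computation is correct: by the quotient rule, $\partial_{u_i}g_{r,\bm{m}}(\bm{u})=\bigl(m_i r(\bm{u})-\bm{m}'\bm{u}\,\partial_{u_i}r(\bm{u})\bigr)/\bigl(r(\bm{u})-\bm{m}'\bm{u}\bigr)^2$, and the \emph{if} direction is complete as you describe (in fact positivity of the derivative at the single point already forces the strict chain via the difference quotient; continuity of the derivative along the segment is not needed).

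The gap is in the \emph{only if} direction, exactly at the step you flag. Your contradiction needs a \emph{testable} point, i.e.\ one at which $r(\bm{u})\neq 0$ and $c=\bm{m}'\bm{u}$ can be prescribed independently of $m_i$, which requires some $u_j\neq 0$ with $j\neq i$. The fallback ``continuity from the open set $\{r\neq 0\}$'' does not close this: the set $W=\{\bm{u}\in U: a_i(\bm{u})\neq\partial_{u_i}r(\bm{u})\}$ is open, and continuity only propagates the equality to the closure of the set of testable points, which need not cover $W$. Two concrete failure modes. (i) If $r\equiv 0$ on an open ball inside $W$, there are no testable points there and nothing to propagate; this case is still salvageable by a separate direct argument, since there $g_{r,\bm{m}}\equiv -1$ is constant (hence never strictly monotone) while the hypothesis of (a) reduces to $a_i(\bm{u})\bm{m}'\bm{u}<0$, arrangeable whenever $a_i(\bm{u})\neq 0$ and $\bm{u}\neq\bm{0}$. (ii) For $n=1$ the degeneracy is fatal, not technical: take $U=(1,2)$, $r\equiv 100$, $a_1\equiv 1$; then the hypothesis of (a) holds iff $m_1>0$, in which case $\partial_{u_1}g_{r,m_1}(u)=100\,m_1/(100-m_1u)^2>0$, and symmetrically for (b), so $a_1$ is suitable although $a_1\neq r'\equiv 0$. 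Thus the statement really requires $n\geq 2$ (which is all the paper ever uses). For $n\geq 2$ your construction does go through once you argue density explicitly: $W$ is open, the line $\mathbb{R}\bm{e}_i$ is nowhere dense, and near any point of $W$ either $r$ is not identically zero (giving a testable point) or case (i) applies; continuity of $\bm{a}$ and $\nabla r$ then extends the equality to all of $W$, a contradiction with $W\neq\emptyset$. I would spell out that density argument and the $r\equiv 0$ case rather than leaving them as ``a short case analysis.''
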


\section*{Funding}

Silvana Pesenti would like to acknowledge the support of the Natural Sciences and Engineering Research Council of Canada (NSERC) with funding reference numbers DGECR-2020-00333 and RGPIN-2020-04289.

\bibliography{ref_rclq}

\begin{thebibliography}{29}
\providecommand{\natexlab}[1]{#1}
\providecommand{\noopsort}[1]{}
\providecommand{\printfirst}[2]{#1}
\providecommand{\singleletter}[1]{#1}
\providecommand{\switchargs}[2]{#2#1}

\bibitem[\protect\citeauthoryear{Artzner {\itshape{et~al.}}}{1999}]{a99}
Artzner, P., Delbaen, F., Eber, J.M. and Heath, D., Coherent measures of risk.
  {\itshape Mathematical Finance}, 1999, \textbf{9}, 203--228.

\bibitem[\protect\citeauthoryear{Aumann and Shapley}{1974}]{as74}
Aumann, R.J. and Shapley, L.S., {\itshape Values of Non-Atomic Games}, 1974,
  Princeton University Press.

\bibitem[\protect\citeauthoryear{Balog {\itshape{et~al.}}}{2017}]{b17}
Balog, D., B{\'a}tyi, T.L., Cs{\'o}ka, P. and Pint{\'e}r, M., Properties and
  comparison of risk capital allocation methods. {\itshape European Journal of
  Operational Research}, 2017, \textbf{259}, 614--625.

\bibitem[\protect\citeauthoryear{Bellini and Peri}{2020}]{bp19}
Bellini, F. and Peri, I., An axiomatization of {$\Lambda$}-quantiles. {\itshape
  Available at arXiv:2109.02360}, 2020.

\bibitem[\protect\citeauthoryear{Burzoni {\itshape{et~al.}}}{2017}]{bpr17}
Burzoni, M., Peri, I. and Ruffo, C.M., On the properties of the Lambda value at
  risk: robustness, elicitability and consistency. {\itshape Quantitative
  Finance}, 2017, \textbf{17}, 1735--1743.

\bibitem[\protect\citeauthoryear{Centrone and Gianin}{2018}]{c18}
Centrone, F. and Gianin, E.R., Capital allocation {\`a} la Aumann--Shapley for
  non-differentiable risk measures. {\itshape European Journal of Operational
  Research}, 2018, \textbf{267}, 667--675.

\bibitem[\protect\citeauthoryear{Corbetta and Peri}{2018}]{cp18}
Corbetta, J. and Peri, I., Backtesting Lambda value at risk. {\itshape The
  European Journal of Finance}, 2018, \textbf{24}, 1075--1087.

\bibitem[\protect\citeauthoryear{Denault}{2001}]{d01}
Denault, M., Coherent allocation of risk capital. {\itshape Journal of Risk},
  2001, \textbf{4}, 1--34.

\bibitem[\protect\citeauthoryear{Durrett}{2019}]{d19}
Durrett, R., {\itshape Probability: Theory and Examples},  Vol. 49, , 2019,
  Cambridge university press.

\bibitem[\protect\citeauthoryear{F{\"o}llmer and Schied}{2002}]{fs02}
F{\"o}llmer, H. and Schied, A., Convex measures of risk and trading
  constraints. {\itshape Finance and Stochastics}, 2002, \textbf{6}, 429--447.

\bibitem[\protect\citeauthoryear{Frittelli {\itshape{et~al.}}}{2014}]{fmp14}
Frittelli, M., Maggis, M. and Peri, I., Risk measures on
  {$\mathcal{P}(\mathbb{R})$} and value at risk with probability/loss function.
  {\itshape Mathematical Finance}, 2014, \textbf{24}, 442--463.

\bibitem[\protect\citeauthoryear{Gourieroux {\itshape{et~al.}}}{2000}]{gls00}
Gourieroux, C., Laurent, J.P. and Scaillet, O., Sensitivity analysis of values
  at risk. {\itshape Journal of Empirical Finance}, 2000, \textbf{7}, 225--245.

\bibitem[\protect\citeauthoryear{Hallerbach}{2003}]{h03}
Hallerbach, W.G., Decomposing portfolio value-at-risk: A general analysis.
  {\itshape Journal of Risk}, 2003, \textbf{5}, 1--18.

\bibitem[\protect\citeauthoryear{Hitaj {\itshape{et~al.}}}{2018}]{hmp18}
Hitaj, A., Mateus, C. and Peri, I., Lambda value at risk and regulatory
  capital: a dynamic approach to tail risk. {\itshape Risks}, 2018, \textbf{6},
  17.

\bibitem[\protect\citeauthoryear{Hong}{2009}]{h09}
Hong, L.J., Estimating quantile sensitivities. {\itshape Operations Research},
  2009, \textbf{57}, 118--130.

\bibitem[\protect\citeauthoryear{Kalkbrener}{2005}]{kalkbrener2005MF}
Kalkbrener, M., An axiomatic approach to capital allocation. {\itshape
  Mathematical Finance}, 2005, \textbf{15}, 425--437.

\bibitem[\protect\citeauthoryear{Major}{2018}]{m18}
Major, J.A., Distortion measures and homogeneous financial derivatives.
  {\itshape Insurance: Mathematics and Economics}, 2018, \textbf{79}, 82--91.

\bibitem[\protect\citeauthoryear{Pagan and Ullah}{1999}]{nonpareco}
Pagan, A. and Ullah, A., {\itshape Nonparametric econometrics}, 1999, Cambridge
  University Press.

\bibitem[\protect\citeauthoryear{Patrik {\itshape{et~al.}}}{1999}]{p99}
Patrik, G., Bernegger, S. and R{\"u}egg, M.B., The use of risk adjusted capital
  to support business decision-making. In {\itshape Proceedings of the
  }{\itshape Casualty Actuarial Society Forum, Spring}, Vol. ~99, pp. 243--334,
  1999.

\bibitem[\protect\citeauthoryear{Pesenti
  {\itshape{et~al.}}}{2021}]{Pesenti2021RA}
Pesenti, S.M., Millossovich, P. and Tsanakas, A., Cascade Sensitivity Measures.
  {\itshape (forthcoming) Risk Analysis}, 2021.

\bibitem[\protect\citeauthoryear{Pesenti {\itshape{et~al.}}}{2018}]{ptm18}
Pesenti, S.M., Tsanakas, A. and Millossovich, P., Euler allocations in the
  presence of non-linear reinsurance: comment on Major (2018). {\itshape
  Insurance: Mathematics and Economics}, 2018, \textbf{83}, 29--31.

\bibitem[\protect\citeauthoryear{Saporito and Targino}{2020}]{s20}
Saporito, Y. and Targino, R., Avoiding zero probability events when computing
  Value at Risk contributions: a Malliavin calculus approach. {\itshape
  Available at SSRN 3588083}, 2020.

\bibitem[\protect\citeauthoryear{Sironi and Resti}{2007}]{rs07}
Sironi, A. and Resti, A., {\itshape Risk management and shareholders' value in
  banking: from risk measurement models to capital allocation policies},  Vol.
  417, , 2007, John Wiley \& Sons.

\bibitem[\protect\citeauthoryear{Tasche}{1999}]{t99}
Tasche, D., Risk contributions and performance measurement. {\itshape Report of
  the Lehrstuhl f{\"u}r mathematische Statistik, TU M{\"u}nchen}, 1999.

\bibitem[\protect\citeauthoryear{Tasche}{2001}]{t01}
Tasche, D., Conditional expectation as quantile derivative. {\itshape arXiv
  preprint math/0104190}, 2001.

\bibitem[\protect\citeauthoryear{Tasche}{2007}]{t07}
Tasche, D., Euler allocation: Theory and practice. Technical report, Citeseer,
  2007.

\bibitem[\protect\citeauthoryear{Tsanakas}{2009}]{ts09}
Tsanakas, A., To split or not to split: Capital allocation with convex risk
  measures. {\itshape Insurance: Mathematics and Economics}, 2009, \textbf{44},
  268--277.

\bibitem[\protect\citeauthoryear{Tsanakas and Millossovich}{2016}]{tm16}
Tsanakas, A. and Millossovich, P., Sensitivity analysis using risk measures.
  {\itshape Risk Analysis}, 2016, \textbf{36}, 30--48.

\bibitem[\protect\citeauthoryear{Wand and Jones}{1994}]{Wand1994book}
Wand, M.P. and Jones, M.C., {\itshape Kernel smoothing}, 1994, CRC press.

\end{thebibliography}

\end{document}